\documentclass[runningheads]{llncs}

\pdfoutput=1

\usepackage{cite}
\usepackage{amsmath,amssymb,amsfonts}
\usepackage{mathdots}
\usepackage{algorithmicx}
\usepackage{graphicx}
\usepackage{subcaption}
\usepackage{textcomp}
\usepackage[dvipsnames]{xcolor}
\usepackage{orcidlink}
\usepackage{hyperref}
\usepackage{todonotes}
\usepackage{pgfplots}
\usepackage{pgf}
\usepackage{graphicx}
\usepackage{svg}
\usepackage{multirow}
\usepackage{bbm}
\usepackage{tikz}
\usetikzlibrary{backgrounds}
\usetikzlibrary{quantikz2}
\usetikzlibrary{shapes.geometric}

\usepackage[english]{babel}

\usepackage{algorithm2e}
\RestyleAlgo{ruled}
\usepackage[noend]{algpseudocode}
\usepackage{dsfont}
\usepackage{listings}

\usepackage[capitalise]{cleveref}
\usepackage{booktabs}
\usepackage{diagbox}

\usepackage{tabularx}
\usepackage{array}

\usepackage{chngcntr}
\usepackage{booktabs}

\usepackage[T1]{fontenc}

\usepackage{graphicx}

\pgfplotsset{compat=1.18}
\begin{document}
\title{GPU-Accelerated Host-Aware Dead-Measurement Detection in Hybrid Quantum--Classical Programs: Full Version}

\titlerunning{Host-Aware Dead-Measurement Detection}

\author{
Yanbin~Chen\inst{1}\orcidID{0000-0002-1123-1432} \and
Qunyou~Liu\inst{2}\orcidID{0000-0002-7410-502X} \and
Yu~Wang\inst{1}\orcidID{0009-0004-3972-4388} \and
Christian~B.~Mendl\inst{1}\orcidID{0000-0002-6386-0230} \and
Helmut~Seidl\inst{1}\orcidID{0000-0002-2135-1593}
}

\authorrunning{Y. Chen et al.}

\institute{
TUM School of CIT,
Technical University of Munich,\\
Boltzmannstr.~3,
85748 Garching, Germany\\
\email{\{yanbin.chen,18yu.wang,christian.mendl,helmut.seidl\}@tum.de}
\and
School of Engineering,
\'{E}cole Polytechnique F\'{e}d\'{e}rale de Lausanne (EPFL),\\
CH-1015 Lausanne, Switzerland\\
\email{qunyou.liu@epfl.ch}
}

\maketitle             
\begin{abstract}

Hybrid programs combine a quantum circuit with a classical host program that consumes measurement outcomes. In such programs, an outcome may be syntactically read by the host but
semantically non-contributory: changing the outcome cannot change the returned value. Such outcomes obscure gates that are dead only relative to the host semantics, and are therefore invisible to circuit-local optimizers.

We present a semantics-aware host-side static analysis that identifies non-contributory measurement outcomes by abstract interpretation, and prove its soundness. We implement the analysis and evaluate it on $24$ application-faithful hybrid workloads across quantum chemistry, optimization, quantum machine learning, and quantum finance.
Compared with a syntactic liveness baseline, our analysis identifies more than $4\times$ as many non-contributory measurements, and it standalone enables the removal of $37.98\%$ of total gates on average. Even after the state-of-the-art optimizers like Qiskit, t|ket$\rangle$, and PyZX have already optimized the circuits, our analysis still enables removal of more than $30\%$ of the post-optimized gates, showing that the host-semantic opportunities exposed by our analysis are not subsumed by circuit-local optimization.
To scale our analysis, we further lower host programs to an SSA-style levelized intermediate representation that exposes level-wise parallelism for GPU execution, and implement a CUDA backend. We prove that this lowering preserves the analysis result, and the evaluation shows speedups of up to $6.53\times$ over a sequential baseline as structural parallelism increases.

\end{abstract}

\SetArgSty{textnormal}

\section{Introduction}\label{sec:intro}
Hybrid quantum--classical programs execute a quantum circuit whose measurement outcomes are consumed by a classical host program to compute the final result \cite{Preskill_2018, 8638598, bouland2020prospectschallengesquantumfinance, 10313603, 10313887, quetschlich2024equivalencecheckingclassicalcircuits, jojo2024quantumalgorithmstensorsvd}.
Although every measured qubit is typically read and its value flows through the host code, not all of those values truly influence the final return of the program.
Contributions of these values could be semantically neutralized. 
As exemplified by \cref{ex:motivating-example-classical-var-dead}, a value may cancel algebraically with others, vanish under an $\texttt{int}(\cdot)$ truncation, or appear inside a sub-expression whose effect is later overwritten.

\begin{example}\label{ex:motivating-example-classical-var-dead}
    In \cref{fig:motivating-example-hybrid-program} the quantum circuit produces outcomes $o_0,o_1,o_2$ which the host binds to $\texttt{a,b,c}$. Although $\texttt{a}$ is read multiple times and even guards a branch, the returned value is independent of the initial value feeding $a$. If the then-branch is taken, then $\texttt{u+v+w}=(\texttt{a+b})(\texttt{c-a})+\texttt{a}(\texttt{b-c})+\texttt{a}^2 = \texttt{bc}$, where all $\texttt{a}$–terms cancel. If the else-branch is taken, with $\eta\!\in\!(0.1,0.5)$ and $\texttt{a,b,c}\!\in\!\{0,1\}$, we have $\texttt{int}(\eta\,\texttt{a+bc})=\texttt{bc}$, and $\texttt{u+v+w}=\texttt{bc}$. Thus, despite being syntactically used, the initial value originating from $o_0$ is non‑contributory, i.e., changing $o_0$ does not influence the semantics of the program.

\end{example}

\begin{remark}
    In \cref{ex:motivating-example-classical-var-dead},
    if the else-branch is taken, then semantically \(\texttt{a}=0\) and we get $\texttt{int}(\eta\,\texttt{a+bc})=\texttt{int}(\texttt{bc})=\texttt{bc}$ without even accessing the range of $\eta$. However, this would need a mechanism of propagating the branch-local constraint into branches, which we have not implemented within our analysis.
\end{remark}

The standard liveness analysis \cite{kildall1973unified, alfred2007compilers} is not enough to detect the semantic deadness of $\texttt{a}$ in \cref{fig:motivating-example-hybrid-program}, as the liveness analysis is syntactic-based: $\texttt{a}$ appears in the branch condition \texttt{if(a)}, in the right-hand side of assignments in both branches, all of which syntactically contribute to the returned variables \texttt{u}, \texttt{v}, \texttt{w}, so \texttt{a} must be classified as potentially live.

\begin{figure}[htb]
    \centering
\begin{tikzpicture}[>=latex,node distance=1.5em]
  ... 

 \node[](q2)
 {
    \colorbox{blue!15}{
        \begin{minipage}{0.4\textwidth}
\begin{quantikz}[row sep=0.2cm, column sep=0.5cm]
        &\targ{}\gategroup[2,steps=1,style={draw=red, dashed, rounded corners, inner sep=0.00cm}]{}   &\gate{V_1}\gategroup[3,steps=1,style={draw=red, dashed, rounded corners, inner sep=0.00cm}, label style={text=red}]{Redundant}&\gate{V_2}\gategroup[3,steps=1,style={draw=red, dashed, rounded corners, inner sep=0.00cm}]{}&\meter{}\rstick{$o_0 \in \{0, 1\}$}\\
      &\ctrl{-1} &\ctrl{-1} &          &\meter{}\rstick{$o_1 \in \{0, 1\}$} \\
      &\gate{W_1}&\ctrl{-2} &\ctrl{-2} &\meter{}\rstick{$o_2 \in \{0, 1\}$}
\end{quantikz}
      \end{minipage}
    }
 }; 
 \node[below= 0.25cm of q2](c3)
{
    \colorbox{pink}{
        \begin{minipage}{0.45\textwidth}
            \SetAlgoLined
            \SetNlSty{textbf}{}{:}
            \begin{algorithmic}
            
              \State \textbf{if}(\texttt{a}):
              \State $\quad \texttt{x, y} \gets \texttt{a + b, c - a}$
              \State $\quad \texttt{u,v,w} \gets \texttt{x} \cdot \texttt{y}$, $\texttt{a} \cdot (\texttt{b - c})$, $\texttt{a}^2$

              \State \textbf{else}:
              \State $\quad\eta \gets \textbf{random}(0.1, 0.5)$
              \State $\quad \texttt{u,v,w}\gets \textbf{int}(\eta \texttt{a+bc})$, $\texttt{b-c, c-b}$
              \State \Return $\texttt{u + v + w}$
              
            \end{algorithmic}
      \end{minipage}
    }
};

\draw[red, dashed, rounded corners, very thick]
       (0.95, 1.0) rectangle (2.7, 0.1);
\node[red] at (1.7,1.3) {$\textbf{non-contributory}$};
    
 \path[->]
(q2) edge [] node [right] {$\texttt{a, b, c} \gets o_0, o_1, o_2$} (c3);
  \begin{scope}[on background layer]
    \node[
      fit=(q2) (c3),
      inner sep=1pt,
      rounded corners=3pt,
      fill=yellow!15
    ] {};
  \end{scope}
\end{tikzpicture}
    \caption{A hybrid program where the host variable $\texttt a$, bound to measurement outcome $o_0$, is syntactically used but semantically non-contributory: the program’s return is invariant under changes to $o_0$. By non-contribution of $o_0$, the quantum gates enclosed by red dashed boxes are identified as redundancies: their removal leaves the host program’s returned result unchanged.}
    \label{fig:motivating-example-hybrid-program}
\end{figure}

From the hardware perspective, measuring qubits whose outcomes are later neutralized increases readout traffic \cite{khammassi2022scalable, van2019electronic} and adds readout error \cite{maciejewski2020mitigation, nachman2020unfolding} without any benefit to the final result. Performing gates that only influence those qubits is likewise unnecessary: they deepen the circuit, extend the required coherence window, and accumulate error across extra operations, which lowers overall fidelity despite contributing nothing to the result of calculation. For example, in the hybrid program \cref{fig:motivating-example-hybrid-program}, given $o_0$ (feeding $\texttt{a}$) being non-contributory, the gates highlighted by the red dashed boxes are therefore dead and can be removed without changing the distribution of the contributory outcomes, i.e., $o_1$ and $o_2$, or the host program’s returned result.

As quantum stacks mature, cross‑disciplinary teams prototype quickly across library abstractions and mixed toolchains \cite{broughton2021tensorflowquantumsoftwareframework, bergholm2022pennylaneautomaticdifferentiationhybrid, Steiger2018projectqopensource, LaRose2022mitiqsoftware}, and small but regular host‑side edits that obscure data flow could turn a seemingly used measurement into a non‑contributory one, of which effects are brittle under iteration and hard to catch in review, but the cost of leaving them in is paid on every shot of execution.
Therefore, identifying as a compilation step semantically non-contributory measurement outcomes, and removing redundant operations accordingly, is crucial to improving quantum circuit efficiency and reliability.

\noindent\textbf{Prior works.}
Existing methods address this problem only in part.
Dead gate elimination (DGE) provides a rigorous circuit-level framework: if the action of a gate does not affect the probability distribution of contributory measurement outcomes, it can be removed without altering observable behaviour \cite{10.1007/978-3-031-97632-2_10}, but DGE assumes that the set of non-contributory outcomes is already known and cannot infer which ones semantically contribute to the host’s return value.
In general, deciding semantic liveness at a program point---whether there exists an execution whose observable behaviour depends on the current value of a variable---is undecidable \cite{Péter_1954,spa}, so classical liveness analyses are conservative static approximations computed via backward data-flow or abstract interpretation \cite{bookFlemming1999,seidl2012compiler}.
Remme et\,al.\ optimize hybrid Quil programs using standard compiler analyses, but their liveness is a purely syntactic is-read notion and cannot establish result-invariance of measurement outcomes \cite{remme2025optimizationhybridquantumclassicalalgorithms}.
Static analyzers such as LintQ and QChecker provide AST-driven checks for suspicious Qiskit patterns, yet focus on syntactic use rather than host-semantic non-contribution and do not drive circuit pruning \cite{zhao2023qcheckerdetectingbugsquantum,Paltenghi_2024}.
Partial equivalence checking can certify that a simplified circuit preserves a user-chosen subset of outputs, but does not identify which outputs are irrelevant nor perform simplification automatically \cite{Chen_2022_Partial_Eq_Check}; similarly, \emph{QuTracer} prunes gates that do not affect a user-specified subset of measured qubits, but cannot discover that subset in a hybrid setting \cite{Li_2024}.
Extending code property graphs to quantum code enables cross-domain analysis and can detect measurements whose results are never used \cite{Kaul_2023}, but it cannot prove that syntactically used results are nonetheless semantically non-contributory.
Quantum constant propagation exploits contextual host information to fold gates and simplify unitary blocks \cite{chen_QCP_2023}; however, information flows only from host to circuit and does not track how measurement results are subsequently consumed, so it cannot prove semantic non-contribution of syntactically used outcomes.
Finally, while it is well known that the outcome of a measured qubit depends only on operations in its causal light cone \cite{Abedi2023quantumlazytraining}, there is no systematic approach that exploits this insight for hybrid program optimization when non-contributory qubits are obscured by complex classical post-processing.

\begin{figure}[htb]
\definecolor{qcfill}{RGB}{221,229,248}
\definecolor{qcstroke}{RGB}{40,95,220}
\definecolor{hostfill}{RGB}{247,224,230}
\definecolor{hoststroke}{RGB}{227,120,148}
\definecolor{analysisfill}{RGB}{231,244,234}
\definecolor{analysisstroke}{RGB}{116,193,137}
\definecolor{pillgreen}{RGB}{182,231,194}
\definecolor{pillyellow}{RGB}{246,214,124}
\definecolor{pillorange}{RGB}{244,181,89}
\definecolor{lavfill}{RGB}{239,232,249}
\definecolor{lavstroke}{RGB}{145,98,226}
\definecolor{textgreen}{RGB}{20,120,55}
\definecolor{softgray}{RGB}{242,242,242}

\tikzset{
  >={Latex[length=2.6mm,width=1.8mm]},
  every node/.style={font=\sffamily},
  mainbox/.style={rounded corners=8pt, line width=0.9pt},
  pill/.style={rounded corners=10pt, draw=none, font=\bfseries\small, inner xsep=7pt, inner ysep=4pt},
  proc/.style={rounded corners=4pt, draw=green!45!black, fill=green!10, minimum width=1.8cm, minimum height=0.66cm, align=center, font=\bfseries\small},
  procsmall/.style={rounded corners=3pt, draw=green!45!black, fill=green!10, minimum width=0.58cm, minimum height=0.42cm, align=center, font=\scriptsize\bfseries},
  flow/.style={->, line width=1.0pt},
  note/.style={font=\scriptsize, align=center},
  tinylabel/.style={font=\scriptsize\bfseries},
}

\newcommand{\cfgicon}[2][1.0]{%
\begin{scope}[shift={#2}, scale=#1, line width=0.55pt]
  \node[
    draw=lavstroke,
    fill=lavfill,
    rounded corners=2pt,
    minimum width=0.34cm,
    minimum height=0.16cm
  ] (t) at (0,0.56) {};

  \node[
    draw=qcstroke!80!black,
    fill=white,
    rounded corners=2pt,
    minimum width=0.26cm,
    minimum height=0.14cm
  ] (m) at (0,0.22) {};

  \node[
    diamond,
    draw=green!55!black,
    fill=green!10,
    aspect=1.45,
    minimum size=0.30cm
  ] (d) at (0,-0.16) {};

  \node[
    draw=qcstroke!80!black,
    fill=white,
    rounded corners=2pt,
    minimum width=0.24cm,
    minimum height=0.13cm
  ] (l) at (-0.30,-0.58) {};

  \node[
    draw=qcstroke!80!black,
    fill=white,
    rounded corners=2pt,
    minimum width=0.24cm,
    minimum height=0.13cm
  ] (r) at (0.30,-0.58) {};

  \node[
    draw=qcstroke!80!black,
    fill=white,
    circle,
    minimum size=0.12cm,
    inner sep=0pt
  ] (b) at (0,-0.96) {};

  \draw (t)--(m) (m)--(d);
  \draw (d)--(l) (d)--(r);
  \draw (l)--(b) (r)--(b);
\end{scope}}

\newcommand{\chipicon}[2][1.0]{%
\begin{scope}[shift={#2}, scale=#1, line width=0.6pt]
  \draw[rounded corners=2pt] (-0.18,-0.18) rectangle (0.18,0.18);
  \foreach \x in {-0.12,0,0.12} {
    \draw (\x,0.18)--(\x,0.28);
    \draw (\x,-0.18)--(\x,-0.28);
    \draw (0.18,\x)--(0.28,\x);
    \draw (-0.18,\x)--(-0.28,\x);
  }
  \draw[rounded corners=1pt] (-0.08,-0.08) rectangle (0.08,0.08);
  \draw (-0.03,0.03)--(0.05,0.05);
  \draw (-0.03,0.03)--(0.03,-0.03);
\end{scope}}

\newcommand{\cogicon}[2][1.0]{%
\begin{scope}[shift={#2}, scale=#1, line width=0.7pt, draw=gray!75!black, fill=gray!20]
  \foreach \a in {0,45,...,315} {
    \begin{scope}[rotate=\a]
      \draw[fill=gray!20] (0.22,-0.04) rectangle (0.32,0.04);
    \end{scope}
  }
  \draw[fill=gray!25] (0,0) circle (0.22);
  \draw[fill=white] (0,0) circle (0.08);
\end{scope}}

\newcommand{\measicon}[2][1.0]{%
\begin{scope}[shift={#2}, scale=#1, line width=0.55pt]
  \draw[rounded corners=1pt, fill=white] (-0.18,-0.18) rectangle (0.18,0.18);
  \draw (-0.08,0.02) arc[start angle=180,end angle=0,radius=0.08];
  \draw (0,0.02)--(0.08,0.10);
  \fill (0,0.02) circle (0.012);
\end{scope}}

\newcommand{\analysisicon}[2][1.0]{%
\begin{scope}[shift={#2}, scale=#1, line width=0.6pt]
  \draw[black] (0,0) circle (0.16);
  \draw[black] (0.11,-0.11) -- (0.24,-0.24);

  \node[
    draw=green!55!black, fill=green!10,
    diamond, aspect=1.2, minimum size=0.11cm, inner sep=0pt
  ] at (-0.05,0.03) {};
  \node[
    draw=qcstroke!80!black, fill=white,
    rounded corners=1.2pt,
    minimum width=0.10cm, minimum height=0.06cm, inner sep=0pt
  ] at (0.07,0.06) {};
  \node[
    draw=qcstroke!80!black, fill=white,
    circle, minimum size=0.07cm, inner sep=0pt
  ] at (0.03,-0.06) {};

  \draw[-{Latex[length=1.2mm]}, black] (-0.01,0.01) -- (0.04,0.04);
  \draw[-{Latex[length=1.2mm]}, black] (-0.02,-0.01) -- (0.01,-0.05);
\end{scope}}

\newcommand{\cpuicon}[2][1.0]{%
\begin{scope}[shift={#2}, scale=#1, line width=0.6pt]
  \draw[black, rounded corners=2pt, fill=white] (-0.22,-0.22) rectangle (0.22,0.22);

  \foreach \y in {-0.12,0,0.12}{
    \draw[black] (-0.28,\y) -- (-0.22,\y);
    \draw[black] (0.22,\y) -- (0.28,\y);
  }
  \foreach \x in {-0.12,0,0.12}{
    \draw[black] (\x,0.22) -- (\x,0.28);
    \draw[black] (\x,-0.22) -- (\x,-0.28);
  }

  \draw[black, fill=qcstroke!10] (-0.10,-0.10) rectangle (0.10,0.10);

  \node[font=\tiny\bfseries] at (-0.6,0) {CPU};
\end{scope}}

\newcommand{\gpuicon}[2][1.0]{%
\begin{scope}[shift={#2}, scale=#1, line width=0.6pt]
  \draw[black, rounded corners=2pt, fill=white] (-0.28,-0.22) rectangle (0.28,0.22);

  \foreach \y in {-0.12,0,0.12}{
    \draw[black] (-0.36,\y) -- (-0.28,\y);
    \draw[black] (0.28,\y) -- (0.36,\y);
  }
  \foreach \x in {-0.18,-0.06,0.06,0.18}{
    \draw[black] (\x,0.22) -- (\x,0.30);
    \draw[black] (\x,-0.22) -- (\x,-0.30);
  }

  \foreach \x in {-0.12,0,0.12}{
    \foreach \y in {-0.07,0.07}{
      \draw[black, fill=green!10] (\x-0.04,\y-0.04) rectangle (\x+0.04,\y+0.04);
    }
  }

  \node[font=\tiny\bfseries] at (-0.6,0) {GPU};
\end{scope}}

\newcommand{\ssaicon}[2][1.0]{%
\begin{scope}[shift={#2}, scale=#1, line width=0.55pt]

  \draw[
    dashed, dash pattern=on 3pt off 2pt,
    draw=lavstroke, fill=lavfill!45,
    rounded corners=5pt
  ] (-0.95,-0.78) rectangle (0.95,0.72);

  \node[font=\scriptsize\bfseries, text=lavstroke] at (0,0.6) {levelized SSA};

  \draw[dashed, dash pattern=on 2pt off 1.5pt, draw=lavstroke!80]
    (-0.78,0.18) -- (0.78,0.18);
  \draw[dashed, dash pattern=on 2pt off 1.5pt, draw=lavstroke!80]
    (-0.78,-0.18) -- (0.78,-0.18);

  \node[font=\tiny\bfseries, text=lavstroke] at (-0.68,0.35) {$L_0$};
  \node[font=\tiny\bfseries, text=lavstroke] at (-0.68,0.00) {$L_1$};
  \node[font=\tiny\bfseries, text=lavstroke] at (-0.68,-0.35) {$L_2$};

  \node[
    draw=green!55!black, fill=green!10, rounded corners=2pt,
    minimum width=0.28cm, minimum height=0.17cm, inner sep=1pt
  ] at (-0.10,0.35) {$s_1$};

  \node[
    draw=green!55!black, fill=green!10, rounded corners=2pt,
    minimum width=0.28cm, minimum height=0.17cm, inner sep=1pt
  ] at (0.28,0.35) {$s_2$};

  \node[
    draw=green!55!black, fill=green!10, rounded corners=2pt,
    minimum width=0.28cm, minimum height=0.17cm, inner sep=1pt
  ] at (0.62,0.35) {$s_3$};

  \node[
    draw=green!55!black, fill=green!10, rounded corners=2pt,
    minimum width=0.28cm, minimum height=0.17cm, inner sep=1pt
  ] at (0.05,0.00) {$s_4$};

  \node[
    draw=green!55!black, fill=green!10, rounded corners=2pt,
    minimum width=0.28cm, minimum height=0.17cm, inner sep=1pt
  ] at (0.43,0.00) {$s_5$};

  \node[
    draw=green!55!black, fill=green!10, rounded corners=2pt,
    minimum width=0.28cm, minimum height=0.17cm, inner sep=1pt
  ] at (0.22,-0.35) {$s_6$};

  \node[font=\tiny\bfseries, text=lavstroke] at (0,-0.62) {parallel levels};

\end{scope}}

\newcommand{\qcicon}[2][1.0]{%
\begin{scope}[shift={#2}, scale=#1, line width=0.75pt]
  \foreach[count=\i from 0] \y in {0,-0.36,-0.72,-1.08} {
    \draw (0,\y) -- (2.80,\y);
    \node[left, font=\scriptsize\itshape] at (-0.10,\y) {$q_{\i}$};
  }

  \draw[fill=white] (0.36,-0.89) rectangle (0.76,0.23);
  \draw[fill=white] (1.10,-0.17) rectangle (1.46,0.18);
  \draw[fill=white] (1.73,-0.91) rectangle (2.18,-0.12);
  \draw[fill=white] (1.10,-1.23) rectangle (1.46,-0.88);

  \foreach \y in {0,-0.36,-0.72,-1.08} {
    \measicon[0.78]{(2.81,\y)}
  }
\end{scope}}

\newcommand{\redxmark}[4][0.12]{%
\draw[red, line width=#4] (#2-#1,#3-#1) -- (#2+#1,#3+#1);
\draw[red, line width=#4] (#2-#1,#3+#1) -- (#2+#1,#3-#1);
}

\newcommand{\qcopticon}[2][1.0]{%
\begin{scope}[shift={#2}, scale=#1, line width=0.75pt]
  \foreach[count=\i from 0] \y in {0,-0.36,-0.72,-1.08} {
    \draw (0,\y) -- (2.80,\y);
    \node[left, font=\scriptsize\itshape] at (0.1,\y) {$q_{\i}$};
  }

  \draw[fill=white] (0.36,-0.89) rectangle (0.76,0.23);

  \draw[fill=white] (1.10,-0.17) rectangle (1.46,0.18);
  \redxmark[0.13]{1.28}{0.005}{0.85pt}

  \draw[fill=white] (1.73,-0.91) rectangle (2.18,-0.12);

  \draw[fill=white] (1.10,-1.23) rectangle (1.46,-0.88);
  \redxmark[0.13]{1.28}{-1.04}{0.85pt}

  \foreach \y in {0,-0.36,-0.72,-1.08} {
    \measicon[0.78]{(2.81,\y)}
  }

  \redxmark[0.14]{2.81}{0.00}{0.9pt}
  \redxmark[0.14]{2.81}{-1.08}{0.9pt}
\end{scope}}

\newcommand{\hybridicon}[2][1.0]{%
\begin{scope}[shift={#2}, scale=#1, transform shape]
  \draw[mainbox, draw=gray!70, fill=softgray] (0,0.2) rectangle (3.7,4.55);
  \node[pill, fill=gray!25!white] at (1.93,4.60) {hybrid program};

  \draw[mainbox, draw=qcstroke, fill=qcfill] (0.22,2.30) rectangle (3.63,4.02);
  \node[pill, fill=qcstroke!25!white] at (1.38,4.07) {quantum circuit};

  \qcicon[0.92]{(0.67,3.56)};

  \draw[
    rounded corners=3pt,
    dashed,
    draw=lavstroke,
    line width=0.7pt
  ] (3.02,2.36) rectangle (3.5,3.70);

  \draw[mainbox, draw=hoststroke, fill=hostfill] (0.22,0.30) rectangle (3.63,1.95);
  \node[pill, fill=hoststroke!20!white] at (1.08,2.00) {host CFG $H$};

  \cfgicon[0.72]{(1.82,1.22)}

  \coordinate (hostinput) at (1.96,1.60);

  \draw[
    -{Latex[length=2.0mm,width=1.6mm]},
    draw=lavstroke,
    line width=0.9pt
  ]
  (3.02,2.36)
  .. controls (3.02,1.96) and (2.46,1.88) ..
  (hostinput);

  \node[
    font=\scriptsize\bfseries,
    text=lavstroke,
    align=center
  ] at (3.05,1.48)
  {measurement\\outcomes\\$o_0,\dots,o_3$};
\end{scope}}

\newcommand{\hostanalysisicon}[2][1.0]{%
\begin{scope}[shift={#2}, scale=#1, transform shape]

  \draw[mainbox, draw=analysisstroke, fill=analysisfill] (0,0.3) rectangle (5.15,4.35);
  \node[pill, fill=pillgreen] at (2.58,4.30) {host-side static analysis};

  \draw[
    rounded corners=7pt,
    draw=qcstroke!35,
    fill=blue!6
  ] (0.18,2.45) rectangle (4.97,3.86);

  \node[font=\scriptsize\bfseries, text=qcstroke] at (4.22,3.63) {sequential analysis};

  \cfgicon[0.68]{(0.58,3.25)};
  \draw[flow] (0.90,3.25) -- (3.52,3.25);

  \analysisicon[0.92]{(3.76,3.25)};
  \node[font=\scriptsize\bfseries] at (4.42,3.25) {analysis};

  \cpuicon[0.82]{(4.58,2.73)};

  \draw[
    rounded corners=7pt,
    draw=green!45!black,
    fill=green!6
  ] (0.18,0.40) rectangle (4.97,2.28);

  \node[font=\scriptsize\bfseries, text=green!45!black] at (4.16,2.05) {parallel analysis};

  \cfgicon[0.68]{(0.58,1.36)};

  \draw[flow] (0.90,1.36) -- (1.6,1.36);
  \node[font=\scriptsize\bfseries] at (1.25,1.58) {lower};

  \ssaicon[0.80]{(2.41,1.36)};

  \draw[flow] (3.15,1.36) -- (3.55,1.36);

  \analysisicon[0.92]{(3.76,1.36)};
  \node[font=\scriptsize\bfseries] at (4.42,1.36) {analysis};

  \gpuicon[0.80]{(4.58,0.86)};

\end{scope}}

\newcommand{\optimizericon}[2][1.0]{%
\begin{scope}[shift={#2}, scale=#1, transform shape]
  \draw[
    mainbox,
    draw=orange!80!black,
    fill={rgb,255:red,250;green,234;blue,213}
  ] (0,0.5) rectangle (2.35,1.60);

  \node[pill, fill=pillorange] at (1.05,1.58) {optimizer};

  \cogicon[0.82]{(0.52,0.85)}

  \node[
    align=left,
    font=\scriptsize\bfseries
  ] at (1.68,0.80) {DGE-based\\pruning};
\end{scope}}

\newcommand{\resulticon}[2][1.0]{%
\begin{scope}[shift={#2}, scale=#1, transform shape]
  \draw[mainbox, draw=analysisstroke, fill=white] (0.5,0.8) rectangle (2.40,1.60);

  \node[pill, fill=pillgreen] at (1.2,1.62) {result of analysis};

  \node[
    font=\scriptsize\bfseries,
    text=textgreen,
    align=center
  ] at (1.45,0.98) {$o_0,o_3$ are not\\ contributory};
\end{scope}}

\begin{tikzpicture}[x=1cm,y=1cm]

\hybridicon[1.0]{(-3.00,4.25)}

\hostanalysisicon[1.0]{(3.0,4.35)}

\optimizericon[1.0]{(-2.5,1.15)}

\draw[mainbox, draw=qcstroke, fill=qcfill] (4.55,1.55) rectangle (8.,3.10);
\node[pill, fill=qcstroke!25!white] at (6.55,3.18) {optimized circuit};
\qcopticon[0.9]{(5.0,2.65)}

\draw[
  flow,
  draw=hoststroke
]
(0.8,6.52) -- (3,6.52);

\draw[
  flow,
  draw=qcstroke
]
(-1,4.43)
.. controls (-1,3.85) and (-1,3.20) ..
(-1,2.95);

\qcicon[0.5]{(-2.2, 4)}

\draw[
  -{Latex[length=2.4mm,width=1.8mm]},
  draw=green!55!black,
  line width=1.0pt
]
(4.10,4.62)
.. controls (3.80,3.70) and (2.35,2.95) ..
(-0.2,2.5);

\node[
  fill=green!6,
  draw=green!45!black,
  rounded corners=2pt,
  align=center,
  inner xsep=4pt,
  inner ysep=2pt,
  font=\scriptsize\bfseries,
  text=green!40!black
] at (3.45,3.85)
{analytic result:\\
$o_0$ and $o_3$ are non-contributory};

\draw[
  flow,
  draw=orange!85!black
]
(-0.17,2.25) .. controls (1, 2.10) .. (4.52,2.00);

\end{tikzpicture}

    \centering
    \caption{Overview of our optimization pass. Given a hybrid quantum program, we analyze the classical host program to determine which measurement outcomes are non-contributory to the final return. These analysis results are then passed to the optimizer, which performs DGE-based pruning on the circuit and removes measurements and gates that affect only non-contributory outcomes. The host-side analysis can be executed either sequentially on a CPU backend or, after lowering to a levelized SSA representation, by a CUDA-based GPU backend.
}
    \label{fig:framework_overview}
\end{figure}

\noindent\textbf{Our contribution i.}
We present a semantics-aware host-side static analysis
for hybrid quantum programs that identifies measurement outcomes that are
syntactically read by the host but semantically non-contributory to the host
return. The analysis is formulated as an abstract interpretation over the
classical host program, tracking data and control contribution of
measurement-bound initial values with exact symbolic reasoning when possible
and conservative dependency information otherwise. We formalize semantic
contribution and prove soundness: every reported non-contributory outcome is
irrelevant to the host program's returned value. We evaluate the analysis on
$24$ application-faithful hybrid workloads across four domains, with host computations derived from upstream application artifacts, and include a standard syntactic liveness analysis as baseline. We show that the result of the host-side analysis enables around $38\%$ mean total-gate reduction standalone, and still removes about $31\%$ of the post-optimized gates after SOTA circuit optimizers Qiskit \cite{Qiskit}, t|ket$\rangle$ \cite{Sivarajah_tket_2021}, and PyZX \cite{kissinger2019pyzx} have already run.
Besides, our method detects $4\times$ as many non-contributory measurements as a baseline standard syntactic liveness analysis.
This indicates that the host-semantic simplifications exposed by our analysis are not subsumed by by either widely used circuit-local optimizers or standard syntactic liveness analysis.

\noindent\textbf{A scalability challenge.}
The host-side static analysis introduced by our method is the main semantic phase that goes beyond circuit-only optimisation and, therefore, a primary source of compile-time cost in our flow.
Accelerating this phase is appealing, but it is not straightforward: in its original form, the analysis exposes irregular dependences and limited explicit parallel structure, making it a poor match for CUDA's SIMT execution model~\cite{cuda_programming_guide,cuda_best_practices}.
This motivates an SSA-style reformulation. SSA makes definitions and data dependences explicit~\cite{cytron1991ssa}, and prior work has studied both the connection between SSA translation and abstract interpretation, and the semantics preservation of SSA-based compiler middle ends~\cite{lemerre2023ssa,barthe2014ssa,10.1145/2499370.2462164}. Inspired by this line of work, we use SSA in a narrower setting: to expose level-wise independent abstract-transfer tasks while preserving the host-side analysis result.
Indeed, as shown empirically in \hyperref[par:why-lowering-matters]{the last report} in \cref{subsec:gpu-acc}, directly moving the source-order analysis to the GPU is ineffective.
This observation motivates a dedicated reformulation of the host-side analysis for GPU execution, which we describe next.

\noindent\textbf{Our contribution ii.}
We introduce an SSA-style, levelized intermediate representation for the host program to expose sufficient parallel structure for effective GPU acceleration of the host-side analysis. We prove that this lowering
preserves the set of non-contributory outcomes computed by the host-side analysis.
On top of this lowered representation, we implement a CUDA-based backend and evaluate the resulting GPU-accelerated host-side analysis, showing that the lowered backend achieves substantial speedups over a sequential baseline.

\noindent\textbf{Overview of optimization flow.} \Cref{fig:framework_overview} provides an overview of the full flow of using our analysis to guide circuit optimization.


\section{Method}\label{sec:methods}
\paragraph{Scope and assumptions} In this manuscript, we restrict our consideration to the following host program fragment:
\[
\begin{array}{rcl}
S &::=& \texttt{skip}
      \mid \texttt{x} := e
      \mid \texttt{return}\ e
      \mid S;S
      \mid \texttt{if}\ (e)\ \texttt{then}\ S\ \texttt{else}\ S,\\[1mm]
e &::=& \texttt x
      \mid c
      \mid -e
      \mid e_1 \odot e_2
      \mid \texttt{int}(e)
      \mid \texttt{random}(L,U)
      \mid \texttt{f}(e_1,\ldots,e_n),
\end{array}
\]
where \(\texttt x\) is a program variable, \(\odot \in \{+,-,\times,\div,\%\}\), and rational
constants are interpreted exactly. 
Throughout the paper, \(\mathbb{Q}\) denotes the set of rational numbers.
The operation \(\texttt{int}(\cdot)\) denotes truncation toward zero. \(\texttt{random}(L,U)\) has measurement-independent constant bounds. Function calls are assumed
to be total, pure, and side-effect free.
The grammar supports runtime-dependent structured conditionals, including nested conditionals. 
Arbitrary loops and recursion are outside the current implementation scope. Standard CFG-based fixpoint iteration with joins/widening, optionally combined with bounded unrolling, can extend the analysis to loops.

Next, we design abstract interpretation with a custom domain to symbolically evaluate the program.

\subsection{Abstract domain}\label{subsec:abs-dom}

$\mathcal{V}$ denotes the set of all program variables. $\mathbf{M}\subseteq\mathcal{V}$ is a set of variables that take measurement outcome as initial value.
Our goal is to perform a forward analysis on the host program to track which initial values of variables in $\mathbf{M}$ influence the final return.
$\Sigma_0=\{\texttt{x}@0 \mid \texttt{x}\in \mathcal{V}\}$ denotes the set of symbols representing initial values: for each variable $\texttt{x}$, $\texttt{x}@0$ denotes its initial value.

\paragraph{Program state}
An abstract state of the program is a mapping $\rho : \mathcal{V}\cup\{\texttt{return}\} \rightarrow \mathcal{A}
$ that maps each program variable to an abstract value.
We reserve a distinguished symbol \(\texttt{return}\notin \mathbf{M}\) to hold the abstract return value, i.e., we use $\rho(\texttt{return})$ to denote the abstract return value after executing the current statement. With $\rho(\texttt{return}) = \bot$, it means current statement has no return.
Besides, we maintain two auxiliary maps $\mathbf{Dom}: \Sigma \rightarrow \{\textbf{bin}, \textbf{int}, \textbf{real}\}$, and $\mathbf{B}: \Sigma \rightarrow \{\bigcup_k[L_k, U_k] \mid L, U \in \mathbb{Q}\}$, where $\Sigma$ is the set of all symbols, $\mathbf{Dom}$ records the type/domain of each symbol,
$\mathbf{B}$ records the value range for every symbol.
We maintain \(C_{\texttt{ctrl}} \subseteq \mathbf{M}\) as the set of
measurement-bound variables whose initial values affect the evaluation of at least one conditional branch in a way that changes the observable post-conditional result, in particular the eventual abstract return.

\paragraph{Abstract value}
An abstract value \(\mathcal A\) is either a polynomial form \(\langle p\rangle\)
or a dependence form \(\langle D\rangle\):
\[
\mathcal A \mathrel{::=} \langle p\rangle \mid \langle D\rangle,
\]
where \(p\in\mathcal P\) is a multivariate polynomial over symbols in
\(\Sigma\) with rational coefficients, and \(D\subseteq\Sigma\) is a set of
symbols the abstract value may depend on.

In our analysis, an abstract value is either an exact polynomial, if we can precisely compute it within our domain (see \cref{ex:abstract-value-poly-form}) or a set of symbols it may depend on, if exact arithmetic is not feasible (see \cref{ex:abstract-value-dependency-form}). 
The dependence form is a conservative fallback recording initial inputs that might influence the value.
The polynomial form preserves more information: 
when $\mathcal{A} = \langle p\rangle$, $\mathcal{A}$ exactly depends on all symbols contained in the polynomial $p$.

\begin{example}\label{ex:abstract-value-poly-form}
A program with one statement:
$\texttt{x} := 2\texttt{a} - 3\texttt{b} + 5$. After its execution, the abstract state, if expressed in polynomial form for all variables, is $\rho(\texttt{a})=\big\langle \texttt{a}@0\big\rangle$, $\rho(\texttt{b})=\big\langle\texttt{b}@0\big\rangle$, $\rho(\texttt{x})=\big\langle 2\texttt{a}@0 - 3\texttt{b}@0 + 5\big\rangle$; if expressed in dependence form for all variables, is $\rho(\texttt{a})=\big\langle \{\texttt{a}@0\}\big\rangle$, $\rho(\texttt{b})=\big\langle\{\texttt{b}@0\}\big\rangle$ $\rho(\texttt{x})=\big\langle\{\texttt{a}@0, \texttt{b}@0\}\big\rangle$. We use polynomial forms whenever possible to hold more information.

\end{example}

\begin{example}\label{ex:abstract-value-dependency-form}
A program with one statement:
$\texttt{x} := 4\texttt{f}(\texttt{b}, \texttt{a}) - 1$, where $\texttt{f}$ is a function call with unknown implementation. After its execution, 
$\rho(\texttt{x})=\big\langle\{\texttt{a}@0, \texttt{b}@0\}\big\rangle$, since we lose track of the exact symbolic evaluation on $\texttt{x}$, a polynomial form for $\rho(\texttt{x})$ is impossible. $\rho(\texttt{x})=\big\langle\{\texttt{a}@0, \texttt{b}@0\}\big\rangle$ only preserves the information that $\texttt{x}$ depends on.

\end{example}

\paragraph{Exact abstract equality}
Two abstract values are treated as equal only when they are both exact polynomial forms with the same normalized polynomial:
\[
\mathsf{Eq}_{\mathrm A}(A,B)
=
\begin{cases}
\texttt{true}, &
A=\langle p\rangle,\; B=\langle q\rangle,\; \mathrm{norm}(p-q)=0,\\[1mm]
\texttt{true}, &
A=B=\bot,\\[1mm]
\texttt{false}, &
\text{otherwise,}
\end{cases}
\]
where $\mathrm{norm}$ is a canonical polynomial simplification which collects like terms, orders monomials canonically, and drops zero terms. For example,
$\mathrm{norm}((\texttt{a}@0+\texttt{b}@0)-(\texttt{b}@0+\texttt{a}@0))=0$ and
$\mathrm{norm}(2\,\texttt{a}@0\cdot\texttt{b}@0+3\,\texttt{b}@0\cdot\texttt{a}@0)=5\,\texttt{a}@0\cdot\texttt{b}@0$.

\subsection{Abstract evaluation}\label{subsec:abs-eval}
We define abstract evaluation of expressions. $[\![e]\!]^{\sharp}$ denotes the abstract evaluation of an expression $e$ under environment $\rho$. In the following, we use the notation $\mathbf{Sym}(\cdot)$ very often: given any abstract value $\mathcal{A}$, $\mathbf{Sym}(\mathcal{A})$ is the set of all symbols occurring in $\mathcal{A}$, including symbols in $\Sigma_0$ and any fresh symbols introduced by \texttt{random}(\(\cdot,\cdot\)), for example,
$\mathbf{Sym}\!\left(\,\big\langle 2\,\texttt{a}@0 \;+\; 4\texttt{b}@0 \;-\; 3 \big\rangle\,\right)
\;=\; \{\, \texttt{a}@0,\ \texttt{b}@0 \,\}$,
$\mathbf{Sym}\!\left(\,\big\langle \{\texttt{a}@0, \texttt{b}@0, \texttt{c}@0\} \big\rangle\,\right)
\;=\; \{\, \texttt{a}@0,\ \texttt{b}@0,\ \texttt{c}@0 \,\}$.
We set $\mathbf{Sym}(\bot)$ to $\varnothing$.

\paragraph{Base case} 
\[
\frac{\texttt{x}\in\mathcal{V}}{[\![\texttt{x}]\!]^{\sharp}=\rho (\texttt{x})}
\quad
\frac{c\in\mathbb{Q}}{[\![c]\!]^{\sharp}=\langle c\rangle}
\quad
\frac{c\notin\mathbb{Q}}{[\![c]\!]^{\sharp}=\langle \varnothing\rangle}
\]

\paragraph{Binary and unary arithmetic}
When both operands have exact polynomial form, we compute the result exactly in the polynomial ring, using $+$, $-$, $\times$, and then apply canonical polynomial simplification $\mathrm{norm}$:
\[\frac{[\![e_1]\!]^{\sharp}=\langle p\rangle \wedge [\![e_2]\!]^{\sharp}=\langle q\rangle\wedge p, q\in \mathcal{P}\wedge \odot\in\{+,-,\times\}}{[\![e_1 \odot e_2]\!]^{\sharp}=\langle \mathrm{norm}(p\ \odot\ q)\rangle}\]
To keep the analysis efficient, we use a simplified polynomial domain: since division and modulo are not closed, we drop the polynomial form for these operators and return a dependence form:
\[
\frac{\mathcal{A}_1=[\![e_1]\!]^{\sharp}\wedge \mathcal{A}_2=[\![e_2]\!]^{\sharp}\wedge \odot\in\{\div, \%\}}{[\![e_1 \odot e_2]\!]^{\sharp}=\langle \mathbf{Sym}(\mathcal{A}_1)\cup \mathbf{Sym}(\mathcal{A}_2)\rangle}
\]
Similarly, if either operand is already in dependence form $\langle D\rangle$:
\[
\frac{\mathcal{A}_1=[\![e_1]\!]^{\sharp}=\langle D\rangle\wedge  D \subseteq \Sigma\wedge \mathcal{A}_2=[\![e_2]\!]^{\sharp} }{[\![e_1 \odot e_2]\!]^{\sharp}=[\![e_2 \odot e_1]\!]^{\sharp}=\langle \mathbf{Sym}(\mathcal{A}_1)\cup \mathbf{Sym}(\mathcal{A}_2)\rangle}
\]
Unary minus is value-preserving on dependencies: 
\[
\frac{[\![e]\!]^{\sharp}=\langle p\rangle\wedge p\in\mathcal{P}}{[\![-e]\!]^{\sharp}=\langle -p\rangle}
\qquad
\frac{[\![e]\!]^{\sharp}=\langle D\rangle\wedge D\subseteq\Sigma}{[\![-e]\!]^{\sharp}=\langle D\rangle}
\]

\paragraph{Operator \texttt{int}$(\cdot)$}
In this manuscript, \texttt{int}$(\cdot)$ is truncating toward zero: it discards the fractional part, and is not a flooring (e.g., $\texttt{int}(1.2)=1$, $\texttt{int}(-1.5)=-1$, $\texttt{int}(-0.3)=0$).
If $[\![e]\!]^{\sharp}$ evaluates to an exact polynomial that is integer-valued under the domain map $\mathbf{Dom}$, then $[\![\texttt{int}(e)]\!]^{\sharp}$ returns the same polynomial:
\[
\frac{C_{int}^1(e) \;\equiv\; [\![e]\!]^{\sharp}=\langle p\rangle \wedge \mathrm{IsInt}(p,\mathbf{Dom}) = \texttt{true}}
     {[\![\texttt{int}(e)]\!]^{\sharp}=\langle p\rangle}
\]
where $\mathrm{IsInt}(p,\mathbf{Dom})$ holds iff all coefficients of $p$ are integers and
$p$ mentions no symbol whose domain is \textbf{real}.

If $[\![e]\!]^{\sharp}$ evaluates to an exact polynomial
$p$ whose evaluation gives a value range $I$ under the bound map $\mathbf{B}$, and this value range lies entirely within a single truncation bin, then $[\![\texttt{int}(e)]\!]^{\sharp}$ constant-folds to an integer:
\[
\frac{C_{int}^2(e) \;\equiv\; [\![e]\!]^{\sharp}=\langle p\rangle \wedge \mathrm{eval}(p,\mathbf{B})=I \wedge \mathrm{trunc}(I)=c}
     {[\![\texttt{int}(e)]\!]^{\sharp}=\langle c\rangle}
\]
where the operator
$\mathrm{eval}(p,\mathbf{B})$ is the value range of $p$ evaluated under the bounds of variables given by $\mathbf{B}$;
$\forall i\in \mathbb{Z}: \mathrm{trunc}(S)=i$ iff $\forall v\in S:\texttt{int}(v)=i$.

We further use the following integer-carrier rule: if an exact polynomial can be decomposed into an integer-valued part plus a residual whose feasible range cannot push the value across any truncation boundary, then truncation preserves the integer-valued part:
\[
\frac{
\begin{array}{l}
C_{int}^3(e) \;\equiv\; [\![e]\!]^{\sharp}=\langle r\rangle \wedge
\exists p,q\in\mathcal{P}:\; r=\mathrm{norm}(p+q)\wedge\mathrm{IsInt}(p,\mathbf{Dom})=\texttt{true} \wedge\\
\mathrm{eval}(p,\mathbf{B})=I_p \wedge
\mathrm{eval}(q,\mathbf{B})=I_q \wedge
\mathrm{NoCarry}(I_p,I_q)=\texttt{true}
\end{array}
}
{[\![\texttt{int}(e)]\!]^{\sharp}=\langle p\rangle}
\]
where $\mathrm{NoCarry}(I_p,I_q)$ holds iff
$\forall k\in (I_p\cap\mathbb{Z}),\ \forall \delta\in I_q:\ \texttt{int}(k+\delta)=k$.

A common sufficient pattern is \(I_q\subseteq [0,1)\) with \(I_p\subseteq [0,\infty)\), or symmetrically \(I_q\subseteq (-1,0]\) with \(I_p\subseteq (-\infty,0]\).
In particular, if \(\eta\in(0.1,0.5)\) and \(\texttt{a},\texttt{b},\texttt{c}\in\{0,1\}\), then for
\(p=\texttt{b}\cdot\texttt{c}\) and \(q=\eta\cdot\texttt{a}\), we have \(I_p\cap\mathbb{Z}=\{0,1\}\) and \(I_q\subseteq [0,0.5]\), so \(\mathrm{NoCarry}(I_p,I_q)\) holds and
\[
[\![\texttt{int}(\eta\cdot\texttt{a}+\texttt{b}\cdot\texttt{c})]\!]^{\sharp}
=
\langle \texttt{b}\cdot\texttt{c}\rangle.
\]
If none of the three cases holds, then:
\[
\frac{\neg C_{int}^1(e) \wedge \neg C_{int}^2(e) \wedge \neg C_{int}^3(e)}
     {[\![\texttt{int}(e)]\!]^{\sharp}=\langle \mathbf{Sym}([\![e]\!]^{\sharp})\rangle }
\]

\paragraph{Operator \texttt{random}$(\cdot, \cdot)$}
Each occurrence of \(\texttt{random}(L,U)\) is abstracted as a fresh real-valued
symbol \(\texttt r\) with bounds \([L,U]\). This rule is used only when
\(L,U\in\mathbb Q\) are constants independent of all measurement-bound inputs.
The generated symbol does not coincide with any symbol already used in the
current analysis and carries no measurement dependency:
\[
\frac{
\begin{array}{c}
L, U\in\mathbb Q
\wedge
\texttt{r}=\mathrm{gen}\text{-}\mathrm{fresh}()\wedge
\mathbf{Dom}(\texttt{r})=\textbf{real}
\wedge
\mathbf{B}(\texttt{r})=[L,U]
\end{array}
}
{[\![\texttt{random}(L,U)]\!]^{\sharp}=\langle \texttt{r}\rangle}
\]
where \(\mathrm{gen}\text{-}\mathrm{fresh}()\) picks a fresh symbol for this
particular occurrence of \(\texttt{random}\).

\paragraph{Function calls}
We assume that function calls appearing in expressions are total, pure, and
side-effect free, and the return can depend only on the values of arguments. We conservatively track the union of the argument supports:
\[
\frac{
n\in\mathbb{N}
\wedge
\bigcup_{i=1}^{n}
\mathbf{Sym}([\![\texttt{a}_i]\!]^{\sharp})
=
D_{\texttt{f}}
}
{
[\![ \texttt{f}(\texttt{a}_1,\dots,\texttt{a}_n) ]\!]^{\sharp}
=
\langle D_{\texttt{f}}\rangle
}
\]

\subsection{Abstract effects}\label{subsec:abs-effect}
Abstract effects define how statements change the abstract state.
We write \((\rho,C_{\texttt{ctrl}})\ \xRightarrow{\ S_b\ }\ (\rho',C_{\texttt{ctrl}}')\) for the abstract execution
of a statement block \(S_b\) from state \((\rho,C_{\texttt{ctrl}})\) to \((\rho',C_{\texttt{ctrl}}')\), where \(S_b\) could be an empty statement, a single statement, or sequential composition of multiple statements. For conciseness, we omit the rules defining abstract effects on $\mathbf{Dom}$ and $\mathbf{B}$.

\paragraph{Sequencing}
For statements $S$, $T$, we write \(S;T\) for sequencing:
\[
\frac{
(\rho,C_{\texttt{ctrl}}) \xRightarrow{ S } (\rho_1,C_{\texttt{ctrl}}^1)
\wedge
(\rho_1,C_{\texttt{ctrl}}^1) \xRightarrow{ T }(\rho_2,C_{\texttt{ctrl}}^2)
}{
(\rho,C_{\texttt{ctrl}})\ \xRightarrow{\ S;T\ }\ (\rho_2,C_{\texttt{ctrl}}^2)
}
\]

\paragraph{Assignments and returns}
Evaluate the expression and update $\rho$:
\[
\frac{\rho'=\rho\oplus\{\texttt{x}\mapsto [\![e]\!]^{\sharp}\}}{(\rho, C_{\texttt{ctrl}})
~\xRightarrow{\ \texttt{x := }e } (\rho', C_{\texttt{ctrl}})}\
\frac{
\rho'=\rho\oplus\{\texttt{return}\mapsto  [\![e]\!]^{\sharp}\}
}{
(\rho,C_{\texttt{ctrl}})\ \xRightarrow{\ \texttt{return }e }\ (\rho',C_{\texttt{ctrl}})
}
\]
where $\oplus$ denotes updating the environment with the information in its second/right operand.

\paragraph{Conditionals}
For $S_{\texttt{cond}}\equiv\texttt{if (cond) then }S_t\texttt{ else }S_e$, we simulate both of the branches from the same input state and then merge their outputs variable-wise: if both branches yield the same abstract value for a variable, we keep it; otherwise we conservatively union their dependencies. 
If the abstract return values differ or some variable either differs between branches or appears in dependence form, then any measurement input that can affect \texttt{cond} is added to $C_{\texttt{ctrl}}$:
\[
\frac{
(\rho,C_{\texttt{ctrl}}) \xRightarrow{S_t}(\rho_t,C_{\texttt{ctrl}}^t)\wedge
(\rho,C_{\texttt{ctrl}}) \xRightarrow{ S_e\ } (\rho_e,C_{\texttt{ctrl}}^e)
}{
(\rho,C_{\texttt{ctrl}})\ \xRightarrow{S_{\texttt{cond}} }\ (\rho_t \sqcup_{\mathrm{env}} \rho_e, C_{\texttt{ctrl}}')
}
\]
where 
\[
(\rho_a \sqcup_{\mathrm{env}} \rho_b)(\texttt{x}) =
\begin{cases}
\rho_a(\texttt{x}), &
\mathsf{Eq}_{\mathrm A}(\rho_a(\texttt{x}),\rho_b(\texttt{x})),\\[1mm]
\big\langle \mathbf{Sym}(\rho_a(\texttt{x})) \cup \mathbf{Sym}(\rho_b(\texttt{x})) \big\rangle, &
\text{otherwise,}
\end{cases}
\]
and
\[
C_{\texttt{ctrl}}' =
C_{\texttt{ctrl}}^t \cup C_{\texttt{ctrl}}^e
\cup
\begin{cases}
\{\texttt{m}\in\mathbf{M}\mid \texttt{m}@0\in\mathbf{Sym}([\![\texttt{cond}]\!]^{\sharp})\},
&
\text{Diff}_{\rho_t,\rho_e},\\[1mm]
\varnothing, &
\text{otherwise,}
\end{cases}
\]
where
$\text{Diff}_{\rho_t,\rho_e}
=
\exists \texttt{x}\in\mathcal{V}\cup\{\texttt{return}\}:\;
\neg \mathsf{Eq}_{\mathrm A}(\rho_t(\texttt{x}),\rho_e(\texttt{x}))
$.

\paragraph{Bounded lookahead for a common post-conditional return}
A common pattern in hybrid host code is a structured conditional with
return-free branches followed by one common return:
$S_{\texttt{if-ret}}
\;\equiv\;
\texttt{if (cond) then } S_t \texttt{ else } S_e \texttt{; return } e$,
where \(S_t\) and \(S_e\) contain no \texttt{return}.
We first analyze the two branches from the same input state, and then
evaluate the common return expression once under each branch
environment:
\[
\frac{
(\rho,C_{\texttt{ctrl}}) \xRightarrow{S_t} (\rho_t,C_{\texttt{ctrl}}^t)
\;\wedge\;
(\rho,C_{\texttt{ctrl}}) \xRightarrow{S_e} (\rho_e,C_{\texttt{ctrl}}^e)
\;\wedge\;
R_t = [\![e]\!]^{\sharp}_{\rho_t}
\;\wedge\;
R_e = [\![e]\!]^{\sharp}_{\rho_e}
}{
(\rho,C_{\texttt{ctrl}})
\;\xRightarrow{S_{\texttt{if-ret}}}\;
\bigl(\rho', C_{\texttt{ctrl}}'\bigr)
}
\]
where
\[
\rho' =
(\rho_t \sqcup_{\mathrm{env}} \rho_e)
\oplus
\{\texttt{return} \mapsto (R_t \sqcup_{\mathrm{A}} R_e)\},
\]
\[
R_t \sqcup_{\mathrm{A}} R_e
=
\begin{cases}
R_t, &
\mathsf{Eq}_{\mathrm A}(R_t,R_e),\\[1mm]
\bigl\langle \mathbf{Sym}(R_t)\cup \mathbf{Sym}(R_e)\bigr\rangle, &
\text{otherwise,}
\end{cases}
\]
and
\[
C_{\texttt{ctrl}}'
=
C_{\texttt{ctrl}}^t \cup C_{\texttt{ctrl}}^e
\cup
\begin{cases}
\{\texttt{m}\in \mathbf{M}\mid \texttt{m}@0 \in \mathbf{Sym}([\![\texttt{cond}]\!]^{\sharp}_{\rho})\},
&
\neg \mathsf{Eq}_{\mathrm A}(R_t,R_e),\\[1mm]
\varnothing, &
\text{otherwise.}
\end{cases}
\]
Here the subscript in \( [\![e]\!]^{\sharp}_{\rho_t} \) indicates that the abstract evaluation of \(e\) is performed under environment \(\rho_t\), and similarly for \(\rho_e\).

\subsection{Detection of non-contributory values}\label{subsec:detection}
After running the above stated static analysis on the host program, 
let \((\rho^\star, C_{\texttt{ctrl}}^\star)\) be the final abstract state produced by the host-side analysis. The set of  symbols that are measurement-originated and may affect the observable host return is
\[
\mathbf{Obs}(\rho^\star, C_{\texttt{ctrl}}^\star)
=
\bigl(\mathbf{Sym}(\rho^\star(\texttt{return})) \cap \{\texttt{m}@0 \mid \texttt{m}\in\mathbf{M}\}\bigr)
\cup
\{\texttt{m}@0 \mid \texttt{m}\in C_{\texttt{ctrl}}^\star\}.
\]
Accordingly, the set of non-contributory measurement-bound variables is
\[
\mathbf{M}_{\mathrm{nc}}
=
\{\texttt{m}\in\mathbf{M}\mid \texttt{m}@0 \notin \mathbf{Obs}(\rho^\star, C_{\texttt{ctrl}}^\star)\}.
\]
\Cref{alg:find-nc} summarizes this procedure.

\subsection{Soundness of Host-side Analysis}
\label{subsec:host-side-analysis-soundness}
We summarize the formal connection between the concrete host semantics and
the abstract analysis; detailed proofs are given in \cref{app:soundness-proofs}.

\paragraph{Concrete and abstract states.}
For the proof, we use instrumented concrete states
\(s=(\nu,\sigma,\delta)\), where \(\nu\) is a valuation of symbols in
\(\Sigma\) respecting \(\mathbf{Dom}\) and \(\mathbf{B}\), \(\sigma\) maps
variables in \(\mathcal V\cup\{\texttt{return}\}\) to concrete runtime values,
and $\delta:\mathcal V\cup\{\texttt{return}\}\to 2^{\mathbf M}$
records the exact set of measurement-bound variables on which each current
runtime value semantically depends. The component \(\delta\) is ghost state used
only in the proof; erasing it yields the ordinary concrete execution. The concrete
collecting domain is \(2^{\mathcal S_{\mathrm c}}\), ordered by set inclusion.

The abstract states are the ones used by the analysis above: a state has the
form \((\rho,C_{\texttt{ctrl}})\), together with the auxiliary maps
\(\mathbf{Dom}\) and \(\mathbf{B}\), where
 $ \rho:\mathcal V\cup\{\texttt{return}\}\to\mathcal A$,
  $C_{\texttt{ctrl}}\subseteq\mathbf M$.
Here \(\mathcal A\) is the abstract-value domain from
\cref{subsec:abs-dom}.

\paragraph{Description relation and concretization.}
For an abstract value \(A\), define its measurement support as
$  \mathbf{MSym}_{\mathbf M}(A)
  =
  \{\,\texttt{m}\in\mathbf M \mid \texttt{m}@0\in\mathbf{Sym}(A)\,\}$.
A concrete value-dependency pair \((v,S_D)\) is described by \(A\) under
\(C_{\texttt{ctrl}}\), written
 $ (v,S_D)\Delta(A,C_{\texttt{ctrl}})$,
if
  $S_D\subseteq
  \mathbf{MSym}_{\mathbf M}(A)\cup C_{\texttt{ctrl}}$,
and additionally, when \(A=\langle p\rangle\) is a polynomial form,
\(v=\nu(p)\). For \(A=\bot\), we require that no concrete return value has been
produced yet and \(S_D=\varnothing\).
We lift \(\Delta\) pointwise to states. We write
\(s\Delta(\rho,C_{\texttt{ctrl}})\) iff for every
\(\texttt{x}\in\mathcal V\cup\{\texttt{return}\}\),
  $(\sigma(\texttt{x}),\delta(\texttt{x}))
  \Delta
  (\rho(\texttt{x}),C_{\texttt{ctrl}})$.
The concretization of an abstract state is therefore
$  \gamma(\rho,C_{\texttt{ctrl}})
  =
  \{\,s\in\mathcal S_{\mathrm c}
      \mid
      s\Delta(\rho,C_{\texttt{ctrl}})\,\}$.
We order abstract states by the concrete states they describe:
\[
  (\rho_1,C_{\texttt{ctrl}}^1)\sqsubseteq_{\gamma}
  (\rho_2,C_{\texttt{ctrl}}^2)
  \quad\Longleftrightarrow\quad
  \gamma(\rho_1,C_{\texttt{ctrl}}^1)
  \subseteq
  \gamma(\rho_2,C_{\texttt{ctrl}}^2).
\]
Thus smaller concretizations are more precise.

\paragraph{Transfer soundness.}
Let \(F_{S_b}\) be the concrete transfer of a statement block \(S_b\):
  $F_{S_b}(X)
  =
  \{\,s'\mid \exists s\in X:\ s\xrightarrow{\ S_b\ }s'\,\}$.
If the abstract transfer defined in \cref{subsec:abs-effect} derives
  $(\rho,C_{\texttt{ctrl}})
  \xRightarrow{\ S_b\ }
  (\rho',C_{\texttt{ctrl}}')$,
then
\[
  F_{S_b}\bigl(\gamma(\rho,C_{\texttt{ctrl}})\bigr)
  \subseteq
  \gamma(\rho',C_{\texttt{ctrl}}').
  \tag{AI-S}
\]
The proof is by induction over expressions and statements, using the rules in
\cref{subsec:abs-eval,subsec:abs-effect}; see \cref{app:soundness-proofs}.

Applying \((\mathrm{AI}\text{-}\mathrm{S})\) to the whole host program \(H\)
shows that the final abstract state
\((\rho^\star,C_{\texttt{ctrl}}^\star)\) over-approximates the exact measurement
dependencies of the concrete host return. Therefore
\(\mathbf{Obs}(\rho^\star,C_{\texttt{ctrl}}^\star)\) over-approximates all
measurement-originated symbols that can affect the return, and every
\(\texttt{m}\in\mathbf{M}_{\mathrm{nc}}\) reported by \textsc{find\_nc} is
semantically non-contributory to the host return.

\subsection{Examples}
The following examples illustrate how our static analysis detects non-contributory measurement outcomes. Each example is presented in a step-by-step form: after every program statement, we show, on the comment line beginning with //, the corresponding abstract state produced by our static analysis.
\begin{example}
\(\texttt{q},\texttt{r}\in\mathbf{M}\), i.e., $\texttt{q}$ and $\texttt{r}$ initially hold measurement outcomes from the quantum circuit.
\[
\begin{array}{@{}l@{\qquad}r@{}}
\texttt{def prog(q,r,b):} 
 // \quad\text{Initial: }\rho(\texttt{q})=\langle \texttt{q}@0\rangle, \rho(\texttt{r})=\langle \texttt{r}@0\rangle, \rho(\texttt{b})=\langle \texttt{b}@0\rangle &\\ 
\quad \texttt{x   = q + b}                  
 \quad//  \rho(\texttt{x}) =[\![\texttt{q}+ \texttt{b}]\!]^{\sharp}=\mathrm{norm}(\langle \texttt{q}@0\rangle + \langle \texttt{b}@0\rangle)= \langle \texttt{q}@0 + \texttt{b}@0\rangle                         & \\
\quad \eta\texttt{ = random(0.1, 0.5)}  \quad    
// \rho(\eta) = \langle \texttt{r}\rangle, 
 \mathbf{Dom}(\texttt{r})=\textbf{real},\ \mathbf{B}(\texttt{r})=[0.1,0.5] & \\
\quad \texttt{y   = int($\eta$ * (q - r))}    \quad
// \rho(\texttt{y}) = \langle0\rangle       & \\
\quad \texttt{q   = q + 1}                  \quad
// \rho(\texttt{q})=\langle \texttt{q}@0 + 1\rangle & \\
\quad \texttt{z   = x - q - b}             \quad
// \rho(\texttt{z}) 
= [\![(\texttt{q}@0 + \texttt{b}@0) - (\texttt{q}@0 + 1) - \texttt{b}@0]\!]^{\sharp}  
= \langle -1\rangle & \\
\quad \texttt{return y + z}                 \quad
// \rho^{\star}(\texttt{return}) 
= [\![\langle 0\rangle + \langle -1\rangle ]\!]^{\sharp}
= \langle -1\rangle
\end{array}
\]
$\rho^{\star}(\texttt{return})$ depends on neither \(q@0\) nor \(r@0\).
Thus, both initial values of \(\texttt{q}\) and \(\texttt{r}\) are not contributing to the final result.
\end{example}

\begin{example}
We assume \(\texttt{m}, \texttt{b}\in\mathbf{M}\).
\[
\begin{array}{@{}l@{\qquad}r@{}}
\texttt{def prog(m,b):} & \\
\quad \texttt{u   = (b + 3)/2} \quad
// \rho(\texttt{u}) = \langle \{\texttt{b}@0\} \rangle & \\
\quad \texttt{v   = foo(m)} \quad
// \rho(\texttt{v}) = \langle \{\texttt{m}@0\} \rangle \quad \text{(unknown call $\Rightarrow$ dependence on $\texttt{m}@0$)} & \\
\quad \eta\texttt{ = random(0.1, 0.5)} \quad
// \rho(\eta) = \langle \texttt{r}\rangle,\ \mathbf{Dom}(\texttt{r})=\textbf{real},\ \mathbf{B}(\texttt{r})=[0.1,0.5] & \\
\quad \texttt{t   = int($\eta$ * (m - m))} \quad
// \rho(\texttt{t}) = \langle 0\rangle & \\
\quad \texttt{m   = random(0, 1)} \quad
// \rho(\texttt{m})=\langle \texttt{r}'\rangle,\ \mathbf{Dom}(\texttt{r}')=\textbf{real},\ \mathbf{B}(\texttt{r}')=[0,1] & \\
\quad \texttt{w   = u + t + m} \quad
// \rho(\texttt{w}) = \langle \{\texttt{b}@0,\, \texttt{r}'\} \rangle & \\
\quad \texttt{return w} \quad
// \rho^{\star}(\text{return}) = \langle \{\texttt{b}@0,\, \texttt{r}'\} \rangle & \\
\end{array}
\]
The observable return depends on \(\texttt{b}@0\) and the fresh random value \(\texttt{r}'\), but not on the initial value \(\texttt{m}@0\). So the initial value of \(\texttt{b}\) contributes to the final result, but that of $\texttt{m}$ does not.

\end{example}

\begin{example}
Assume $\texttt{m},\texttt{n}\in\mathbf{M}$.
\[
\begin{array}{@{}l@{\qquad}r@{}}
\texttt{def prog(m,n,b):} & \\
\quad \texttt{u = b + n} \quad
// \rho(\texttt{u})=\langle \texttt{b}@0 + \texttt{n}@0\rangle \\
\quad \texttt{if (m):} \quad
// [\![\texttt{m}]\!]^\sharp=\langle \texttt{m}@0\rangle \\
\qquad \texttt{return u} 
\qquad// \rho_t(\texttt{return})=\langle \texttt{b}@0 + \texttt{n}@0\rangle \\
\quad \texttt{else:} & \\
\qquad \texttt{return u + 1} 
\qquad// \rho_e(\texttt{return})=\langle \texttt{b}@0 + \texttt{n}@0 + 1\rangle &\\
// \rho^\star=\rho_t\sqcup_{\mathrm{env}}\rho_e, C_{\texttt{ctrl}}^{\star} = \{\} \cup \mathbf{Sym}([\![\texttt{m}]\!]^{\sharp})=\{\texttt{m}@0\}
\end{array}
\]
Because $\rho_e(\texttt{return}) \not\equiv \rho_t(\texttt{return})$, the branch changes the returned value, the post-conditional environment joins the two branch returns conservatively.
So, the final return depends on $\texttt{b}@0$ and $\texttt{n}@0$; besides, although $\texttt{m}@0$ does not appear in $\rho^\star(\texttt{return})$, it is in $C^\star_{\texttt{ctrl}}$, so $\texttt{m}$ is also contributory.

\end{example}

\subsection{Integration with circuit-level simplification}\label{subsec:integration}
$\mathbf{M}_{\mathrm{nc}}$ denotes the set of measurement outcomes identified as non-contributory by our analysis.
In this section, we integrate the analysis into the circuit simplification pipeline so that every gate acting solely within the causal cones of $\mathbf{M}_{\mathrm{nc}}$ is safely removed.
$Q$ denotes the set of circuit qubits, and the measurement map
$\mu:Q\rightharpoonup\mathcal{V}$ maps each measured qubit to the host variable that stores its outcome.
For a measured qubit $q$, its backward causal cone is the set of gates and wire segments that can reach the terminal measurement of $q$ by following the circuit forward through wires and multi-qubit gates. A gate whose forward influence reaches only qubits mapped to $\mathbf{M}_{\mathrm{nc}}$ is dead.

The workflow is \cref{alg:host-aware-simpl}. 
$\textsc{find\_nc}$ runs the host-side analysis to compute $\mathbf{M}_{\mathrm{nc}}$, whose elements we map back to their qubits and treat as dead in the DGE sense. 
Any gate whose effects propagate only to dead qubits cannot change observable behaviour, so we sweep the circuit, delete such gates, and grow the dead-qubit set backwards along causal cones.

\subsection{Asymptotic analysis}\label{subsec:asymptotic-analysis}
\cref{corollary:efficiency} shows that, for typical hybrid programs with bounded abstract value sizes, our static analysis scales linearly in program size, enabling fast optimization on large codes. 
\cref{corollary:efficiency-workflow} provides the corresponding efficiency guarantee for the host-aware circuit optimization in \cref{alg:host-aware-simpl}, and \cref{theorem:complexity} states its worst-case time complexity of our static analysis.

Let $n$ be the number of operators in the host program, counting arithmetic, \texttt{int}$(\cdot)$, \texttt{random}$(\cdot,\cdot)$, conditionals, and function calls.
Let $|\mathcal{V}|$ be the number of variables tracked in the abstract environment, $\mathcal{N}$ the maximum number of monomials appearing in any polynomial-form abstract value, and $|D|$ the maximum size of any dependence-form symbol set. Detailed proofs are in \cref{app:asym-proofs}.

\begin{theorem}\label{theorem:complexity}
In the worst case, the proposed host-side static analysis runs in 
$\mathcal{O}\!\left(n\cdot\mathcal{N}^2 + n\cdot|\mathcal{V}|\cdot|D|\right)$.

\end{theorem}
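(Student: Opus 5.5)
We prove the bound by charging every unit of work the analysis does to one of the $n$ operators. Each operator is then shown to cost at most $\mathcal{O}(\mathcal{N}^2)$ for polynomial work plus $\mathcal{O}(|\mathcal{V}|\cdot|D|)$ for environment-level work.

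Throughout, we treat the following as unit-cost primitives:
\begin{itemize}
\item symbol lookups and set insertions, via hash sets;
\item rational-coefficient arithmetic;
\item interval evaluation per monomial.
\end{itemize}
The abstract state $\rho$ is a table with $|\mathcal{V}|+1$ entries, $\mathbf{Dom}$ and $\mathbf{B}$ are maps over symbols, and $C_{\texttt{ctrl}}$ is a set of size at most $|\mathbf{M}|\le|\mathcal{V}|$.

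\textbf{Per-operator cost of expression evaluation.} We go through the rules of \cref{subsec:abs-eval} case by case, showing each costs $\mathcal{O}(\mathcal{N}^2+|D|)$.
\begin{itemize}
\item \emph{Base cases.} A variable read or a constant costs $\mathcal{O}(1)$ when abstract values are shared by reference.
\item \emph{Polynomial $+$ and $-$.} These cost $\mathcal{O}(\mathcal{N})$ once like terms are collected in a hash map keyed by normalized monomials.
\item \emph{Polynomial $\times$.} This produces at most $\mathcal{N}^2$ monomial products. Collecting like terms and canonically ordering the at most $\mathcal{N}$ surviving monomials (by the definition of $\mathcal{N}$) gives $\mathcal{O}(\mathcal{N}^2)$ overall. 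This is the source of the first summand.
\item \emph{Dependence-form fallbacks, division, modulo and function calls.} These compute $\mathbf{Sym}$ of the operands and take their union. Computing $\mathbf{Sym}$ of a polynomial costs $\mathcal{O}(\mathcal{N})$, and the union costs $\mathcal{O}(|D|)$.
\item \emph{Function calls with $k$ arguments.} The cost is charged to the $k$ argument operators, so the total stays linear.
\item \emph{\texttt{random}.} Generating a fresh symbol and updating $\mathbf{Dom}$ and $\mathbf{B}$ costs $\mathcal{O}(1)$.
\item \emph{\texttt{int}.} We check the three rules $C^1_{int}$, $C^2_{int}$, $C^3_{int}$ in order:
  \begin{itemize}
  \item $\mathrm{IsInt}$ scans the polynomial in $\mathcal{O}(\mathcal{N})$.
  \item $\mathrm{eval}(p,\mathbf{B})$ uses interval arithmetic per monomial, giving $\mathcal{O}(\mathcal{N})$ under bounded degree.
  \item For $C^3_{int}$, the implemented split takes $p$ to be the integer-valued monomials and $q$ the rest. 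This is one linear scan plus two interval evaluations, and the $\mathrm{NoCarry}$ check on two intervals is $\mathcal{O}(1)$.
  \end{itemize}
\end{itemize}
Summing over an expression tree, each operator node is charged once.

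\textbf{Statements.} Assignments and returns perform one environment update, costing $\mathcal{O}(1)$. A conditional costs more:
\begin{itemize}
\item Both branches run from the same input state. They are disjoint subprograms, so their operators are counted separately and the recursion is not doubled.
\item The join $\sqcup_{\mathrm{env}}$ compares all $|\mathcal{V}|+1$ entries. Each $\mathsf{Eq}_{\mathrm A}$ test on normalized polynomials costs $\mathcal{O}(\mathcal{N})$, and each fallback union costs $\mathcal{O}(|D|)$.
\item Computing $\mathrm{Diff}$ and adding the measurement symbols of the condition to $C_{\texttt{ctrl}}$ costs $\mathcal{O}(|\mathcal{V}|+|D|)$.
\end{itemize}
So a conditional costs $\mathcal{O}(|\mathcal{V}|\cdot(\mathcal{N}+|D|))$. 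The bounded-lookahead rule adds two evaluations of the return expression, each charged to that expression's operators, plus one $\sqcup_{\mathrm A}$.

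The final extraction of $\mathbf{Obs}$ and $\mathbf{M}_{\mathrm{nc}}$ costs $\mathcal{O}(|D|+|\mathcal{V}|)$. Summing all charges gives
$\mathcal{O}(n\mathcal{N}^2 + n|\mathcal{V}|(\mathcal{N}+|D|))$.

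\textbf{Main obstacle.} The difficulty is the join cost $|\mathcal{V}|\cdot\mathcal{N}$ per conditional, which is not literally a summand of the claimed bound. There are three ways to handle it.
\begin{itemize}
\item \emph{Preferred: copy-on-write environments.} If branches are analyzed with copy-on-write environments, only variables actually assigned inside a branch can differ. Pointer-equal entries are skipped in $\mathcal{O}(1)$. Each differing entry was written by some operator inside the branch, so its $\mathcal{O}(\mathcal{N})$ comparison is absorbed into the $\mathcal{O}(\mathcal{N}^2)$ charge of that write. The remaining per-variable work is then $\mathcal{O}(|D|)$, which yields the term $n|\mathcal{V}||D|$.
\item \emph{Fallback: absorb into $\mathcal{N}^2$.} Alternatively, note $|\mathcal{V}|\mathcal{N}\le |\mathcal{V}|\cdot|D|\cdot\mathcal{N}$ is not helpful in general. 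Instead, bound $|\mathcal{V}|\mathcal{N}$ by $\mathcal{N}^2+|\mathcal{V}|^2$. This would require arguing $|\mathcal{V}|\le|D|$ up to constants, which holds if $|D|$ is taken as the maximum over all symbol sets including $\Sigma_0$.
\item \emph{Last resort.} Otherwise, state the bound with this conservative normalization made explicit.
\end{itemize}
A second point to make explicit is the role of nested conditionals. Each conditional is itself counted as an operator in $n$, so its join is paid once per conditional. That keeps the total linear in $n$ rather than exponential.
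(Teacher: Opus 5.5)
\textbf{Genuine gap: the proof does not reach the stated bound.} Your decomposition is the same as the paper's. You charge per operator: $\mathcal{O}(\mathcal{N})$ for $\pm$ and \texttt{int}, $\mathcal{O}(\mathcal{N}^2)$ for $\times$, $\mathcal{O}(1)$ for \texttt{random}, $\mathcal{O}(|D|)$ for dependence-form fallbacks, and one join per conditional, with each count bounded by $n$. However, your accounting actually establishes only $\mathcal{O}(n\mathcal{N}^2+n|\mathcal{V}|(\mathcal{N}+|D|))$, and none of your three repairs removes the $n|\mathcal{V}|\mathcal{N}$ term.

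The copy-on-write charging breaks under nesting. When $\mathsf{Eq}_{\mathrm A}$ succeeds, $\rho_t\sqcup_{\mathrm{env}}\rho_e$ keeps $\rho_t(\texttt{x})$. So take $k$ conditionals nested in their then-branches, with every else-branch \texttt{skip}, and rewrite a polynomial in the innermost then-branch. That value survives every enclosing join as an object equal to, but pointer-distinct from, the else-side value. The same write is therefore re-compared at cost $\Theta(\mathcal{N})$ at all $k$ joins, not once.

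A concrete instance: let $m$ variables hold $\mathcal{N}$-monomial polynomials, and put $\texttt{x}_i := \texttt{x}_i + 0$ for every $i$ in the innermost branch. This costs $\Theta(km\mathcal{N})$ from $\mathcal{O}(k+m+\mathcal{N})$ operators, and no dependence form is ever created, so $|D|\le 1$. For $k\approx m\approx n/4$ and $1\ll\mathcal{N}\ll n$, the cost is $\Theta(n^2\mathcal{N})$, which is not $\mathcal{O}(n\mathcal{N}^2+n|\mathcal{V}||D|)$.

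The second repair redefines $|D|$ to include $\Sigma_0$, forcing $|D|\ge|\mathcal{V}|$. That changes the theorem rather than proving it. The third repair simply concedes.

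\textbf{How this relates to the paper, and how to close it.} The paper's proof does not resolve this point either. It charges each join $\mathcal{O}(|\mathcal{V}|\cdot|D|)$ for comparing all variables and unioning sets of size at most $|D|$, and never prices the $\mathsf{Eq}_{\mathrm A}$ tests; it tacitly treats them as constant time. The obstacle you identified is therefore real.

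The way to close it is to make that assumption explicit and true by choice of representation:
\begin{itemize}
\item Intern (hash-cons) every normalized polynomial and cache its symbol set when it is created. Under your unit-cost hashing and bounded-degree assumptions this costs $\mathcal{O}(\mathcal{N})$, charged to the creating operator, which already pays at least that much.
\item $\mathsf{Eq}_{\mathrm A}$ then becomes a pointer test.
\item Each union at a differing entry costs $\mathcal{O}(|D|)$. Both operand sets are contained in the resulting dependence set, whose size is at most $|D|$ by definition.
\end{itemize}
A join then costs $\mathcal{O}(|\mathcal{V}|\cdot|D|)$, using the harmless convention $|D|\ge 1$ to cover the scan over all entries. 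This is exactly the paper's per-conditional charge, and your summation yields the stated bound. Without such an assumption, the defensible worst-case bound is the $\mathcal{O}(n\mathcal{N}^2+n|\mathcal{V}|(\mathcal{N}+|D|))$ you derived.
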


\begin{corollary}\label{corollary:efficiency}
When $\max\big(\ \mathcal{N}, |\mathcal{V}|, |D|\ \big)$ is bounded by a constant value, the worst-case time complexity of the static analysis on the host program is $\mathcal{O} \big(n \big)$.

\end{corollary}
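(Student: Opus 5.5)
Corollary~\ref{corollary:efficiency} follows directly from Theorem~\ref{theorem:complexity}, which we take as given. The plan is to substitute the constant bound into the two summands of the worst-case cost $\mathcal{O}\!\left(n\cdot\mathcal{N}^2 + n\cdot|\mathcal{V}|\cdot|D|\right)$ and collapse the result.

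Fix a constant $K$, independent of $n$, with $\max(\mathcal{N},|\mathcal{V}|,|D|)\le K$.
\begin{itemize}
\item The first summand satisfies $n\cdot\mathcal{N}^2\le K^2 n$.
\item The second summand satisfies $n\cdot|\mathcal{V}|\cdot|D|\le K^2 n$.
\end{itemize}
Hence the total cost is at most $2K^2 n$. Since $2K^2$ does not depend on $n$, this is $\mathcal{O}(n)$.

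The one point needing care is the meaning of the hypothesis. The bound must be a constant uniform over the family of host programs considered, not a quantity that may grow with $n$; otherwise the hidden factor $K^2$ would not be absorbed. We will state this explicitly. We will also note that the constant hidden in the $\mathcal{O}$ of Theorem~\ref{theorem:complexity} (per-operator overhead such as the \texttt{int} case checks, $\mathrm{eval}$ over $\mathbf{B}$, and updates of $\mathbf{Dom}$ and $\mathbf{B}$) is already accounted for in that theorem. So no additional $n$-dependence enters.

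There is no real obstacle: the corollary is a direct specialization of Theorem~\ref{theorem:complexity}.
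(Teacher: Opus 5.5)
Your proposal is correct and matches the paper, which also derives the corollary directly from \cref{theorem:complexity} by substituting the constant bound. Your remark that the constant must be uniform in $n$ makes explicit an assumption the paper leaves implicit.
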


\begin{corollary}\label{corollary:efficiency-workflow}
Let $G$ be the number of gates in the input quantum circuit. The worst-case time complexity of the workflow in \cref{alg:host-aware-simpl} is $\mathcal{O}(n\cdot\mathcal{N}^2 +n\cdot|\mathcal{V}|\cdot|D| + G^2)$, and when  $\max\big(\ \mathcal{N}, |\mathcal{V}|, |D|\ \big)$ is bounded by a constant value, the complexity becomes $\mathcal{O}(n + G^2)$.
\end{corollary}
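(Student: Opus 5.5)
The statement to prove is the last one above, \cref{corollary:efficiency-workflow}. The plan is to split the workflow of \cref{alg:host-aware-simpl} into its two phases and bound each one separately. The first phase is the call to \textsc{find\_nc}: the host-side static analysis followed by extraction of $\mathbf{M}_{\mathrm{nc}}$. The second phase maps $\mathbf{M}_{\mathrm{nc}}$ to qubits through $\mu$ and sweeps the circuit to delete dead gates and grow the dead-qubit set backwards along causal cones. The total cost is the sum of the two phase costs.

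For the first phase, I invoke \cref{theorem:complexity} directly, which gives $\mathcal{O}(n\cdot\mathcal{N}^2 + n\cdot|\mathcal{V}|\cdot|D|)$ for computing $(\rho^\star, C_{\texttt{ctrl}}^\star)$. Extracting $\mathbf{M}_{\mathrm{nc}}$ from this state takes the following steps.
\begin{itemize}
\item Compute $\mathbf{Sym}(\rho^\star(\texttt{return}))$. This costs $\mathcal{O}(\mathcal{N}\cdot\kappa + |D|)$, where $\kappa$ is the number of symbols per monomial; I bound that symbol count by $\mathcal{N}$ or treat it as absorbed into the representation size.
\item Intersect the result with the measurement symbols and union with $C_{\texttt{ctrl}}^\star$. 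This costs $\mathcal{O}(|\mathcal{V}| + |D| + \mathcal{N})$, since $\mathbf{M}\subseteq\mathcal{V}$.
\item Take the complement within $\mathbf{M}$, which costs $\mathcal{O}(|\mathcal{V}|)$.
\end{itemize}
Assuming $n\ge 1$, these extraction terms are dominated by the terms already in the bound.

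For the second phase, first build the set of dead qubits via $\mu$ in $\mathcal{O}(|Q|)$ time. I will argue $|Q|\le G+|\mathcal{V}|$ (or simply assume every qubit is touched by some gate or measurement), so this term is absorbed. The sweep then visits the $G$ gates in reverse topological order. For each gate it checks whether every qubit that the gate's forward influence reaches is dead. A naive implementation recomputes the forward reach of a gate by scanning all later gates, which costs $\mathcal{O}(G)$ per gate. Alternatively, the algorithm may repeat full sweeps until no more gates are deleted; each repeat removes at least one gate, so there are at most $G$ sweeps of $\mathcal{O}(G)$ each. Either way the phase is $\mathcal{O}(G^2)$, and I will present the argument that matches the actual loop structure of \cref{alg:host-aware-simpl}. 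Multi-qubit gates contribute only a constant factor, assuming bounded gate arity.

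The main obstacle is making the $G^2$ bound honest with respect to the algorithm's fixpoint structure, namely showing the number of sweeps is at most $G$. The argument is that the dead-qubit set and the deleted-gate set grow monotonically, and each productive iteration deletes at least one gate. Once that is established, summing the two phases gives the first claim. The second claim follows by substituting constant bounds for $\mathcal{N}$, $|\mathcal{V}|$ and $|D|$, exactly as in \cref{corollary:efficiency}, which yields $\mathcal{O}(n+G^2)$.
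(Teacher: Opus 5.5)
Your decomposition matches the paper's. There the proof is one line: \cref{theorem:complexity} bounds the call to \textsc{find\_nc}, Theorem~4 of the dead-gate-elimination paper \cite{10.1007/978-3-031-97632-2_10} supplies the $\mathcal{O}(G^2)$ bound for the pruning loop, and the costs of extracting $\mathbf{M}_{\mathrm{nc}}$ and forming $\mathbf{Q}_{\text{dead}}$ are silently absorbed. The real difference is that you rederive the $\mathcal{O}(G^2)$ term from the loop of \cref{alg:host-aware-simpl} instead of citing it. That makes the proof self-contained, but only once the per-iteration cost is pinned down.

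Your iteration count is right. The flag $terminate$ is reset to \texttt{false} only when a gate is deleted, so the while loop runs at most $G+1$ times. What is missing is a bound of $\mathcal{O}(G)$ per iteration. The \textsc{IsDeadGate} checks on the frontier cost a constant each for bounded arity, so they are not the issue. The issue is the call to \textsc{CloseUnderCones} in every iteration, whose cost the paper never specifies. You must assume or show that it is a single backward pass over the remaining circuit. If it were itself a naive fixpoint, the total would become $\mathcal{O}(G^3)$. The same happens if you plug your first model (recomputing each gate's forward reach by scanning later gates, so $\mathcal{O}(G^2)$ per sweep) into the repeated loop. So drop the ``either way'' and commit to one argument: at most $G+1$ iterations times $\mathcal{O}(G)$ per iteration.

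A smaller point: bounding the number of symbols per monomial by $\mathcal{N}$ is not justified, because $\mathcal{N}$ counts monomials, not their support. Use instead the cost model implicit in \cref{theorem:complexity}, where each monomial operation has unit cost. Extraction is then $\mathcal{O}(\mathcal{N}+|D|+|\mathcal{V}|)$, which is dominated by the stated bound for $n\ge 1$.
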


\subsection{Levelized SSA for GPU Acceleration}\label{subsec:method-gpu-acc}

\cref{alg:host-aware-simpl} invokes \textsc{find\_nc} to compute $\mathbf{M}_{\mathrm{nc}}$, 
which, in this subsection, we discuss how to accelerate with GPUs. However, the challenge is, as confirmed later in \hyperref[par:why-lowering-matters]{the last report} in \cref{subsec:gpu-acc}, that directly applying GPU acceleration to the original host-side static analysis brings no practical benefit.

To expose parallel work, we lower the host program $H$ to an
SSA-style, levelized intermediate representation
$\widehat{H} = (\mathcal{L}_0,\mathcal{L}_1,\ldots,\mathcal{L}_{K-1})$,
where each instruction writes a fresh destination and each level $\mathcal{L}_k$ contains instructions that can be evaluated in static analysis in parallel.
Writing $\mathrm{Def}(\mathcal{L}_k)$ for the set of destinations
defined in level $\mathcal{L}_k$, and $\mathrm{Use}(s)$ for the
operands read by instruction $s$, the lowering satisfies
$\forall s \in \mathcal{L}_k:\mathrm{Use}(s) \subseteq \Sigma_0 \cup \bigcup_{j<k}\mathrm{Def}(\mathcal{L}_j)$.
Hence, two instructions in the same level never have read-after-write, write-after-read, or write-after-write hazards.

The host program $H$ is lowered in two steps.
First, it performs a semantics-preserving SSA/three-address expansion. Each assignment or return expression is decomposed into primitive operators already supported by our transfer functions, namely binary arithmetic, \texttt{int}$(\cdot)$, \texttt{random}$(\cdot,\cdot)$, conservative unknown calls, and branch merges. 
Every definition is rewritten to a fresh destination name.
Nested expressions are flattened so that each lowered instruction contains only one primitive operator, and the final returned value is still written to the distinguished variable $\texttt{return}$. 
For a structured conditional, we lower the two branches recursively and make the merge explicit by an instruction of the form
$u := \phi_{\mathrm{if}}(c,v_t,v_e)$,
where $c$ is the value of the condition, and $v_t,v_e$ are the values produced by the then/else branches.

Second, the resulting SSA program is levelized.
Let $\lambda(v)$ denote the level at which the SSA value $v$ becomes available.
For any SSA value whose current SSA version is not defined by another lowered instruction we set $\lambda(v)=-1$.
For each lowered instruction $s$, we then assign its level $\ell(s)$ by the earliest-ready rule
$\ell(s)=
1+\max\bigl(\{-1\}\cup\{\lambda(v)\mid v\in \mathrm{Use}(s)\}\bigr)$,
and define
$\mathcal{L}_k=\{\,s \mid \ell(s)=k\,\}$.

To make the levelized lowering concrete, see \cref{ex:lower-to-level-ssa}.
\begin{example}\label{ex:lower-to-level-ssa}
Consider the following host program $H$, where \(\texttt{q}, \texttt{r}, \texttt{b}\in\mathbf{M}\):
\[
\begin{array}{@{}l@{\qquad}r@{}}
\texttt{def prog(q,r,b):} & \\
\quad \texttt{x = q + b} & \\
\quad \texttt{eta = random(0.1, 0.5)} & \\
\quad \texttt{y = int(eta * (q - r))} & \\
\quad \texttt{q = q + 1} & \\
\quad \texttt{z = x - q - b} & \\
\quad \texttt{return y + z} & \\
\end{array}
\]
Lower this program to
$\widehat H = (\mathcal L_0,\mathcal L_1,\mathcal L_2,\mathcal L_3,\mathcal L_4)$.
\(\texttt q_0,\texttt r_0,\texttt b_0\) denote the incoming versions of the source variables,
and each assignment $s_i$ writes a fresh destination:
{\small
\[
\begin{aligned}
\mathcal L_0 = \{ &
s_1: x_1 := q_0 + b_0,
s_2: \eta_1 := \texttt{random}(0.1,0.5),
s_3: q_1 := q_0 + 1,
s_4: t_1 := q_0-r_0
\},\\[1mm]
\mathcal L_1 = \{ &
s_5: p_1 := \eta_1 \cdot t_1,\;
s_6: z_1 := x_1 - q_1
\},
\mathcal L_2 = \{ 
s_7: y_1 := \texttt{int}(p_1),\;
s_8: z_2 := z_1 - b_0
\},\\[1mm]
\mathcal L_3 = \{ &
s_9: ret_1 := y_1 + z_2
\,\},
\mathcal L_4 = \{\
s_{10}: \texttt{return} := ret_1
\,\}.
\end{aligned}
\]
}
The initial values here are the same as in those defined in \cref{subsec:abs-dom}, i.e., 
$\rho(q_0)=\langle q@0\rangle$,
$\rho(r_0)=\langle r@0\rangle$,
$\rho(b_0)=\langle b@0\rangle$.
The new names \(x_1,\eta_1,q_1,t_1,\ldots\) are only fresh SSA
destinations used by the implementation to avoid overwriting
abstract values in place.
For this lowered program, some level-definition sets are
$\mathrm{Def}(\mathcal L_0)=\{x_1,\eta_1,q_1,t_1\}$,
$\mathrm{Def}(\mathcal L_1)=\{p_1,z_1\}$,
and some use-sets are
$\mathrm{Use}(s_1)=\{q_0,b_0\}$,
$\mathrm{Use}(s_2)=\emptyset$,
$\mathrm{Use}(s_5)=\{\eta_1,t_1\}$.
Here \(\mathrm{Use}(s)\) could be read as the set of operands that must
already have values available before instruction \(s\) can be
evaluated. For example, \(s_5\) cannot be placed in \(\mathcal L_0\),
as it needs \(\eta_1\) and \(t_1\), both of which are only defined
after \(\mathcal L_0\). 
\end{example}

The GPU backend processes one level at a time:
for each $\mathcal{L}_k$, it launches $|\mathcal{L}_k|$ threads,
one thread per instruction, and synchronizes only between consecutive levels.
As destinations within a level are disjoint and every source operand comes from an earlier level, the result is independent of the intra-level execution order.

\textbf{Preservation of the lowered analysis.}
The GPU backend changes the representation and schedule of the host-side
analysis, but it should not change the analysis result. \Cref{app:gpu-proofs} gives more details in showing that the above lowering is semantics-preserving with respect to our host-side analysis.

\section{Evaluation}\label{sec:evaluation}
Our evaluation serves two complementary goals. 
\textbf{(i) Effectiveness in \cref{subsec:effect}}:
we evaluate whether the implemented host-aware optimization exposes circuit reductions missed by circuit-only optimizers.
We first run $3$ diagnostic workloads as smoke tests, then measure $24$ upstream-artifact-driven workloads across quantum chemistry, quantum optimization, quantum machine learning, and quantum finance, with four controls checking the expected dead-set behavior.
We compare our pass against Qiskit, t$\ket{\text{ket}}$, and PyZX, and compose it before and after each optimizer, to quantify the benefit beyond reach of optimizers not taking host programs into account.
We also compare against a syntactic host-liveness baseline, which distinguishes dead measurements detectable by syntactic liveness from those exposed only by our semantic reasoning.
\textbf{(ii) GPU acceleration in \cref{subsec:gpu-acc}}:
we isolate the host-side static analysis $\textsc{find\_nc}(\cdot,\cdot)$ and compare a sequential baseline against a CUDA backend on controlled synthetic host programs with tunable structural parallelism.
In addition, we verify that directly accelerating the original host-side static analysis on GPUs is ineffective, motivating the levelized-SSA lowering introduced in \cref{subsec:method-gpu-acc}.

\subsection{Effectiveness}\label{subsec:effect}

\hypertarget{passsettings}{\textbf{Settings.}}
We evaluate whether the host-side analysis exposes optimizations that are missed by circuit-only optimizers.  We use the following pipelines:
\begin{itemize}
    \item Pipeline \textbf{q}: Qiskit's \texttt{transpile} at optimization level~3, using its default target-agnostic configuration.
    \item Pipeline \textbf{t}: t$\ket{\text{ket}}$ backend-agnostic optimization sequence
    \texttt{DecomposeBoxes} $\rightarrow$
    \texttt{PeepholeOptimise2Q} $\rightarrow$
    \texttt{KAKDecomposition} $\rightarrow$
    \texttt{CommuteThroughMultis} $\rightarrow$
    \texttt{CliffordSimp} $\rightarrow$
    \texttt{RemoveRedundancies} $\rightarrow$
    \texttt{FullPeepholeOptimise}.  We invoke \texttt{CliffordSimp} with qubit permutations/SWAP introduction disabled when supported, so that this pipeline remains hardware agnostic.
    \item Pipeline \textbf{p}: PyZX's \texttt{full\_optimize}, which converts the circuit to a ZX graph, applies ZX rewrites, and extracts a simplified circuit.
    \item Pipeline \textbf{h}: our host-aware optimization.  It first runs the host-side dependency analysis to compute which measurement variables can affect the host return value.  Measurements outside this dependency set are mapped back to quantum wires, and the corresponding dead backward cones are removed.
    \item Pipeline \textbf{l}: a syntactic host-liveness baseline.  Following the standard backward live-variable/data-flow formulation for imperative programs~\cite{alfred2007compilers,kildall1973unified}, it starts from the host return expression and propagates liveness backward through assignments and branch guards.  
    The baseline is control-dependence-aware, but purely syntactic: it does not simplify arithmetic expressions, zero coefficients, equivalent branches, or bounded \texttt{int}/\texttt{floor} residuals. \cref{alg:syntactic-live} summarizes its implementation.
    \item Pipelines \textbf{qh}, \textbf{hq}, \textbf{th}, \textbf{ht}, \textbf{ph}, and \textbf{hp}: compositions of the above pipelines, where the concatenation ``XY'' means ``run X, then Y''.
\end{itemize}

\noindent\textbf{Smoke test.} We first run our method on three small but realistically designed diagnostic workloads. Its setup and results are in \cref{app:diagnostic-and-savings}.

The main effectiveness evaluation uses $24$ upstream-artifact-driven workloads, six each from quantum chemistry, quantum optimization, quantum machine learning, and quantum finance.  Each main workload is paired with a workload-specific quantum kernel derived from the MQTBench benchmark templates \cite{Quetschlich2023MQTBench} and stored as an OpenQASM.  
The host program consumes measurement variables produced by the quantum kernel and computes an application-level quantity, such as an energy estimate, a QUBO objective value, a classifier score, or a financial payoff/risk quantity.
The upstream artifacts model application-side preprocessing outputs like
screened Hamiltonian terms, symmetry-sector certificates, QUBO presolve reports, sparse trained-model weights, portfolio exposures, and risk-reporting
policies.  
A detailed description of the workloads is in \cref{app:workloads}.
These artifacts define the host-side computation, and our host-side analysis computes which measurement variables are semantically irrelevant to the final host return value, driving DGE to prune quantum circuits.

We additionally run $4$ control workloads.  
These controls are excluded from aggregate effectiveness numbers in \cref{tab:effect-aggregate}. They are used only to validate the analysis: two all-live negative controls, one positive control with an expected screened chemistry tail, and one zero-weight trap control in which a locally zero coefficient does not make the corresponding measurement dead because the same feature is still used by an auxiliary live term.

Before recording metrics, each resulting circuit is canonicalized to the common basis
$\{\texttt{rz},\texttt{sx},\texttt{x},\texttt{cx}\}$.
Here \texttt{rz} is a rotation around the $Z$ axis, \texttt{sx} is the square-root of the Pauli-$X$ gate, \texttt{x} is the Pauli-$X$ bit flip, and \texttt{cx} is controlled-$X$/CNOT. We use this basis only to canonicalize gate counts across toolchains, not as a hardware target model.
For each resulting circuit, we record the number of one-qubit gates, two-qubit gates, and circuit depth, excluding terminal measurements from the gate counts and depth.  We also record total-gate percentage reduction, where total gates are defined as the sum of one- and two-qubit gates.  The experiments run with Python~3.8.10 under WSL2 with Qiskit~0.46.3, \texttt{pytket}~1.3.0, PyZX~0.7.3.

\noindent\textbf{Results.}
\Cref{tab:effect-aggregate} summarizes the aggregate total-gate reductions on the $24$ main workloads.  Running our host-aware pass alone reduces total gate
count by $38.0\%$ on average.  More importantly, the host-side analysis remains effective when composed with standard circuit-only optimizers.  When applied
after the circuit optimizer, it yields an additional $31.6\%$ mean total-gate
reduction after Qiskit, $31.1\%$ after t$\ket{\text{ket}}$, and $31.1\%$ after
PyZX.  We also observe essentially the same reductions when the host-aware pass
is applied before Qiskit or t$\ket{\text{ket}}$ and the resulting circuit is
then optimized, with means of $31.6\%$ and $31.1\%$, respectively.  For PyZX, the pre-optimization composition is slightly less favorable but still gives a $30.8\%$ mean reduction relative to PyZX alone.  

\begin{table}[htb]
\centering
\small
\setlength{\tabcolsep}{4pt}
\caption{Aggregate total-gate reductions on the $24$ main effectiveness workloads. 
}
\label{tab:effect-aggregate}
\begin{tabular}{lcc}
\toprule
Comparison & Mean total-gate reduction & Range \\
\midrule
\textbf{original} $\rightarrow$ \textbf{h} & 37.98\% & 15.67--62.88\% \\
\textbf{q} $\rightarrow$ \textbf{qh} & 31.61\% & 3.88--65.62\% \\
\textbf{q} $\rightarrow$ \textbf{hq} & 31.61\% & 3.88--65.62\% \\
\textbf{t} $\rightarrow$ \textbf{th} & 31.09\% & 2.55--68.87\% \\
\textbf{t} $\rightarrow$ \textbf{ht} & 31.09\% & 2.55--68.87\% \\
\textbf{p} $\rightarrow$ \textbf{ph} & 31.10\% & 3.94--69.45\% \\
\textbf{p} $\rightarrow$ \textbf{hp} & 30.75\% & 0.73--69.45\% \\
\bottomrule
\end{tabular}
\end{table}

\Cref{fig:effect-lollipop} breaks down the post-optimizer reductions per workload.  The benefits appear across all four application families rather than being concentrated in a single domain.  
Chemistry and finance workloads exhibit more moderate but still consistently positive reductions, reflecting smaller or more localized dead backward cones than other workloads.

\begin{figure}[htb]
  \centering
  \includegraphics[width=\textwidth]{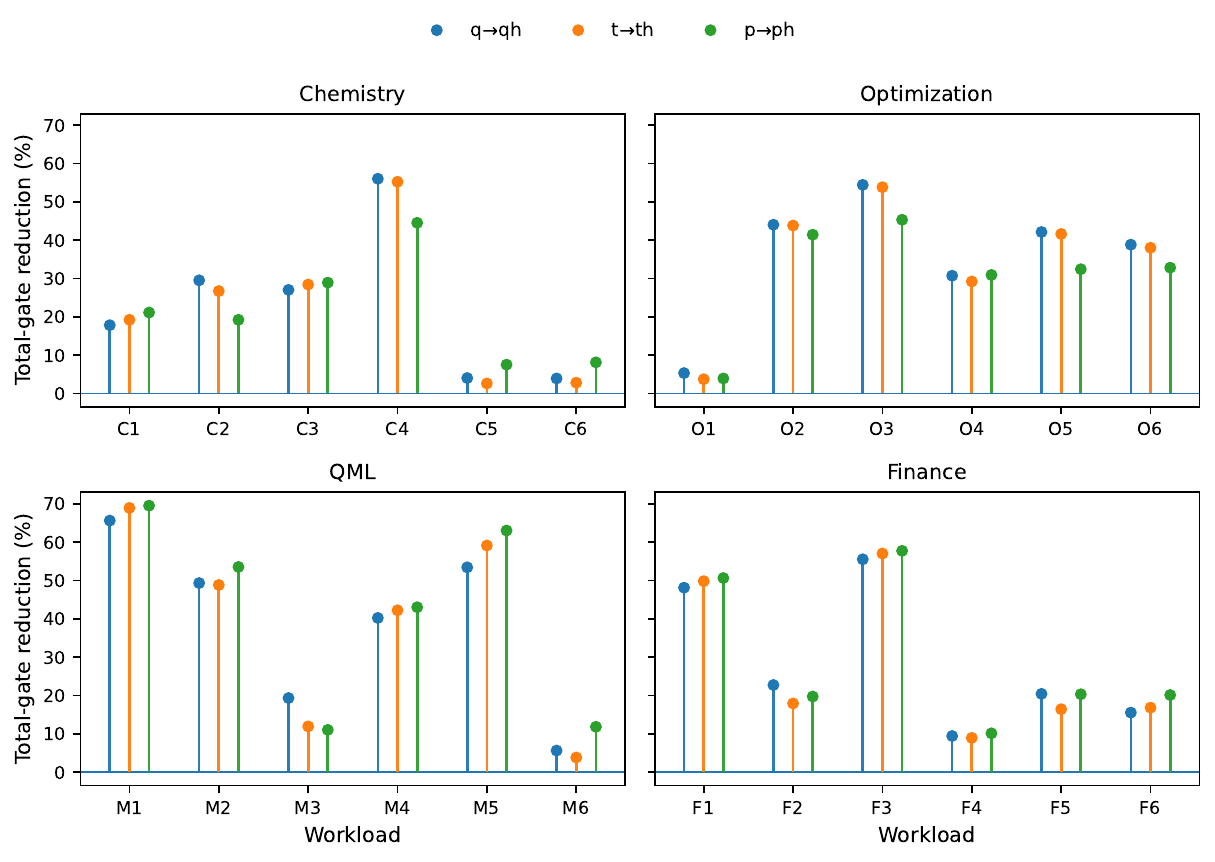}
  \caption{Per-workload total-gate reduction for the main effectiveness workloads.  For each workload, the figure reports the additional total-gate reduction obtained by composing host-side analysis with Qiskit (\textbf{q}$\rightarrow$\textbf{qh}), t$\ket{\text{ket}}$ (\textbf{t}$\rightarrow$\textbf{th}), and PyZX (\textbf{p}$\rightarrow$\textbf{ph}).  Workloads are grouped by application family.}
  \label{fig:effect-lollipop}
\end{figure}

As a side check closer to the novelty of the host analysis, we also compare against the syntactic liveness baseline \hyperlink{passsettings}{\textbf{l}}.  
To isolate the contribution of the host-side dead-measurement discovery from the circuit-pruning backend, we feed the dead measurements found by \hyperlink{passsettings}{\textbf{l}} and by our semantic analysis \hyperlink{passsettings}{\textbf{h}} into the same DGE backend.  
\Cref{tab:liveness-baseline} reports both the measurement-level dead sets and the resulting circuit-level reductions.  The baseline \hyperlink{passsettings}{\textbf{l}} identifies $21$ dead measurement variables and reduces total gate count by $5.06\%$ on average, while \hyperlink{passsettings}{\textbf{h}} identifies $92$ dead measurement variables and reduces total gate count by $37.98\%$ on average under the same DGE backend.  Hence $71$ dead measurements are semantic-only opportunities not recovered by the syntactic baseline.

\begin{table}[t]
\centering
\small
\setlength{\tabcolsep}{3pt}
\renewcommand{\arraystretch}{1.08}
\caption{Comparison between the syntactic host-liveness baseline \textbf{l} and the semantic host-side analysis \textbf{h} on the $24$ main workloads. Syn.-dead counts measurement variables detected by syntactic liveness; Sem.-dead counts measurement variables detected by our semantic analysis; Sem.-only is the additional dead set found only by semantic reasoning. DGE-\textbf{l} and DGE-\textbf{h} report mean total-gate reduction after feeding the corresponding dead-measurement set into the same DGE backend, starting from the original circuit.}
\begin{tabular}{lrrrrrrr}
\toprule
Domain & Workloads & Meas. & Syn.-dead & Sem.-dead & Sem.-only & DGE-\textbf{l} & DGE-\textbf{h} \\
\midrule
Chemistry & 6 & 62 & 17 & 25 & 8 & 11.05\% & 30.50\% \\
Optimization & 6 & 49 & 4 & 21 & 17 & 9.20\% & 47.03\% \\
QML & 6 & 65 & 0 & 23 & 23 & 0.00\% & 40.83\% \\
Finance & 6 & 73 & 0 & 23 & 23 & 0.00\% & 33.58\% \\
\midrule
\textbf{Total} & 24 & 249 & 21 & 92 & 71 & 5.06\% & 37.98\% \\
\bottomrule
\end{tabular}
\label{tab:liveness-baseline}
\end{table}
As shown in \cref{tab:control-validation}, both all-live negative controls produce an empty dead-measurement set and zero host-induced savings.  The positive control recovers exactly the expected screened measurements, while the zero-weight trap control also produces an empty dead set because the zero-weighted feature remains live through an auxiliary term.

\begin{table}[htb]
\caption{Control-workload validation.  Negative controls are expected to produce no dead measurements and no host-induced pruning.  The positive control checks that the analysis can recover a known screened dead set.}
\centering
\small
\setlength{\tabcolsep}{4pt}
\renewcommand{\arraystretch}{1.08}
\begin{tabular}{llcccc}
\toprule
Ctrl. & Role & Expected & Observed & Dead-set & No-pruning \\
\midrule
\texttt{CTRL1} & negative & $\emptyset$ & $\emptyset$ & pass & pass \\
\texttt{CTRL2} & negative & $\emptyset$ & $\emptyset$ & pass & pass \\
\texttt{CTRL3} & positive & $\{\texttt{m}_2,\texttt{m}_8,\texttt{m}_9,\texttt{m}_{10}\}$ & $\{\texttt{m}_2,\texttt{m}_8,\texttt{m}_9,\texttt{m}_{10}\}$ & pass & n/a \\
\texttt{CTRL4} & negative trap & $\emptyset$ & $\emptyset$ & pass & pass \\
\bottomrule
\end{tabular}
\label{tab:control-validation}
\end{table}

Overall, these results show that host-side semantic information is complementary to circuit-only optimization.  
Circuit optimizers can simplify the quantum circuit structure, but they do not know which measured values are semantically consumed by the host computation.  
Our analysis exposes this information and enables additional dead-gate elimination even after Qiskit, t$\ket{\text{ket}}$, and PyZX have optimized the same circuits.
Combined with the result from \cref{app:diagnostic-and-savings}, the best composition order is workload-dependent: in some cases applying \hyperlink{passsettings}{\textbf{h}} after a
circuit-only optimizer is better, in some cases applying \hyperlink{passsettings}{\textbf{h}} first exposes more downstream simplification, and in other cases the order has little or no effect.

\subsection{GPU acceleration of host-side analysis}\label{subsec:gpu-acc}
Our goal in this subsection is to quantify the practical effect of GPU acceleration on the host-side static analysis \textsc{find\_nc}$(\cdot,\cdot)$ (\cref{alg:find-nc}).
To this end, we compare a sequential baseline against a GPU backend over a family of synthetic host programs whose size, source-level parallel structure, and symbolic coupling can be varied independently.

\noindent\textbf{Setting.}
Each generated input is an imperative host program with exactly $|\mathbf{M}|$ measurement-bound inputs and $s_h$ source statements.
As illustrated in \cref{fig:generated-input-host-program-structure}, 
generation is organized around \emph{active chains}, each of which maintains its own current frontier value and advances mostly independently before the final reduction to \texttt{return}.
A structural-parallelism knob $p \in [0,1]$ determines the number of chains.
Given $|\mathbf{M}|$ and $s_h$, we compute
$n_{\max}=
\min\!\left(
|\mathbf{M}|,
\left\lfloor \frac{s_h+1}{2} \right\rfloor
\right),
n_{\mathrm{chain}}=
1+\left\lfloor p\,(n_{\max}-1)\right\rfloor $.
This reserves $n_{\mathrm{chain}}-1$ statements for the final reduction and leaves
$s_{\mathrm{body}} = s_h - (n_{\mathrm{chain}}-1)$
statements for the body.
Round by round, each chain with remaining budget emits one statement that overwrites one variable in its small working set.
Statements are sampled from the primitive operator family in \cref{sec:methods}; each non-constant operand is drawn from the same chain with probability $1-\chi$ and from a different chain with probability $\chi$, where $\chi \in [0,1]$ controls cross-chain reads.

The sequential baseline analyzes the original host program directly on CPU, in source order, using a serial implementation of the host-side analysis.
The GPU backend first lowers the host program on the CPU to the levelized SSA representation of \cref{subsec:method-gpu-acc} and then executes the resulting parallel analysis tasks on the GPU using CUDA.
Both executables are compiled with \texttt{nvcc}~12.8.93.
All experiments are run on the same node, where two NVIDIA H100 NVL devices are visible to the process.
For every design point, both implementations analyze exactly the same program.
Each point is instantiated by one deterministic source program from a fixed seed and timed over $10$ repetitions after $3$ warmup runs.
The reported running time of the GPU backend includes both the CPU-side lowering time and the GPU execution time.

\begin{figure}[htb]
\centering
\resizebox{0.95\columnwidth}{!}{%
\begin{tikzpicture}[
    >=Latex,
    font=\scriptsize,
    stmt/.style={
        draw,
        rounded corners=2pt,
        thick,
        align=center,
        fill=blue!8,
        minimum width=2.05cm,
        minimum height=0.48cm,
        inner xsep=2pt,
        inner ysep=0.8pt
    },
    merge/.style={
        draw,
        circle,
        thick,
        fill=red!8,
        minimum size=4.4mm,
        inner sep=0pt
    },
    retbox/.style={
        draw,
        rounded corners=2pt,
        thick,
        align=center,
        fill=red!10,
        minimum width=1.75cm,
        minimum height=0.52cm,
        inner xsep=2pt,
        inner ysep=0.9pt
    },
    lab/.style={font=\scriptsize, align=center},
    flow/.style={-Latex, thick},
    cross/.style={-Latex, thick, dashed, orange!85!black},
    dotslab/.style={font=\normalsize}
]

\def\xA{0.00}
\def\xB{2.35}
\def\xC{4.70}
\def\xD{7.05}

\def\yTop{-0.35}
\def\yR{-1.20}
\def\yRp{-2.05}
\def\yMid{-2.85}
\def\yF{-3.70}
\def\yFan{-4.00}
\def\yM{-4.55}
\def\yRet{-5.20}

\draw[decorate,decoration={brace,amplitude=3pt},thick]
(-0.55,\yTop) -- (7.60,\yTop)
node[midway,above=1pt] {$n_{\mathrm{chain}}$ active chains};

\node[lab, anchor=east] at (-0.92,\yTop-0.05) {$\vdots$};
\node[lab, anchor=east] at (-0.92,\yR) {round $r$};
\node[lab, anchor=east] at (-0.92,\yRp) {round $r+1$};
\node[lab, anchor=east] at (-0.92,\yMid) {$\vdots$};
\node[lab, anchor=east] at (-0.92,\yF) {frontier};

\node[dotslab] (d1t) at (\xA,\yTop-0.05) {$\vdots$};
\node[dotslab] (d2t) at (\xB,\yTop-0.05) {$\vdots$};
\node[dotslab] (d3t) at (\xC,\yTop-0.05) {$\vdots$};
\node[dotslab] (d4t) at (\xD,\yTop-0.05) {$\vdots$};

\node[stmt] (c11) at (\xA,\yR) {$a_1\!\leftarrow\!a_1\!+\!m_4$};
\node[stmt] (c21) at (\xB,\yR) {$a_2\!\leftarrow\!a_2\!-\!m_5$};
\node[stmt] (c31) at (\xC,\yR) {$a_3\!\leftarrow\!a_3\!+\!a_1$};
\node[stmt] (c41) at (\xD,\yR) {$a_4\!\leftarrow\!a_4\!+\!m_7$};

\node[stmt] (c12) at (\xA,\yRp) {$t_1\!\leftarrow\!\texttt{int}(\eta a_1)$};
\node[stmt] (c22) at (\xB,\yRp) {$t_2\!\leftarrow\!\texttt{foo}(a_2,m_1)$};
\node[stmt] (c32) at (\xC,\yRp) {$t_3\!\leftarrow\!\texttt{int}(a_3-m_2)$};
\node[stmt] (c42) at (\xD,\yRp) {$t_4\!\leftarrow\!\texttt{foo}(a_4,a_2)$};

\node[dotslab] (d1m) at (\xA,\yMid) {$\vdots$};
\node[dotslab] (d2m) at (\xB,\yMid) {$\vdots$};
\node[dotslab] (d3m) at (\xC,\yMid) {$\vdots$};
\node[dotslab] (d4m) at (\xD,\yMid) {$\vdots$};

\node[stmt] (c13) at (\xA,\yF) {$a_1\!\leftarrow\!a_1\!+\!t_1$};
\node[stmt] (c23) at (\xB,\yF) {$a_2\!\leftarrow\!t_2\!+\!a_2$};
\node[stmt] (c33) at (\xC,\yF) {$a_3\!\leftarrow\!t_3\!-\!a_3$};
\node[stmt] (c43) at (\xD,\yF) {$a_4\!\leftarrow\!a_4\!-\!t_4$};

\draw[flow] (d1t.south) -- (c11.north);
\draw[flow] (c11.south) -- (c12.north);
\draw[flow] (c12.south) -- (d1m.north);
\draw[flow] (d1m.south) -- (c13.north);

\draw[flow] (d2t.south) -- (c21.north);
\draw[flow] (c21.south) -- (c22.north);
\draw[flow] (c22.south) -- (d2m.north);
\draw[flow] (d2m.south) -- (c23.north);

\draw[flow] (d3t.south) -- (c31.north);
\draw[flow] (c31.south) -- (c32.north);
\draw[flow] (c32.south) -- (d3m.north);
\draw[flow] (d3m.south) -- (c33.north);

\draw[flow] (d4t.south) -- (c41.north);
\draw[flow] (c41.south) -- (c42.north);
\draw[flow] (c42.south) -- (d4m.north);
\draw[flow] (d4m.south) -- (c43.north);

\draw[cross]
(c11.east) to[bend right=35]
node[midway, fill=white, inner sep=1pt] {cross-read}
(c31.west);

\draw[cross]
(c42.west) to[bend right=18]
node[midway, fill=white, inner sep=1pt] {cross-read}
(c23.east);

\coordinate (f1) at (\xA,\yFan);
\coordinate (f2) at (\xB,\yFan);
\coordinate (f3) at (\xC,\yFan);
\coordinate (f4) at (\xD,\yFan);

\draw[flow] (c13.south) -- (f1);
\draw[flow] (c23.south) -- (f2);
\draw[flow] (c33.south) -- (f3);
\draw[flow] (c43.south) -- (f4);

\node[merge] (m12) at (1.175,\yM) {$+$};
\node[merge] (m34) at (5.875,\yM) {$+$};

\node[merge] (mret) at (5.875,\yRet) {$+$};
\node[retbox] (ret) at (8.00,\yRet) {\texttt{return}};

\draw[flow] (f1) |- (m12.150);
\draw[flow] (f2) |- (m12.30);

\draw[flow] (f3) |- (m34.150);
\draw[flow] (f4) |- (m34.30);

\draw[flow] (m34.south) -- (mret.north);
\draw[flow] (m12.south) -- ++(0,-0.45) -- (mret.west);

\draw[flow] (mret.east) -- (ret.west);

\end{tikzpicture}%
}
\caption{Illustration of active chains in input programs.}
\label{fig:generated-input-host-program-structure}
\end{figure}

\noindent\textbf{Results.}
Across all tested design points, the CUDA backend produced the same analysis result as the sequential baseline.
We first summarize the speedup landscape in (a) of \cref{fig:gpu-speedup-combined}.
Using the experimental setup described above, we sweep
$|\textbf{M}|\in\{32,64,128,256\}$ and
$s_h\in\{1024,2048,4096,8192\}$, and evaluate the  speedup as $T_{\mathrm{seq}}/T_{\mathrm{gpu}}$
for several representative source-level parallelism settings.
The trend is clearly visible:
GPU acceleration becomes stronger as either the number of measurement-bound inputs $|\textbf{M}|$ or the host-program size $s_h$
increases, and this trend is amplified by larger $p$.
This directly supports the intuition that, once lowering exposes sufficient structural parallelism in the host-side analysis, the analysis admits substantial parallel execution.

We next make the break-even behavior explicit in (b) of \cref{fig:gpu-speedup-combined}.
This figure uses the same experimental sweep, but now keeps a larger set of parallelism settings $p\in\{0,0.125,0.25,0.5,0.75,1.0\}$ and groups the results by the source host size $s_h$.
Each cell is still annotated by the measured  speedup,
but the background color indicates whether the GPU backend wins
($\mathrm{speedup}>1$) or loses ($\mathrm{speedup}<1$).
The resulting phase diagram shows a clear break-even boundary.
For the smallest host size ($s_h=1024$), the GPU backend requires moderate structural parallelism to become worthwhile.
As the source host size increases, this boundary shifts leftward.
By $s_h=4096$ or $8192$, the GPU backend already wins on all tested combinations.

\phantomsection
\label{par:why-lowering-matters}
\noindent\textbf{Why lowering matters.}
We also examine what happens if the host-side analysis is moved to the GPU directly on the original host program, without first applying the levelized SSA lowering of \cref{subsec:method-gpu-acc}.
We compare two implementations of the same host-side analysis:
the first, which we call the \emph{source-order CUDA implementation}, analyzes the original host program directly on GPUs;
the second remains the sequential baseline we used in our previous experiments. For each design point $(|\textbf{M}|, s_h)$, we try all tested values of the requested structural parallelism $p$ and record the best speedup over the sequential baseline observed for that design point.
Across all tested design points, the source-order CUDA implementation remains far below break-even: even the best observed speedup lies only in the range $0.026\times$--$0.035\times$.
This is expected: in its original form, the analysis exposes too little regular parallel work for SIMT execution, so kernel-launch overheads, irregular memory access, and synchronization costs dominate the useful computation.
Thus, useful GPU acceleration requires the lowering step of \cref{subsec:method-gpu-acc} to expose additional parallelism first.

\begin{figure}[t]
    \centering
    \begin{minipage}[t]{0.48\columnwidth}
        \centering
        \includegraphics[width=\linewidth]{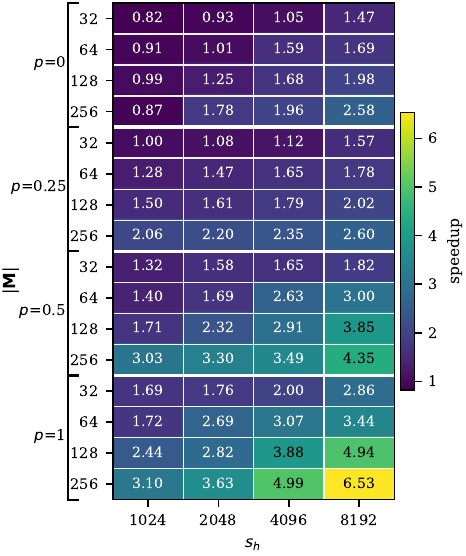}
        \caption*{(a) GPU speedup on host-side static analysis over the sequential baseline.}
    \end{minipage}
    \hfill
    \begin{minipage}[t]{0.51\columnwidth}
        \centering
        \includegraphics[width=\linewidth]{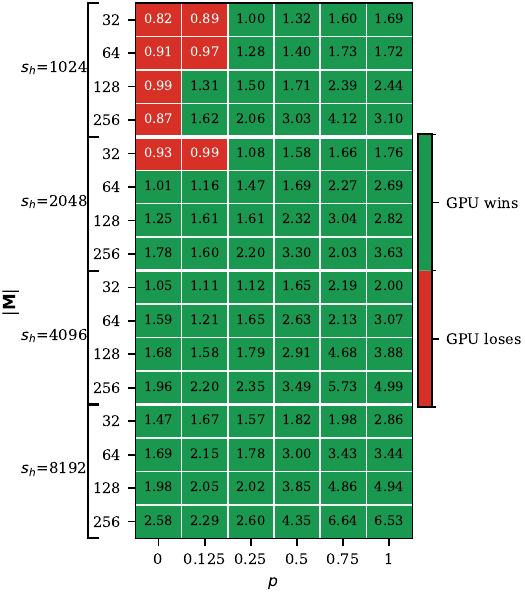}
        \caption*{(b) Break-even view of the GPU acceleration.}
    \end{minipage}
    \caption{GPU acceleration results for host-side static analysis.}
    \label{fig:gpu-speedup-combined}
\end{figure}


\section{Conclusion}\label{sec:conclusion}
We present a static analysis to identify non-contributory measurement outcomes in hybrid quantum-classical programs and drive DGE circuit optimization.
Our pass is orthogonal to existing circuit optimizers, yielding substantial circuit simplification both on its own and in composition with Qiskit, t$\ket{\text{ket}}$, and PyZX.
We further showed that, after lowering exposes sufficient level-wise parallelism, the host-side analysis admits an efficient GPU backend.

\clearpage

\appendix

\counterwithin{algocf}{section}
\counterwithin{theorem}{section}
\counterwithin{corollary}{section}
\counterwithin{lemma}{section}
\counterwithin{equation}{section}
\counterwithin{proposition}{section}

\section{Appendix}\label{sec:appendix}
\subsection{Evaluation on diagnostic workloads}\label{app:diagnostic-and-savings}
In addition to the application-faithful workloads in \cref{subsec:effect}, we also implement three small diagnostic workloads with VQE-style and QPE-style kernels. These workloads are intentionally
small and inspectable: the expected non-contributory measurements can be checked manually, which makes them useful for validating individual transfer rules, measurement-to-qubit mappings, and the end-to-end dead-gate elimination
pipeline. They also serve as lightweight smoke tests for reproducing our
artifact: before running the full application-faithful suite, one can run these
diagnostic instances to check that the host analysis, circuit optimizers,
canonicalization step, and pipeline compositions produce the expected behavior.
These workloads are not intended to establish workload representativeness, and
we therefore exclude them from the aggregate statistics and main effectiveness
claims in \Cref{subsec:effect}. We run the same experimental setup as in
\Cref{subsec:effect} and report the diagnostic results in
\Cref{table:resulting-circuits-metrics-toy,table:relative-optimization-toy}.

\textbf{Results.}
The results on our diagnostic workloads also give supportive evidence. Our host-aware pass \hyperlink{passsettings}{\textbf{h}} is effective on its own and yields orthogonal gains when composed with state-of-the-art optimizers (Qiskit, t$\ket{\text{ket}}$, PyZX).
As \cref{table:resulting-circuits-metrics-toy} shows, applying \hyperlink{passsettings}{\textbf{h}} already simplifies the circuits substantially: on \cref{fig:test-case-toy-example} it prunes all gates; on \cref{fig:test-case-vqe} it reduces the metrics from $(82,26,61)$ to $(12,0,6)$, eliminating all two-qubit gates; on \cref{fig:test-case-qpe} it shrinks these metrics from $(40,16,37)$ to $(31,9,31)$.
\Cref{table:relative-optimization-toy} further shows that inserting \hyperlink{passsettings}{\textbf{h}} around Qiskit/t$\ket{\text{ket}}$/PyZX yields large additional reductions over each optimizer’s standalone output.
Running \hyperlink{passsettings}{\textbf{h}} before the optimizer sometimes exposes more simplifications (e.g., $\hyperlink{passsettings}{\textbf{q}}\!\to\!\hyperlink{passsettings}{\textbf{hq}}$ achieves $(100\%,100\%,100\%)$ extra savings on \cref{fig:test-case-toy-example} and $(83\%,100\%,90\%)$ on \cref{fig:test-case-vqe}), while running \hyperlink{passsettings}{\textbf{h}} afterwards remains quite beneficial.
The only minor adverse effect is for t$\ket{\text{ket}}$ on \cref{fig:test-case-qpe} when applied after \hyperlink{passsettings}{\textbf{h}}$\,$: single-qubit gates increase by 13\%, but two-qubit gates drop by 40\%, a favourable trade-off. 
Overall, these results corroborate: our host-semantics-aware optimization prunes result-invariant gates that host-unaware optimizers cannot, and composing \hyperlink{passsettings}{\textbf{h}} with existing toolchains amplifies simplification beyond their reach.

\begin{figure}[htb]
    \centering
\begin{tikzpicture}[>=latex,node distance=1.5em]
  ... 

 \node[](q2)
 {
    \colorbox{blue!15}{
        \begin{minipage}{0.3\textwidth}
\begin{quantikz}[row sep=0.05cm, column sep=0.2cm]
        &\gate{H}&\targ{}   &\targ{}&\targ{}&\meter{}\rstick{$o_0 \in \{0, 1\}$}\\
      &&\ctrl{-1} &\ctrl{-1} &          &\meter{}\rstick{$o_1 \in \{0, 1\}$} \\
      &&&\ctrl{-2} &\ctrl{-2} &\meter{}\rstick{$o_2 \in \{0, 1\}$}
\end{quantikz}
      \end{minipage}
    }
 }; 
 \node[below= 0.2cm of q2](c3)
{
    \colorbox{pink}{
        \begin{minipage}{0.45\textwidth}
        The classical host program in \cref{fig:motivating-example-hybrid-program}
        \end{minipage}
    }
};

 \path[->]
(q2) edge [] node [right] {$\texttt{a, b, c} \gets o_0, o_1, o_2$} (c3);
  \begin{scope}[on background layer]
    \node[
      fit=(q2) (c3),
      inner sep=1pt,
      rounded corners=3pt,
      fill=yellow!15
    ] {};
  \end{scope}
\end{tikzpicture}
    \caption{Diagnostic workload one: a toy hybrid program similar to the \cref{fig:motivating-example-hybrid-program}.}
    \label{fig:test-case-toy-example}
\end{figure}

\begin{figure}[htb]
    \centering
\begin{tikzpicture}[>=latex,node distance=1.5em]
  ... 

 \node[](q2)
 {
    \colorbox{blue!15}{
        \begin{minipage}{0.35\textwidth}
\begin{quantikz}[row sep=0.05cm, column sep=0.1cm]
 & \targ{}      & \qw         & \targ{}      & \gate{R_z(0)}      & \gate{R_y(0)}      & \qw        & \ctrl{1} & \qw        & \targ{}     & \qw & \meter{}\rstick{$o_0$}\\
 & \ctrl{-1}    & \targ{}     & \qw          & \gate{R_z(0)}      & \gate{R_y(\frac{\pi}{4})}  &    & \control{} &    & \qw         & \targ{}& \meter{}\rstick{$o_1$}\\
 & \ctrl{-2}    & \ctrl{-1}   & \ctrl{-2}    & \gate{R_z(\frac{\pi}{4})}  & \gate{R_y(\frac{\pi}{4})}  & \qw        & \qw      & \qw        & \ctrl{-2}   & \ctrl{-1}& \meter{}\rstick{$o_2$}\\
 & \qw          & \ctrl{-2}   & \ctrl{-3}    & \gate{R_z(\frac{\pi}{4})}  & \gate{R_y(\frac{\pi}{4})}  & \qw        & \qw      & \qw        & \qw         & \ctrl{-2}& \meter{}\rstick{$o_3$}
\end{quantikz}
      \end{minipage}
    }
 }; 
 \node[below=0.2cm of q2](c3)
{
    \colorbox{pink}{
        \begin{minipage}{0.45\textwidth}
        \SetAlgoLined
            \SetNlSty{textbf}{}{:}
            \begin{algorithmic}
            \State $\texttt{vqe\_optimizer}(\texttt{a, b, c, d})$:
            \State \quad $\texttt{EZ, EX = fZ}(\texttt{c, d})\texttt{,} \texttt{fX}(\texttt{c, d})$       
            \State \quad $\texttt{updates = combine}(\texttt{EZ, EX})$         
            \State \quad \textbf{return} \texttt{updates} 
            \end{algorithmic}
       
        \end{minipage}
    }
};

 \path[->]
(q2) edge [] node [right] {$\texttt{a, b, c, d} \gets o_0, o_1, o_2, o_3$} (c3);
  \begin{scope}[on background layer]
    \node[
      fit=(q2) (c3),
      inner sep=1pt,
      rounded corners=3pt,
      fill=yellow!15
    ] {};
  \end{scope}
\end{tikzpicture}
    \caption{Diagnostic workload two: a VQE-style estimator/update kernel. 
    The Ansatz (blue) uses a fixed 3-qubit entangling backbone followed by a layer of parameterized single-qubit rotations and a CX/CCX correlation-shaping layer to boost expressivity. 
    The classical host (red) aggregates subsets of measurement shots into estimators (e.g., Z- and X-type energy terms), combines them into an objective, and updates the circuit parameters. 
    This pattern matches how VQE, QAOA, and related variational algorithms are implemented in practice.
    }
    \label{fig:test-case-vqe}
\end{figure}

\begin{figure}[htb]
    \centering
\begin{tikzpicture}[>=latex,node distance=1.5em]
  ... 

 \node[](q2)
 {
    \colorbox{blue!15}{
        \begin{minipage}{0.42\textwidth}

\begin{quantikz}[row sep=0cm, column sep=0.11cm]
\lstick{$q_0$} & \qw & \ctrl{4} & \qw
               & \gate[wires=4]{\mathrm{QFT}^{\dagger}_{f}}
               & \gate{R_3^{\dagger}} & \gate{R_2^{\dagger}} & \gate{R_1^{\dagger}} & \gate{H} & \meter{} \rstick{$o_0$} \\
\lstick{$q_1$} & \qw & \ctrl{3} & \qw
               & \ghost{\mathrm{QFT}^{\dagger}_{f}}
               & \qw & \qw & \ctrl{-1} & \qw & \meter{} \rstick{$o_1$} \\
\lstick{$q_2$} & \qw & \ctrl{2} & \qw
               & \ghost{\mathrm{QFT}^{\dagger}_{f}}
               & \qw & \ctrl{-2} & \qw & \qw & \meter{} \rstick{$o_2$} \\
\lstick{$q_3$} & \qw & \ctrl{1} & \qw
               & \ghost{\mathrm{QFT}^{\dagger}_{f}}
               & \ctrl{-3} & \qw & \qw & \qw & \meter{} \rstick{$o_3$} \\
\lstick{$|\psi\rangle$} & \qw & \gate{U^{2^0},U^{2^1},\ldots} & \qw
               & \qw
               & \qw & \qw & \qw & \qw & \qw
\end{quantikz}

      \end{minipage}
    }
 }; 
 \node[below=0.2cm of q2](c3)
{
    \colorbox{pink}{
        \begin{minipage}{0.45\textwidth}
        \SetAlgoLined
            \SetNlSty{textbf}{}{:}
            \begin{algorithmic}
            \State \textbf{define} $\texttt{proc\_qpe(a, b, c, d)}$:
            \State \quad $\texttt{k = 0.1, j = k; j = 0.1j}$
            \State \quad $\texttt{theta} = (\texttt{a + bk})\texttt{k + k}(\texttt{c + dk})\texttt{j}$          
            \State \quad $\texttt{l, t} = \texttt{theta}\cdot10\texttt{, int(l)}$
            \State \quad $\textbf{return } \texttt{l - t}$
            \end{algorithmic}
        \end{minipage}
    }
};

 \path[->]
(q2) edge [] node [right] {$\texttt{a, b, c, d} \gets o_0, o_1, o_2, o_3$} (c3);
  \begin{scope}[on background layer]
    \node[
      fit=(q2) (c3),
      inner sep=1pt,
      rounded corners=3pt,
      fill=yellow!15
    ] {};
  \end{scope}
\end{tikzpicture}
    \caption{Diagnostic workload three: a QPE kernel where the counting register $q_0,\ldots,q_{3}$ controls powers of a single–qubit phase $U(\varphi)=R_z(\varphi)$ applied to the work state $\ket{\psi}$; readout uses $QFT_{f}^{\dagger}$, whose top stage is shown explicitly as $R_3^{\dagger},R_2^{\dagger},R_1^{\dagger},H$ with $R_i^{\dagger}\equiv\operatorname{CP}(-\pi/2^{\,i})$. 
    The host then 
    executes the analyzing program in the red box.}
    \label{fig:test-case-qpe}
\end{figure}


\begin{table}[htb]
\centering
\caption{Gate counts and depth after applying set-up pipelines on the $3$ diagnostic hybrid workloads. Each entry is a triple $(\#\text{1q-gate},\,\#\text{2q-gate},\,\text{depth})$ measured on the resulting circuit. 
Lower values indicate more simplification.}

\begin{tabular}{l|c|c|c|c}
\toprule
\diagbox[width=6.5em]{Pipelines}{Workloads} & \cref{fig:test-case-toy-example} & \cref{fig:test-case-vqe} & \cref{fig:test-case-qpe}  & \\
\midrule
original circuit & $(16, 8, 19)$ & $(82,26,61)$ & $(40       ,16       ,37)$ &   \\
\hyperlink{passsettings}{\textbf{q}} & $(19, 7, 17)$ & $(59       ,25,       54)$  & $(40        ,9       ,28)$ &   \\
\hyperlink{passsettings}{\textbf{t}} & $(24, 7, 26)$ & $( 105       ,25       ,83)$ & $(53       , 10       ,36)$ &   \\
\hyperlink{passsettings}{\textbf{p}} & $(10, 6, 11)$ & $(79       ,32       ,69)$ & $(47,       15       ,24)$ &   \\
\hyperlink{passsettings}{\textbf{h}} & $(0, 0, 10)$ & $(12        ,0        ,6)$ & $(31        ,9       ,31)$  &   \\
\hyperlink{passsettings}{\textbf{qh}} & $(3, 2, 14)$ & $(17        ,6       ,16)$ & $(34        ,5       ,21)$ &   \\
\hyperlink{passsettings}{\textbf{\textbf{h}q}} & $(0,0,0)$  & $(10        ,0        ,5)$ & $( 31,        5       ,22)$ & \\
\hyperlink{passsettings}{\textbf{t\textbf{h}}} & $(3,2,4)$  & $(42        ,6       ,41)$ &  $(41        ,6       ,28)$  & \\
\hyperlink{passsettings}{\textbf{\textbf{h}t}} & $(0,0,0)$  & $( 12        ,0        ,6)$ & $(60,        6       ,39)$  & \\
\hyperlink{passsettings}{\textbf{p\textbf{h}}} & $(3,2,4)$ & $(18        ,2       ,12)$ & $(33,        8       ,22)$ & \\
\hyperlink{passsettings}{\textbf{\textbf{h}p}} & $(0,0,0)$ & $(18        ,0        ,9)$ & $(32        ,8       ,21)$ & \\
\bottomrule

\end{tabular}
\label{table:resulting-circuits-metrics-toy}
\end{table}

\begin{table}[htb]
\centering
\caption{
Percentage additional savings on diagnostic workloads by integrating our host-aware pass \textbf{h} with standard circuit optimizers.
For the row $X\! \rightarrow\! YZ$, each entry $(\Delta\text{1q-gate},\,\Delta\text{2q-gate},\,\Delta\text{depth})$ is 
$100\times\frac{\text{metric after }X-\text{metric after }YZ}{\text{metric after }X}$ 
with the metric being \#1-qubit gates, \#2-qubit gates, and depth.
Positive values indicate further reduction; negatives indicate regressions.
}

\begin{tabular}[htb]{l|c|c|c|c}
\toprule
\diagbox[width=6.5em]{Compare}{Workloads} & \cref{fig:test-case-toy-example} & \cref{fig:test-case-vqe} & \cref{fig:test-case-qpe}  & \\
\midrule
$\hyperlink{passsettings}{\textbf{q}}\rightarrow\hyperlink{passsettings}{\textbf{qh}}$ & $( 84       ,71         ,76)$ & $(71       ,76         ,70)$  & $( 15       ,44         ,25)$ &   \\
$\hyperlink{passsettings}{\textbf{q}}\rightarrow\hyperlink{passsettings}{\textbf{hq}}$ & $(100      ,100        ,100)$ & $(83,      100         ,90)$ & $(22       ,44         ,21)$ &   \\
$\hyperlink{passsettings}{\textbf{t}}\rightarrow\hyperlink{passsettings}{\textbf{th}}$ & $( 87       ,71         ,84)$ & $(60       ,76         ,50)$ & $(22       ,40        ,22)$ &   \\
$\hyperlink{passsettings}{\textbf{t}}\rightarrow\hyperlink{passsettings}{\textbf{ht}}$ & $(100      ,100       ,100)$ & $(88      ,100         ,92)$ & $(-13       ,40         ,-8)$  &   \\
$\hyperlink{passsettings}{\textbf{p}}\rightarrow\hyperlink{passsettings}{\textbf{ph}}$ & $(70      ,66        ,63)$ & $(77       ,93         ,82)$ & $( 29       ,46          ,8)$ &   \\
$\hyperlink{passsettings}{\textbf{p}}\rightarrow\hyperlink{passsettings}{\textbf{hp}}$ & $(100,      100,        100)$  & $(77      ,100         ,87)$ & $(31       ,46         ,12)$ & \\
\bottomrule

\end{tabular}
\label{table:relative-optimization-toy}
\end{table}
\clearpage
\subsection{Pseudo-code implementations}\label{app:helper-proc}
In \cref{alg:init-abs-state}, \textsc{InitAbsState} creates the initial abstract environment and
marks measurement-bound variables as binary inputs. In \cref{alg:analyze-block}, \textsc{AnalyzeBlock}
performs one forward abstract execution of the host block, using the expression transfer rules in \cref{subsec:abs-eval} and the statement effects in \cref{subsec:abs-effect}.
In \cref{alg:find-nc},
\textsc{find\_nc} further uses the outcome of \cref{alg:init-abs-state,alg:analyze-block} to compute a set of non-contributory measurement outcomes.

\cref{alg:syntactic-live,alg:live-block} describe a standard backward live-variable analysis over the host program, augmented only with conservative control dependence through branch guards.  The operator $\mathrm{Vars}(e)$ returns variables syntactically occurring in $e$; it does not perform algebraic simplification.  Thus, for example, $\mathrm{Vars}(0\cdot\texttt{m})=\{\texttt{m}\}$ and equivalent branch returns are not identified as equal. 

In \cref{alg:host-aware-simpl}, $Q$ denotes the set of qubits. The call $C_{\text{opt}}.\textbf{frontier}()$ returns the current output-side frontier: the operators adjacent to terminal measurements. The procedure \textsc{CloseUnderCones}$(C_{\text{opt}},\mathbf{Q}_{\text{dead}})$ propagates deadness backward through the remaining circuit: whenever all output-side wire segments of a operator lead only to qubits in $\mathbf{Q}_{\text{dead}}$, the input-side wire segments are also marked dead.

\begin{algorithm}[htb]
\caption{Host-aware circuit simplification}
\label{alg:host-aware-simpl}
\KwData{$C \in \textsc{circuits}$, host code $H$, measurement map $\mu$}
\KwResult{$C_{\text{opt}}$}
$C_{\text{opt}} \gets C$,\quad $terminate \gets \texttt{false}$\;
$\mathbf{M}_{\text{nc}}\! \gets\! \textsc{find\_nc}(H,\mu)$, 
$\mathbf{Q}_{\text{dead}}\! \gets\! \{ q \mid\! \mu(q)\in \mathbf{M}_{\text{nc}} \}$\;
\While{$terminate \neq \texttt{true}\ \land\
       \emptyset \neq \mathcal{F}_C \gets C_{\text{opt}}.\textbf{frontier}()$}{
  $terminate \gets \texttt{true}$\;
  \For{$g \in \mathcal{F}_C$}{
    \If{\textsc{IsDeadGate}\,($g,\mathbf{Q}_{\text{dead}}$)}{
      $C_{\text{opt}} \gets C_{\text{opt}} - g$,\quad
      $terminate \gets \texttt{false}$\;
    }
  }
  $\mathbf{Q}_{\text{dead}} \gets \textsc{CloseUnderCones}(C_{\text{opt}},\mathbf{Q}_{\text{dead}})$\;
}
\Return $C_{\text{opt}}$\;
\end{algorithm}

\begin{algorithm}[htb]
\caption{\textsc{find\_nc}: detecting non-contributory measurements}
\label{alg:find-nc}
\KwData{host code $H$, measurement map $\mu:Q\rightharpoonup\mathcal{V}$}
\KwResult{non-contributory measurement-bound variables $\mathbf{M}_{\mathrm{nc}}$}

$\mathbf{M}\gets \{\mu(q)\mid q\in\mathrm{dom}(\mu)\}$\;
$(\rho_0,\mathbf{Dom},\mathbf{B})\gets
\textsc{InitAbsState}(H,\mathbf{M})$ \tcp*{See \cref{alg:init-abs-state}} 

$(\rho^\star,C_{\texttt{ctrl}}^\star)
\gets
\textsc{AnalyzeBlock}(H,\rho_0,\varnothing,\mathbf{M})$
 \tcp*{See \cref{alg:analyze-block}}

$\mathbf{Obs}^\star
\gets
\bigl(
\mathbf{Sym}(\rho^\star(\texttt{return}))
\cap
\{\texttt{m}@0\mid \texttt{m}\in\mathbf{M}\}
\bigr)
\cup
\{\texttt{m}@0\mid \texttt{m}\in C_{\texttt{ctrl}}^\star\}$\;

$\mathbf{M}_{\mathrm{nc}}
\gets
\{\texttt{m}\in\mathbf{M}
\mid
\texttt{m}@0\notin\mathbf{Obs}^\star\}$\;

\Return $\mathbf{M}_{\mathrm{nc}}$\;
\end{algorithm}

\begin{algorithm}[htb]
\caption{\textsc{InitAbsState}: initializing the abstract state}
\label{alg:init-abs-state}
\KwData{host code $H$, measurement-bound variables $\mathbf{M}$}
\KwResult{initial abstract state $(\rho_0,\mathbf{Dom},\mathbf{B})$}

$\mathcal{V}\gets\mathrm{Vars}(H)$,\quad
$\Sigma_0\gets\{\texttt{x}@0\mid \texttt{x}\in\mathcal{V}\}$\;

\ForEach{$\texttt{x}\in\mathcal{V}$}{
    $\rho_0(\texttt{x})\gets\langle\texttt{x}@0\rangle$\;
    $\mathbf{Dom}(\texttt{x}@0)\gets\textsc{InitDom}(\texttt{x},H)$, 
    $\mathbf{B}(\texttt{x}@0)\gets\textsc{InitBound}(\texttt{x},H)$\;
}

$\rho_0(\texttt{return})\gets\bot$\;

\ForEach{$\texttt{m}\in\mathbf{M}$}{
    $\mathbf{Dom}(\texttt{m}@0)\gets\textbf{bin}$,\quad
    $\mathbf{B}(\texttt{m}@0)\gets[0,1]$\;
}

\Return $(\rho_0,\mathbf{Dom},\mathbf{B})$\;
\end{algorithm}

\begin{algorithm}[htb]
\caption{\textsc{AnalyzeBlock}: abstract execution of a host block}
\label{alg:analyze-block}
\KwData{statement block $S_b$, abstract state $\rho$, control set $C_{\texttt{ctrl}}$,
measurement-bound variables $\mathbf{M}$}
\KwResult{updated abstract state $(\rho,C_{\texttt{ctrl}})$}

\ForEach{$S \in S_b$}{
    \uIf{$S \equiv \texttt{x := } e$}{
        $\rho \gets \rho \oplus \{\texttt{x}\mapsto [\![e]\!]^\sharp_\rho\}$\;
    }
    \uElseIf{$S;S' \equiv
    \texttt{if (cond) then }S_t\texttt{ else }S_e\texttt{; return }e$ $\wedge$
    $S_t,S_e$ contain no \texttt{return}}{
        $\rho_{\mathrm{in}}\gets\rho$\;
        $(\rho_t,C_{\texttt{ctrl}}^t)
        \gets
        \textsc{AnalyzeBlock}(S_t,\rho_{\mathrm{in}},C_{\texttt{ctrl}},\mathbf{M})$\;
        $(\rho_e,C_{\texttt{ctrl}}^e)
        \gets
        \textsc{AnalyzeBlock}(S_e,\rho_{\mathrm{in}},C_{\texttt{ctrl}},\mathbf{M})$\;
        $R_t\gets[\![e]\!]^\sharp_{\rho_t}$,\quad
        $R_e\gets[\![e]\!]^\sharp_{\rho_e}$\;
        $\rho\gets
        (\rho_t\sqcup_{\mathrm{env}}\rho_e)
        \oplus\{\texttt{return}\mapsto R_t\sqcup_{\mathrm A}R_e\}$\;
        $C_{\texttt{ctrl}}\gets C_{\texttt{ctrl}}^t\cup C_{\texttt{ctrl}}^e$\;
        \If{$\neg\mathsf{Eq}_{\mathrm A}(R_t,R_e)$}{
            $C_{\texttt{ctrl}}\gets C_{\texttt{ctrl}}
            \cup
            \{\texttt{m}\in\mathbf{M}
            \mid
            \texttt{m}@0\in\mathbf{Sym}([\![\texttt{cond}]\!]^\sharp_{\rho_{\mathrm{in}}})\}$\;
        }
        \Return $(\rho,C_{\texttt{ctrl}})$\;
    }
    \uElseIf{$S \equiv \texttt{return } e$}{
        $\rho \gets \rho \oplus \{\texttt{return}\mapsto [\![e]\!]^\sharp_\rho\}$\;
        \Return $(\rho,C_{\texttt{ctrl}})$\;
    }
    \uElseIf{$S \equiv \texttt{if (cond) then }S_t\texttt{ else }S_e$}{
        $\rho_{\mathrm{in}}\gets\rho$\;
        $(\rho_t,C_{\texttt{ctrl}}^t)
        \gets
        \textsc{AnalyzeBlock}(S_t,\rho_{\mathrm{in}},C_{\texttt{ctrl}},\mathbf{M})$\;
        $(\rho_e,C_{\texttt{ctrl}}^e)
        \gets
        \textsc{AnalyzeBlock}(S_e,\rho_{\mathrm{in}},C_{\texttt{ctrl}},\mathbf{M})$\;

        $\rho\gets \rho_t\sqcup_{\mathrm{env}}\rho_e$\;
        $C_{\texttt{ctrl}}\gets C_{\texttt{ctrl}}^t\cup C_{\texttt{ctrl}}^e$\;

        \If{$\exists\texttt{x}\in\mathcal{V}\cup\{\texttt{return}\}:
        \neg\mathsf{Eq}_{\mathrm A}(\rho_t(\texttt{x}),\rho_e(\texttt{x}))$}{
            $C_{\texttt{ctrl}}\gets C_{\texttt{ctrl}}
            \cup
            \{\texttt{m}\in\mathbf{M}
            \mid
            \texttt{m}@0\in\mathbf{Sym}([\![\texttt{cond}]\!]^\sharp_{\rho_{\mathrm{in}}})\}$\;
        }
    }
}
\Return $(\rho,C_{\texttt{ctrl}})$\;
\end{algorithm}

\begin{algorithm}[htb]
\caption{\textsc{SyntacticLive}: syntactic host-liveness baseline}
\label{alg:syntactic-live}
\KwData{host block $H$, measurement-bound variables $\mathbf{M}$, measurement map $\mu:Q\rightharpoonup\mathcal{V}$}
\KwResult{syntactically non-contributory measurement variables $\mathbf{M}_{\mathrm{syn}}$}

$\mathbf{L}\gets\textsc{LiveBlock}(H,\varnothing)$ \tcp{Backward analysis in \cref{alg:live-block}}
$\mathbf{M}_{\mathrm{syn}}\gets\{\texttt{m}\in\mathbf{M}\mid \texttt{m}\notin\mathbf{L}\}$\;
\Return $\mathbf{M}_{\mathrm{syn}}$\;
\end{algorithm}

\begin{algorithm}[htb]
\caption{\textsc{LiveBlock}: backward syntactic transfer}
\label{alg:live-block}
\KwData{statement block $S_b$, live set $\mathbf{L}$}
\KwResult{live variables before $S_b$}

\ForEach{$S\in\mathrm{reverse}(S_b)$}{
    \uIf{$S\equiv\texttt{return }e$}{
        $\mathbf{L}\gets\mathrm{Vars}(e)$\;
    }
    \uElseIf{$S\equiv\texttt{x := }e$}{
        \If{$\texttt{x}\in\mathbf{L}$}{
            $\mathbf{L}\gets(\mathbf{L}\setminus\{\texttt{x}\})\cup\mathrm{Vars}(e)$\;
        }
    }
    \uElseIf{$S\equiv\texttt{if (cond) then }S_t\texttt{ else }S_e$}{
        $\mathbf{L}_t\gets\textsc{LiveBlock}(S_t,\mathbf{L})$,
        $\mathbf{L}_e\gets\textsc{LiveBlock}(S_e,\mathbf{L})$\;
        $\mathbf{L}\gets\mathbf{L}_t\cup\mathbf{L}_e$\;
        \If{$\textsc{MayAffectLive}(S_t,S_e,\mathbf{L})$}{
            $\mathbf{L}\gets\mathbf{L}\cup\mathrm{Vars}(\texttt{cond})$\;
        }
    }
}
\Return $\mathbf{L}$\;
\end{algorithm}
\clearpage
\subsection{Application-faithful workload descriptions}\label{app:workloads}
\paragraph{Quantum chemistry workloads.}
These workloads model VQE-style molecular-estimator post-processing.  
The quantum kernels prepare parameterized Ansatz states and measure Pauli-style observable groups.  
The host programs map measured bits to Pauli
eigenvalue variables, aggregate weighted Pauli products, and return an energy or selected molecular-property estimate.  
Such host-side Hamiltonian averaging and measurement aggregation are standard in VQE and near-term quantum chemistry workflows
\cite{Peruzzo2014VQE,McClean2016VQE,McArdle2020QChem}.

\begin{description}
\item[\textbf{C1-pauli}.]
This workload represents a VQE energy-estimation step after Hamiltonian
coefficient thresholding.  The quantum kernel is an irregular RealAmplitudes
VQE-style ansatz followed by measurements for Pauli-estimator groups.  The host
program converts each measurement bit into a Pauli eigenvalue and computes a
weighted sum of Pauli products.  The upstream chemistry artifact contains a
screened low-magnitude Hamiltonian tail term; therefore, the measurements used
only by that screened term no longer affect the returned energy.  This models
the common situation in which a chemistry workflow discards small Hamiltonian
coefficients or measurement terms to reduce estimator cost
\cite{Peruzzo2014VQE,McClean2016VQE,Verteletskyi2020Measurement}.

\item[\textbf{C2-symmetry}.]
This workload models an estimator that uses known symmetry or tapering-sector
information.  The quantum kernel is an EfficientSU2/two-local VQE-style ansatz
with nonuniform measurement groups.  The host program first applies upstream
constant replacements for measurements whose Pauli eigenvalues are fixed by a
symmetry-sector certificate, and then evaluates the remaining energy terms.
The quantum and host sides cooperate as follows: the quantum circuit still
produces measurement variables, but the application-level host computation
knows that some variables are fixed by the physical sector and therefore
overwrites them with constants.  This reflects symmetry-based qubit reduction
and tapering techniques used in fermionic Hamiltonian simulation
\cite{Bravyi2017Tapering,McArdle2020QChem}.

\item[\textbf{C3-active-space}.]
This workload models active-space chemistry with frozen-orbital constants.
The quantum kernel is an active-space VQE-style ansatz with an irregular
tail corresponding to frozen or inactive orbital lines.  The host program
aggregates active-space Pauli terms but replaces selected frozen-orbital
measurements by constants supplied by the upstream active-space artifact.
Thus, the host return value depends only on the active-space portion of the
measurement record.  This reflects standard chemistry preprocessing in which
inactive orbitals are treated classically while only the active space is
handled by the quantum computation
\cite{McArdle2020QChem,QiskitNatureFreezeCore}.

\item[\textbf{C4-screened-obs}.]
This workload represents grouped observable evaluation after screening
low-weight Pauli blocks.  The quantum kernel contains heterogeneous grouped
VQE observable blocks.  The host program forms a molecular estimator by
summing the retained observable groups.  The upstream artifact records that
one low-weight observable block was removed by application-level screening,
so measurements used only by that block become semantically irrelevant.  This
captures the common VQE practice of grouping and filtering observable terms
to reduce measurement overhead
\cite{Verteletskyi2020Measurement,McClean2016VQE}.

\item[\textbf{C5-property-report}.]
This workload models a shared quantum estimator used for multiple molecular
properties, where the application reports only a selected property subset.
The quantum kernel measures heterogeneous observable blocks that could support
several property reports.  The host program aggregates only the observable
blocks requested by the current report.  A block corresponding to an unreported
property is still present in the shared quantum kernel, but its measurements
do not contribute to the host return value.  This is application-faithful for
chemistry workflows that reuse estimator templates across related observables
while reporting only selected quantities
\cite{McArdle2020QChem,Verteletskyi2020Measurement}.

\item[\textbf{C6-lowrank-tail}.]
This workload models a low-rank chemistry estimator with a screened tail
factor.  The quantum kernel follows an irregular VQE-style estimator template
with nonuniform groups corresponding to retained and tail factors.  The host
program computes the estimator by summing the retained low-rank terms.  The
upstream artifact records that a small low-rank tail factor was removed, so
measurements used only by that tail factor are dead with respect to the final
energy estimate.  This reflects low-rank Hamiltonian decompositions and
truncation strategies used to reduce the resource cost of electronic-structure
simulation
\cite{Motta2021LowRank,McArdle2020QChem}.
\end{description}

\paragraph{Quantum optimization workloads.}
These workloads model QAOA/QUBO-style workflows.  
The quantum kernels prepare
candidate binary assignments using irregular QAOA-style circuits.  The host
program interprets measurement bits as decision variables and evaluates an
application objective, typically a weighted QUBO expression.  The upstream
artifacts model presolve, component selection, fixed variables, disabled
constraints, or scenario filtering.  QAOA and QUBO formulations are standard
near-term optimization models
\cite{Farhi2014QAOA,Glover2019QUBO}.

\begin{description}
\item[\textbf{O1-inactive-var}.]
This workload models a QUBO objective after preprocessing removes all
objective terms incident to one decision variable.  The quantum kernel is a
QAOA/QUBO circuit that still measures all candidate variables.  The host
program evaluates the preprocessed objective by summing the nonzero weighted
edges.  The inactive variable may still be measured by the quantum kernel, but
because all of its incident objective coefficients are zero in the upstream
presolve artifact, it cannot affect the returned score.  This reflects optimization pipelines in which classical presolve removes inactive variables before objective evaluation
\cite{Farhi2014QAOA,Glover2019QUBO}.

\item[\textbf{O2-component-select}.]
This workload models a disconnected QUBO instance where the application reports
only a selected connected component.  The quantum kernel prepares and measures
variables for both the selected component and an omitted component.  The host
program evaluates only the selected component's objective edges.  Measurements
belonging solely to the unreported disconnected component are irrelevant to
the returned objective value.  This captures the application pattern in which
a decomposed optimization model is solved or reported component-wise
\cite{Farhi2014QAOA,Glover2019QUBO}.

\item[\textbf{O3-penalty-filter}.]
This workload represents a QUBO objective with disabled penalty or constraint
blocks.  The quantum kernel measures variables that would support both the
main objective and several penalty terms.  The host program evaluates the
objective using the active upstream penalty configuration.  Penalty blocks
with zero weights are excluded from the returned score, so variables used only
inside disabled blocks do not influence the host return value.  This models
application workflows in which constraints, soft penalties, or regularizers
are conditionally enabled or disabled by a model export
\cite{Glover2019QUBO,Farhi2014QAOA}.

\item[\textbf{O4-fixed-vars}.]
This workload models QUBO presolve with fixed decision variables.  The quantum
kernel measures a full candidate assignment, but the upstream artifact records
that selected variables have been fixed by classical preprocessing.  The host
program first overwrites those variables with constants and then evaluates the
QUBO objective.  A measured variable that is fixed before the objective is
computed no longer contributes semantically to the returned score.  This
reflects standard optimization presolve, where variables can be fixed before
the remaining subproblem is evaluated
\cite{Glover2019QUBO,Farhi2014QAOA}.

\item[\textbf{O5-multiobj}.]
This workload models a multi-objective QAOA/QUBO report in which the
application queries only one objective component.  The quantum kernel measures
variables for heterogeneous objective components.  The host program returns
the selected objective score and omits the unqueried component from the
current report.  Measurements used only by the omitted component are therefore
dead with respect to the current host return value.  This is faithful to
optimization workflows that share a quantum candidate generator across several
objectives but report only a selected objective for a particular query
\cite{Glover2019QUBO,Farhi2014QAOA}.

\item[\textbf{O6-robust-scen}.]
This workload models robust or scenario-weighted optimization.  The quantum
kernel prepares a QAOA-style assignment over variables that participate in
several scenario components.  The host program evaluates a robust objective
using only scenarios retained by the upstream scenario filter.  Measurements
associated solely with filtered scenarios do not affect the returned robust
score.  This captures application workflows where a QUBO-style objective is
evaluated under a subset of active scenarios or risk cases
\cite{Glover2019QUBO,Farhi2014QAOA}.
\end{description}

\paragraph{Quantum machine learning workloads.}
These workloads model quantum machine learning pipelines.  
The quantum kernels compute feature channels, support-vector/kernel estimates, ensemble features, or class-head features.  
The host program aggregates these measured features using a trained classical readout, such as a sparse linear head, an SVM-style decision function, an ensemble weight vector, or a selected class
query.  
Quantum feature maps and quantum kernel methods are standard QML patterns, and support-vector models naturally produce sparse decision functions through zero dual coefficients
\cite{Havlicek2019QuantumFeatures,Schuld2019FeatureHilbert,Cortes1995SVM}.

\begin{description}
\item[\textbf{M1-sparse-readout}.]
This workload models a QNN feature extractor followed by a sparse linear
readout head.  The quantum kernel consists of batched, nonuniform QNN
feature-channel circuits.  The host program converts measurements into channel
features and computes a linear logit using trained readout weights.  The
upstream trained-model artifact contains a zero weight for one feature
channel, as would arise from sparsity-inducing training or feature selection.
That channel is still produced by the quantum kernel, but it does not affect
the returned logit
\cite{Havlicek2019QuantumFeatures,Schuld2019FeatureHilbert,Tibshirani1996Lasso}.

\item[\textbf{M2-kernel-svm}.]
This workload models a quantum-kernel/SVM decision function.  
The quantum kernel computes heterogeneous quantum feature or kernel-estimator blocks associated with candidate support vectors.  
The host program evaluates an SVM-style decision score by weighting these quantum features with exported
dual coefficients.  Non-support vectors have zero dual coefficients in the upstream model export, so measurements used only for those entries are
irrelevant to the final decision score.  
This follows the standard SVM structure, where the decision function is sparse in support vectors
\cite{Havlicek2019QuantumFeatures,Schuld2019FeatureHilbert,Cortes1995SVM}.

\item[\textbf{M3-ensemble}.]
This workload models an ensemble of QNN submodels.  The quantum kernel
contains batched feature-extraction blocks for several submodels.  The host
program computes an ensemble score by summing submodel outputs with exported
ensemble weights.  The upstream artifact assigns zero weights to pruned
submodels, so their corresponding measured features do not affect the final
ensemble score.  This reflects a common inference pattern in which an ensemble
or model-selection stage keeps only a subset of candidate predictors
\cite{Havlicek2019QuantumFeatures,Schuld2019FeatureHilbert,Tibshirani1996Lasso}.

\item[\textbf{M4-sensor-drop}.]
This workload models sensor or feature-channel selection before QML inference.
The quantum kernel computes nonuniform feature channels, each representing a
different input sensor or feature group.  The host program aggregates only the
channels included in the upstream sensor manifest.  A dropped sensor channel
may still be present in the shared quantum kernel template, but its feature is
not consumed by the host return value.  This captures application pipelines in
which low-quality or unavailable feature channels are excluded at inference
time
\cite{Havlicek2019QuantumFeatures,Schuld2019FeatureHilbert}.

\item[\textbf{M5-multiclass}.]
This workload models a multiclass QNN with several class heads, where the
application queries only a subset of class scores.  The quantum kernel
computes heterogeneous feature blocks that can feed multiple class heads.  The
host program evaluates only the requested class score or subset of scores.
Measurements used exclusively by unqueried class heads are therefore
semantically irrelevant to the current return value.  This models selective
inference in multiclass QML pipelines
\cite{Havlicek2019QuantumFeatures,Schuld2019FeatureHilbert}.

\item[\textbf{M6-calib-filter}.]
This workload models calibrated QNN inference with feature-channel quality
filtering.  The quantum kernel produces several feature channels, including an
auxiliary channel marked by the upstream calibration artifact as low quality.
The host program computes the calibrated readout using only channels that pass
the exported calibration filter.  Measurements used only by the excluded
channel do not contribute to the returned logit.  This is faithful to QML
pipelines where calibration, data quality, or confidence filters determine
which quantum features are consumed by the classical readout
\cite{Havlicek2019QuantumFeatures,Schuld2019FeatureHilbert}.
\end{description}

\paragraph{Quantum finance workloads.}
These workloads model amplitude-estimation-style finance kernels followed by
classical payoff, exposure, or risk-reporting logic.  The quantum kernels
encode price, payoff, scenario, or risk-factor information into measured
registers.  The host program decodes those registers and computes a portfolio
value, payoff bin, VaR-style threshold, or sensitivity report.  This matches
the standard quantum-finance pattern in which quantum amplitude estimation or
state-preparation circuits are followed by classical post-processing for
pricing and risk metrics
\cite{Woerner2019Risk,Stamatopoulos2020Option,Montanaro2015MonteCarlo}.

\begin{description}
\item[\textbf{F1-zero-exposure}.]
This workload models a batched portfolio payoff computation with one
zero-exposure asset or scenario.  The quantum kernel contains AE-style
scenario/payoff blocks, and the host program decodes each measured block into
a payoff value.  It then multiplies each payoff by an exposure exported by the
portfolio artifact and returns the portfolio sum.  A scenario with zero net
exposure is still measured by the quantum kernel but does not affect the
portfolio value.  This reflects portfolio netting or exposure filtering in
financial risk and pricing workflows
\cite{Woerner2019Risk,Stamatopoulos2020Option}.

\item[\textbf{F2-coarse-binning}.]
This workload models coarse-binned option payoff evaluation.  The quantum
kernel produces a price register.  The host program decodes the register with
application-level bit weights, applies integer/floor-style binning, and
returns a payoff derived from the coarse bin.  Low-order residual bits that
cannot change the selected bin are irrelevant to the returned payoff.  This
captures payoff policies where only coarse price buckets matter for the
reported financial quantity
\cite{Stamatopoulos2020Option,Montanaro2015MonteCarlo}.

\item[\textbf{F3-scenario-filter}.]
This workload models scenario-weighted risk aggregation.  The quantum kernel
contains batched AE-style risk-scenario blocks.  The host program decodes each
scenario payoff and multiplies it by an eligibility or scenario weight from
the upstream risk artifact.  Scenarios filtered by zero eligibility weights do
not contribute to the returned risk quantity.  This matches risk-analysis
workflows where only selected scenarios are included in a particular report
\cite{Woerner2019Risk,Egger2021CreditRisk}.

\item[\textbf{F4-hedged-leg}.]
This workload models a basket or hedge payoff computation where one leg is
hedged out.  The quantum kernel prepares and measures payoff registers for
multiple basket or hedge legs.  The host decodes each leg and combines
them using net exposure coefficients.  The upstream portfolio artifact records
that one hedge leg has zero net coefficient after preprocessing, so
measurements used only for that leg are dead.  This reflects portfolio
netting and hedging logic in option-pricing and risk workflows
\cite{Stamatopoulos2020Option,Woerner2019Risk}.

\item[\textbf{F5-var-bucket}.]
This workload models a VaR-style threshold report after coarse risk bucketing.
The quantum kernel produces a risk-score register.  The host decodes
the register, applies coarse bucketization, and returns an indicator of
whether the bucket crosses a risk threshold.  Low-order bits below the bucket
resolution cannot affect the threshold outcome, and therefore do not
contribute to the host return value.  This is faithful to risk analysis
workflows that report thresholded risk quantities such as Value at Risk
\cite{Woerner2019Risk,Egger2021CreditRisk}.

\item[\textbf{F6-greek-filter}.]
This workload models a sensitivity or Greek report with filtered risk factors.
The quantum kernel computes batched AE-style estimators for several factor
groups.  The host program decodes each factor output and aggregates only the
risk factors retained by the upstream sensitivity report.  Factors that are
hedged, inactive, or excluded from the report have zero host weights and their
measurements do not affect the returned sensitivity.  This follows the
finance pattern of combining quantum-estimated payoff or risk quantities with
classical exposure and reporting logic
\cite{Stamatopoulos2020Option,Woerner2019Risk}.
\end{description}

\paragraph{Control workloads.}
The control workloads are designed to validate the host-side analysis rather
than to measure effectiveness.  They are excluded from aggregate effectiveness
statistics.  Their purpose is to check that the analysis finds the expected
dead set in a positive case and, more importantly, does not prune live
measurements in all-live or zero-weight-trap cases.

\begin{description}
\item[\textbf{CTRL1-all-live-qml}.]
This negative control reuses the QML feature-channel structure of
\textbf{M1}, but the upstream trained-model artifact assigns nonzero readout
weights to every feature channel.  The quantum kernel therefore produces
features that are all consumed by the host readout.  The expected dead set is
empty, and the host-aware pass should not reduce the circuit.

\item[\textbf{CTRL2-all-live-opt}.]
This negative control reuses a QAOA/QUBO-style kernel, but every measured
decision variable participates in at least one nonzero objective edge.  The
host program evaluates a fully live QUBO objective, so each measurement can
affect the returned score.  The expected dead set is empty, and any pruning
would indicate a false positive.

\item[\textbf{CTRL3-derived-screened-chem}.]
This positive control reuses a VQE-style Pauli-estimator kernel and
applies the same upstream screening logic as the chemistry workloads.  The
upstream artifact removes a small tail observable, and the expected dead set is
known in advance.  The control checks that the host-side analysis can recover
the screened measurements exactly.

\item[\textbf{CTRL4-zero-weight-live-escape}.]
This negative trap control reuses the QML feature-channel structure of
\textbf{M1}.  The upstream artifact contains a zero primary readout weight for
one feature, but the same feature is also used by an auxiliary live calibration
term.  The correct result is therefore an empty dead set.  This control
demonstrates that the analysis is semantic: it does not prune a measurement
merely because the measurement appears in one zero-weighted term.
\end{description}
\subsection{Soundness of the host-side static analysis}\label{app:soundness-proofs}
\Cref{subsec:host-side-analysis-soundness} summarizes the concrete domain, the abstract-state order, the description relation \(\Delta\), and the
concretization \(\gamma\). We now give the detailed proofs.

\begin{lemma}[Expression-transfer soundness]
\label{lem:expr-transfer-soundness}
$\forall s=(\nu,\sigma,\delta)\Delta(\rho,C_{\texttt{ctrl}}):$
if the concrete evaluation of expression \(e\) under \(s\) yields value
\(v\) with exact measurement-dependency set \(S_D\), and the abstract evaluation
yields $A=[\![e]\!]^{\sharp}_{\rho}$,
then
$(v,S_D)\Delta(A,C_{\texttt{ctrl}})$.
\end{lemma}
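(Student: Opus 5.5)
We prove the lemma by structural induction on the expression \(e\), following the rules of \cref{subsec:abs-eval} case by case. Throughout, we use two invariants of the instrumented semantics. First, the exact dependency set \(S_D\) of a compound expression is contained in the union of the dependency sets of its immediate subexpressions. Second, if the concrete value of a subterm equals \(\nu(p)\) for a polynomial \(p\), then its dependency set is contained in \(\mathbf{MSym}_{\mathbf M}(\langle p\rangle)\cup C_{\texttt{ctrl}}\); the residual \(C_{\texttt{ctrl}}\) part accounts for dependence that was accumulated through control rather than data. Both facts follow from how \(\delta\) records semantic dependence: a value that is a fixed function of certain inputs (and of values already charged to \(C_{\texttt{ctrl}}\)) cannot depend on anything else.

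\emph{Base cases.}
\begin{itemize}
\item For a variable \(\texttt{x}\), the claim is exactly the hypothesis \(s\Delta(\rho,C_{\texttt{ctrl}})\) applied at \(\texttt{x}\).
\item For a rational constant \(c\), we have \(v=c=\nu(\langle c\rangle)\) and \(S_D=\varnothing\).
\item For a non-rational constant, the abstract value is \(\langle\varnothing\rangle\), and again \(S_D=\varnothing\).
\item For \(\texttt{random}(L,U)\), the fresh symbol \(\texttt{r}\) is interpreted as the sampled value, so \(v=\nu(\texttt r)\). Because \(L,U\) are measurement-independent constants, \(S_D=\varnothing\).
\end{itemize}

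\emph{Inductive cases with a dependence-form result.} This covers \(\div\), \(\%\), operands already in dependence form, function calls, and the fallback rule for \(\texttt{int}\). In each case the result is \(\langle\bigcup_i\mathbf{Sym}(A_i)\rangle\), and only the dependency condition must be checked. By the induction hypothesis, \(S_{D,i}\subseteq\mathbf{MSym}_{\mathbf M}(A_i)\cup C_{\texttt{ctrl}}\) for each operand. Purity and totality of function calls, together with the first invariant, give \(S_D\subseteq\bigcup_i S_{D,i}\), and the claim follows.

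\emph{Inductive cases with a polynomial-form result.} For \(+,-,\times\) and unary minus on polynomial operands, we get \(v=\nu(p)\odot\nu(q)=\nu(\mathrm{norm}(p\odot q))\), since \(\mathrm{norm}\) preserves the polynomial function and evaluation is a ring homomorphism.

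The dependency condition is the subtle point, because \(\mathrm{norm}\) may cancel symbols; for example, \(\texttt a@0-\texttt a@0\) becomes \(0\). Here we rely on the fact that \(\delta\) records \emph{exact semantic} dependence. The value \(v\) equals \(\nu(\mathrm{norm}(p\odot q))\) for every admissible valuation, so it is invariant under changing any \(\texttt m@0\) that does not occur in the normalized polynomial. Hence those variables are not in \(S_D\), except possibly through \(C_{\texttt{ctrl}}\)-charged control dependence, which the second invariant absorbs.

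For \(\texttt{int}\) we handle the three side conditions separately.
\begin{itemize}
\item \(C^1_{int}\): the value \(\nu(p)\) is an integer on all valuations respecting \(\mathbf{Dom}\), so truncation is the identity.
\item \(C^2_{int}\): \(\nu(p)\in\mathrm{eval}(p,\mathbf B)\) because \(\nu\) respects \(\mathbf B\), and \(\mathrm{trunc}\) of that range is \(c\). So \(v=c\) and \(S_D=\varnothing\).
\item \(C^3_{int}\): we have \(\nu(p)\in I_p\cap\mathbb Z\) and \(\nu(q)\in I_q\), so \(\mathrm{NoCarry}\) gives \(\texttt{int}(\nu(p)+\nu(q))=\nu(p)\). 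The dependency condition follows by the same semantic-invariance argument as above.
\end{itemize}

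The main obstacle is making the semantic-invariance step rigorous. Everything else is routine, but the cancelling polynomial cases and the \(\texttt{int}\) cases are precisely where \(\mathbf{MSym}\) can be strictly smaller than the syntactic operands. The argument therefore needs a precise definition of \(\delta\) as ``the set of \(\texttt m\) such that varying \(\texttt m@0\), with other inputs and random draws fixed, changes the value.'' We also need soundness of interval evaluation \(\mathrm{eval}\) with respect to \(\mathbf B\); I would state it as a small auxiliary fact, proved by induction on \(p\). With that definition fixed, each case above closes by exhibiting that the value is a function of the listed symbols only.
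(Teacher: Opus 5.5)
Your proposal is correct and follows the same route as the paper's proof: structural induction on \(e\) with the same case split (variables, constants and \texttt{random}; dependence-form results closed by the union of operand supports; exact ring arithmetic under \(\mathrm{norm}\); and the three guarded \texttt{int} rules plus the fallback). You are more careful than the paper, which simply asserts \(S_D\subseteq\mathbf{MSym}_{\mathbf M}(A)\) in the cancelling-polynomial and \texttt{int} cases. Note, though, that what your argument actually uses there is that the value equals \(\nu(p)\) uniformly over all admissible runs reaching the evaluation point, together with soundness of \(\mathrm{eval}\) with respect to \(\mathbf B\). That uniform fact is a value-only soundness statement, and it must be established before the dependency clause to avoid circularity. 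Your per-state ``second invariant'' as literally phrased does not give it, and in fact fails for a single valuation.
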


\begin{proof}
By structural induction on \(e\), following the transfer rules in
\cref{subsec:abs-eval}.

For constants, \(S_D=\varnothing\), and the abstract value is a constant polynomial, so the claim is immediate. 
For variables, the result follows
directly from \(s\Delta(\rho,C_{\texttt{ctrl}})\).

For binary arithmetic and unary minus, when the operands are in polynomial form, the abstract transfer performs the same symbolic arithmetic and normalizes the polynomial. Hence the concrete value is exactly the valuation of the resulting polynomial, and the measurement support is included in
\(\mathbf{MSym}_{\mathbf{M}}(A)\). 
When exact polynomial reasoning is not
available, the abstract transfer falls back to dependence form and unions the operand supports, which conservatively contains the concrete dependency set.

For \texttt{int}, each exact rule is guarded by a side condition that makes the abstract result equal to the concrete integer result: the argument is already an
integer polynomial, the interval evaluates to a constant integer, or the bounded residual cannot affect the integer part. If none of these conditions holds, the
fallback dependence-form rule conservatively records the argument support.

For \texttt{random}$(L,U)$, the concrete semantics chooses a value in \([L,U]\), while the abstract semantics introduces a fresh non-measurement
symbol with the same bound. This introduces no measurement dependency. For pure function calls, the abstract transfer conservatively unions the dependencies of the arguments, which over-approximates the measurement dependencies of the concrete call result.
\qed
\end{proof}

\begin{lemma}[Edge-effect soundness]
\label{lem:edge-effect-soundness}
Assume
$s\Delta(\rho,C_{\texttt{ctrl}})$.
For every statement or block \(S_b\), if the concrete effect is
$s \xrightarrow{\ S_b\ } s'$
and the abstract effect is
$(\rho,C_{\texttt{ctrl}})
\xRightarrow{\ S_b\ }
(\rho',C_{\texttt{ctrl}}')$,
then
$s'\Delta(\rho',C_{\texttt{ctrl}}')$.

\end{lemma}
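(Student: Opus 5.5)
We argue by structural induction on the statement block $S_b$, following the rules of \cref{subsec:abs-effect}, using \cref{lem:expr-transfer-soundness} for every expression that is evaluated. Before the case analysis we record a monotonicity fact: if $(v,S_D)\Delta(A,C)$ and $C\subseteq C'$, then $(v,S_D)\Delta(A,C')$. Also, the join $\sqcup_{\mathrm{env}}$ and $\sqcup_{\mathrm A}$ only ever return either a common exact polynomial or a dependence form whose support contains the supports of both inputs. Hence any pair described by one branch's value is also described by the joined value, once exactness is dropped. This weakening fact is what all join cases reduce to.

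The base and sequencing cases come first. For $\texttt{skip}$ there is nothing to show. For $\texttt{x}:=e$ and $\texttt{return}\ e$, we apply \cref{lem:expr-transfer-soundness} to obtain $(v,S_D)\Delta([\![e]\!]^\sharp_\rho,C_{\texttt{ctrl}})$ for the updated entry; every other entry is untouched, and $C_{\texttt{ctrl}}$ is unchanged. The ghost state $\delta$ of the assigned variable is by definition the dependency set $S_D$ of the evaluated expression, so pointwise $\Delta$ holds. For $S;T$ we apply the induction hypothesis twice, chaining through the intermediate state.

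The conditional is the main case and the expected obstacle. Fix $s$, and suppose the concrete execution takes the then-branch, reaching $s'$. By the induction hypothesis, $s'\Delta(\rho_t,C^t_{\texttt{ctrl}})$, and hence also $s'\Delta(\rho_t,C'_{\texttt{ctrl}})$ since $C^t_{\texttt{ctrl}}\subseteq C'_{\texttt{ctrl}}$. Values and polynomial equalities then transfer to $\rho_t\sqcup_{\mathrm{env}}\rho_e$ by the join fact above.

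The subtle point is the ghost dependency $\delta'$. Because the branch choice itself depends on the measurement inputs in the exact dependency set $S_c$ of \texttt{cond}, the semantic dependency of a post-state value is, roughly, its in-branch dependency together with $S_c$ whenever flipping the branch could change that value. I would make this precise by defining $\delta'(\texttt{x})$ as the exact semantic dependency of the post-conditional value viewed as a function of all inputs, and splitting into two subcases.

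In the first subcase, $\mathrm{Diff}_{\rho_t,\rho_e}$ is false. Then every variable has the same exact polynomial in both branches. By the induction hypothesis, both branches compute the same value $\nu(p)$ from any start state, so the branch choice does not affect $\texttt{x}$, and its dependencies are contained in $\mathbf{MSym}_{\mathbf M}(\langle p\rangle)$. The union of both branches' $C_{\texttt{ctrl}}$ sets covers the in-branch control contributions, since we range over both branches.

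In the second subcase, $\mathrm{Diff}$ is true. Then $\{\texttt m\mid \texttt m@0\in\mathbf{Sym}([\![\texttt{cond}]\!]^\sharp)\}$ is added to $C'_{\texttt{ctrl}}$. By \cref{lem:expr-transfer-soundness}, this set contains $S_c$, so the control-induced dependencies are covered.

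To make the dependency argument rigorous, I would strengthen the induction hypothesis to quantify over all concrete states $s\in\gamma(\rho,C_{\texttt{ctrl}})$ simultaneously. Dependency is a relational, two-run property, so a single-run invariant is not enough.

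The bounded-lookahead rule $S_{\texttt{if-ret}}$ is handled the same way. We treat it as the conditional followed by the evaluation of $e$ in each branch environment. We use $R_t\sqcup_{\mathrm A}R_e$ for the return entry, and the condition $\neg\mathsf{Eq}_{\mathrm A}(R_t,R_e)$ for adding the control dependencies. When $R_t$ and $R_e$ are equal, the return value is branch-invariant by the first-subcase argument above. Non-return variables need not be covered there: they are no longer observable after the return.
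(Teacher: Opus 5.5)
Your case analysis matches the paper's proof. You chain the hypothesis for sequencing. You use the expression lemma for assignment and \texttt{return}. For the conditional, you use soundness of $\sqcup_{\mathrm{env}}$ plus the support of \texttt{cond} under $\mathrm{Diff}_{\rho_t,\rho_e}$. Your remark that exact dependency is a two-run property is a genuine refinement of the paper's single-run argument. When you strengthen the hypothesis, phrase it over families of runs indexed by the initial valuation $\nu$, because membership of individual states in $\gamma(\rho,C_{\texttt{ctrl}})$ does not relate two runs.

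The gap is the lookahead case. The lemma asserts $s'\Delta(\rho',C_{\texttt{ctrl}}')$ at every $\texttt{x}\in\mathcal V\cup\{\texttt{return}\}$, but you discharge only the \texttt{return} entry. The other entries cannot be discharged, because the claim is false for this rule. Take \texttt{if (m) then y := 1 else y := 2; return 0} with $\texttt{m}\in\mathbf M$. Then $R_t=R_e=\langle 0\rangle$, so $C_{\texttt{ctrl}}'=C_{\texttt{ctrl}}$ and $\rho'(\texttt{y})=\langle\varnothing\rangle$, although the post-value of \texttt{y} depends on \texttt{m}. The paper's proof has the same hole, since it only compares $R_t$ and $R_e$.

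Nor does your ``no longer observable'' justification follow from the rules. Your sequencing step chains $T$ after $S$ even when $S$ has executed a \texttt{return}, so it treats \texttt{return} as an assignment to a distinguished variable. Under that reading, an $S_{\texttt{if-ret}}$ nested in a branch can be followed by code that reads \texttt{y}, and the omission makes \textsc{find\_nc} unsound. Take \texttt{if (n) then (if (m) then y := 1 else y := 2; return 0) else y := 1; return y} with $\texttt{m},\texttt{n}\in\mathbf M$. The analysis ends with $C^\star_{\texttt{ctrl}}=\{\texttt{n}\}$ and $\rho^\star(\texttt{return})=\langle\varnothing\rangle$, so $\texttt{m}\in\mathbf M_{\mathrm{nc}}$, yet the result depends on \texttt{m} when $\texttt{n}=1$. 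If \texttt{return} instead terminates execution, it is your sequencing step that fails. For \texttt{if (n) then return m else skip; return 0} the abstract return is $\langle 0\rangle$, but the program returns \texttt{m} when $\texttt{n}=1$.

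So commit to the assignment-like reading your sequencing case needs. State the weakened conclusion explicitly: after $S_{\texttt{if-ret}}$, only the \texttt{return} entry is described. Then assume $S_{\texttt{if-ret}}$ occurs only in tail position of $H$; that is all the soundness theorem for \textsc{find\_nc} uses. You could instead strengthen the rule by adding the support of \texttt{cond} whenever $\mathrm{Diff}_{\rho_t,\rho_e}$ holds. That would restore the lemma as stated, but it would discard exactly the precision the rule exists for: in the motivating example, \texttt{u}, \texttt{v}, \texttt{w} differ between the branches while $R_t=R_e$.
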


\begin{proof}
By induction on the derivation of the concrete edge
\(s\xrightarrow{\ S_b\ }s'\), with cases matching the abstract effects in
\cref{subsec:abs-effect}.

\textbf{Sequencing.}
For \(S;T\), the concrete execution first reaches an intermediate state
\(s_1\), and the abstract execution first reaches
\((\rho_1,C_{\texttt{ctrl}}^1)\). By the induction hypothesis for \(S\),
$s_1\Delta(\rho_1,C_{\texttt{ctrl}}^1)$.
Applying the induction hypothesis again to \(T\) gives the desired result.

\textbf{Assignment.}
For \(\texttt{x := }e\), the concrete effect updates only \(\texttt{x}\).
Let the concrete evaluation of \(e\) produce value \(v\) and dependency set
\(S_D\). By \cref{lem:expr-transfer-soundness},
\[
(v,S_D)\Delta([\![e]\!]^{\sharp}_{\rho},C_{\texttt{ctrl}}).
\]
The abstract effect updates
$\rho'=\rho\oplus\{\texttt{x}\mapsto[\![e]\!]^{\sharp}_{\rho}\}$
and leaves \(C_{\texttt{ctrl}}\) unchanged. Therefore the updated
\(\texttt{x}\) component is described by \(\rho'(\texttt{x})\), and all other
variables remain described because they are unchanged.

\textbf{Return.}
The case \(\texttt{return }e\) is identical to assignment, except that the
updated component is the distinguished variable \(\texttt{return}\).

\textbf{Conditional.}
Consider
$S_{\texttt{cond}}\equiv
\texttt{if (cond) then }S_t\texttt{ else }S_e$.
The concrete execution follows one branch, while the abstract semantics
analyzes both branches from the same input state. By the induction hypothesis,
the concrete post-state of either branch is described by the corresponding
abstract branch output:
$(\rho_t,C_{\texttt{ctrl}}^t)
\text{ or }
(\rho_e,C_{\texttt{ctrl}}^e)$.
The environment join
\(\rho_t\sqcup_{\mathrm{env}}\rho_e\) is sound: If the two branch abstract
values are provably equal by \(\mathsf{Eq}_{\mathrm A}\), the join keeps that
value, which describes both branch results; otherwise, the join uses dependence form with the union of the two branch supports, which describes either branch
result conservatively.

It remains to account for the dependency through the branch choice itself.
When the observable post-conditional abstract results may differ, the transfer
rule adds
$\{\texttt{m}\in\mathbf{M}\mid
\texttt{m}@0\in\mathbf{Sym}([\![\texttt{cond}]\!]^{\sharp}_{\rho})\}$
to \(C_{\texttt{ctrl}}'\). By
\cref{lem:expr-transfer-soundness}, this set contains every measurement-bound
variable that can affect the concrete value of \texttt{cond}. Hence every
measurement dependency introduced through the concrete branch choice is recorded
in \(C_{\texttt{ctrl}}'\). If the abstract branch results are provably equal,
then the branch choice does not change the described post-state, and no new
control dependency is needed.

Therefore, the concrete post-state is described by
$(\rho_t\sqcup_{\mathrm{env}}\rho_e,C_{\texttt{ctrl}}')$.

\textbf{Common-return lookahead.}
For the bounded-lookahead rule, the proof is the same as the conditional case,
but the equality test is applied to the two abstract return values
\(R_t\) and \(R_e\). If they differ, the measurement dependencies of
\texttt{cond} are added to \(C_{\texttt{ctrl}}'\); otherwise the two branches
produce the same described return. Thus the abstract effect describes the
concrete effect in both cases.
\qed
\end{proof}

\begin{theorem}[Soundness of abstract execution]
\label{thm:abstract-exec-soundness}
Let
\[
(\rho_0,\mathbf{Dom},\mathbf{B})
=
\textsc{InitAbsState}(H,\mathbf{M})
\]
and suppose a well-formed initial concrete state \(s_0\) satisfies
$s_0\Delta(\rho_0,\varnothing)$.
If the concrete execution of \(H\) yields
$s_0\xrightarrow{\ H\ }s^\star$
and the abstract execution yields
$(\rho_0,\varnothing)
\xRightarrow{\ H\ }
(\rho^\star,C_{\texttt{ctrl}}^\star)$,
then
$s^\star\Delta(\rho^\star,C_{\texttt{ctrl}}^\star)$.

\end{theorem}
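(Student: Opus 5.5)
The theorem is essentially \cref{lem:edge-effect-soundness} instantiated at the whole host program $H$ from the initial pair $(\rho_0,\varnothing)$. The plan is therefore to reduce to that lemma after checking two things. First, the hypothesis $s_0\Delta(\rho_0,\varnothing)$ is well-posed. Second, the abstract execution produced by \textsc{AnalyzeBlock} (\cref{alg:analyze-block}) coincides with the derivation $(\rho_0,\varnothing)\xRightarrow{H}(\rho^\star,C_{\texttt{ctrl}}^\star)$ built from the rules in \cref{subsec:abs-effect}. The lemma is stated for an arbitrary block $S_b$, so $H$ qualifies.

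\textbf{Step 1: the initial description holds.} I will check that the initial state is described as the theorem requires.
\begin{itemize}
\item By \textsc{InitAbsState}, $\rho_0(\texttt{x})=\langle\texttt{x}@0\rangle$ for every $\texttt{x}\in\mathcal V$, and $\rho_0(\texttt{return})=\bot$.
\item A well-formed $s_0=(\nu,\sigma,\delta)$ has $\sigma(\texttt{x})=\nu(\texttt{x}@0)$ and $\delta(\texttt{x})\subseteq\{\texttt{x}\}\cap\mathbf M$, with no return value produced and $\delta(\texttt{return})=\varnothing$.
\item Hence $\delta(\texttt{x})\subseteq\mathbf{MSym}_{\mathbf M}(\rho_0(\texttt{x}))$, and $\Delta$ holds pointwise with $C_{\texttt{ctrl}}=\varnothing$.
\end{itemize}
We also need $\nu$ to respect $\mathbf{Dom}$ and $\mathbf B$. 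Measurement symbols are binary with range $[0,1]$, which matches measurement outcomes in $\{0,1\}$. Fresh \texttt{random} symbols get their bounds consistently with the concrete draw. The theorem already assumes $s_0\Delta(\rho_0,\varnothing)$, so this step only confirms that the assumption is satisfiable.

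\textbf{Step 2: match the algorithm to the rules.} \textsc{AnalyzeBlock} iterates over the statements of $H$. I will show by induction on the length of the block that its output equals the result of folding the sequencing rule over those statements, casing on each statement:
\begin{itemize}
\item Assignments and returns correspond exactly to the assignment and return rules.
\item A plain conditional corresponds to the conditional rule with $\text{Diff}_{\rho_t,\rho_e}$.
\item An \texttt{if} followed by a common \texttt{return} corresponds to the bounded-lookahead rule.
\end{itemize}
Both \textsc{AnalyzeBlock} and the concrete semantics stop at the first executed \texttt{return}, so the remaining statements are dead in both. The residual block is treated as \texttt{skip}, which preserves $\Delta$ trivially.

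\textbf{Step 3: apply the lemma.} With the derivation established, \cref{lem:edge-effect-soundness} applied to $S_b=H$, $s=s_0$ and $(\rho,C_{\texttt{ctrl}})=(\rho_0,\varnothing)$ yields $s^\star\Delta(\rho^\star,C_{\texttt{ctrl}}^\star)$.

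\textbf{Main obstacle.} I expect the main obstacle to be Step 2's alignment between the early-return behaviour of \textsc{AnalyzeBlock} and the compositional rules. Two cases need care:
\begin{itemize}
\item A \texttt{return} inside only one branch of a plain conditional. The concrete execution may then stop in the then-branch but continue after the else-branch.
\item The lookahead rule consuming two syntactic statements at once.
\end{itemize}
For both, I would handle sequencing with a "returned" flag in the concrete state, under which later statements act as the identity. For the lookahead case, I would argue that it is an instance of conditional-then-return with a sharper control test: control dependence is recorded whenever $R_t$ and $R_e$ differ. I would take this sharper test as sound by \cref{lem:edge-effect-soundness}'s common-return case rather than re-proving it.
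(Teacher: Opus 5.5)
The part of your plan that actually proves the statement is the paper's proof. The theorem's hypotheses are exactly those of \cref{lem:edge-effect-soundness} at $S_b=H$, $s=s_0$, $(\rho,C_{\texttt{ctrl}})=(\rho_0,\varnothing)$; the paper's ``repeated application along the execution of $H$'' is what the lemma's sequencing case packages. The initial relation is the one \textsc{InitAbsState} provides.

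Your Step~2, however, is unnecessary. The statement already assumes a derivation by the rules of \cref{subsec:abs-effect}, not the output of \textsc{AnalyzeBlock*}; that identification only matters for \cref{thm:find-nc-soundness}. It would also not go through as planned. The rules treat \texttt{return} $e$ as a plain update of the distinguished variable and never stop. By contrast, \cref{alg:analyze-block} stops at a \texttt{return} only at its own nesting level and resumes the enclosing block after a branch-local one. A ``returned'' flag on the concrete side alone cannot restore $\Delta$, because the abstract state has no counterpart and keeps being transformed.

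To see this, take $\texttt{m},\texttt{n}\in\mathbf{M}$ and let $H$ be \texttt{if (m) then return n else skip; return 2}. Both the rules and \textsc{AnalyzeBlock} give $\rho^\star(\texttt{return})=\langle 2\rangle$ and $C_{\texttt{ctrl}}^\star=\{\texttt{m}\}$. Yet under a halting semantics with $\texttt{m}=1$ the program returns $\texttt{n}$. So $\Delta$ fails at \texttt{return} in both value and dependency, and \textsc{find\_nc} would even place $\texttt{n}$ in $\mathbf{M}_{\mathrm{nc}}$. No proof can repair this. The lemma, and with it the theorem, presupposes either that a concrete \texttt{return} is likewise an assignment after which execution continues, or that \texttt{return} occurs only in tail position.

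Deferring the lookahead case to the lemma's common-return case is also what the paper does, but that case only controls the \texttt{return} component. The measurements in \texttt{cond} are added to $C_{\texttt{ctrl}}$ only when $R_t$ and $R_e$ differ, while all other variables are just joined by $\sqcup_{\mathrm{env}}$. With $\texttt{m}\in\mathbf{M}$ as before, let $H$ be \texttt{if (m) then x := 1 else x := 2; return 0}. The rule yields $\rho^\star(\texttt{x})=\langle\varnothing\rangle$ and $C_{\texttt{ctrl}}^\star=\varnothing$, although the final value of \texttt{x} depends on \texttt{m}. So the pointwise conclusion fails at \texttt{x}. What survives, provided nothing after the lookahead reads such variables, is $\Delta$ at \texttt{return}, which is all that \cref{thm:find-nc-soundness} needs.
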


\begin{proof}
By repeated application of \cref{lem:edge-effect-soundness} along the execution
of \(H\). The initial relation holds by construction of
\textsc{InitAbsState}: each program variable \(\texttt{x}\) is initialized to
\(\langle\texttt{x}@0\rangle\), and each measurement-bound variable
\(\texttt{m}\in\mathbf{M}\) is initialized as a binary input with bound
\([0,1]\).
\qed
\end{proof}

\begin{theorem}[Soundness of \textsc{find\_nc}]
\label{thm:find-nc-soundness}
Let
$\mathbf{M}_{\mathrm{nc}}=\textsc{find\_nc}(H,\mu)$.
For every \(\texttt{m}\in\mathbf{M}_{\mathrm{nc}}\), the initial measurement
outcome \(\texttt{m}@0\) is semantically non-contributory to the host return.
\end{theorem}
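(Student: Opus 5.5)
The plan is to derive the statement directly from \cref{thm:abstract-exec-soundness}, using the ghost dependency component $\delta$ of instrumented concrete states as the link between ``exact measurement dependency'' and ``semantic non-contribution''.

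First, fix $\texttt{m}\in\mathbf{M}_{\mathrm{nc}}$. Take any well-formed initial concrete state $s_0=(\nu,\sigma_0,\delta_0)$ with $\sigma_0(\texttt{x})=\nu(\texttt{x}@0)$ and $\delta_0(\texttt{x})=\{\texttt{x}\}\cap\mathbf{M}$. We check that $s_0\Delta(\rho_0,\varnothing)$ holds. Each $\rho_0(\texttt{x})=\langle\texttt{x}@0\rangle$ is a polynomial whose valuation is $\sigma_0(\texttt{x})$, and whose measurement support $\mathbf{MSym}_{\mathbf M}$ is exactly $\delta_0(\texttt{x})$. For $\texttt{return}$, the abstract value $\bot$ matches the absence of a concrete return.

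Applying \cref{thm:abstract-exec-soundness} to the terminating run $s_0\xrightarrow{H}s^\star$ then gives $s^\star\Delta(\rho^\star,C_{\texttt{ctrl}}^\star)$. Specialised to the distinguished variable $\texttt{return}$, this reads
\[
\delta^\star(\texttt{return})\subseteq\mathbf{MSym}_{\mathbf M}(\rho^\star(\texttt{return}))\cup C_{\texttt{ctrl}}^\star=\{\texttt{m}'\in\mathbf M\mid \texttt{m}'@0\in\mathbf{Obs}(\rho^\star,C_{\texttt{ctrl}}^\star)\}.
\]
Since $\texttt{m}@0\notin\mathbf{Obs}$ by the definition of $\mathbf{M}_{\mathrm{nc}}$ in \textsc{find\_nc}, we conclude $\texttt{m}\notin\delta^\star(\texttt{return})$ for every run.

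The second step turns this into the semantic statement: changing the value of $\texttt{m}@0$ (within $\mathbf{Dom}$ and $\mathbf{B}$, i.e.\ in $\{0,1\}$) while keeping all other symbol valuations, including the values drawn for $\texttt{random}$ occurrences, fixed, leaves the returned value unchanged. This needs the intended meaning of the ghost state made precise. I would define it as an invariant of the instrumented semantics: for two runs whose initial valuations agree outside a set $T\subseteq\mathbf M$, every variable $\texttt{x}$ with $\delta(\texttt{x})\cap T=\varnothing$ holds equal values at corresponding points. With $T=\{\texttt{m}\}$ and $\texttt{x}=\texttt{return}$, this yields the claim.

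The main obstacle is this noninterference invariant at conditionals. There the two runs may take different branches, so pointwise synchronisation fails. I would handle it by relying on how $\delta$ records control dependence: if $\texttt{m}$ can influence $\texttt{cond}$, then either the post-branch values coincide, or $\texttt{m}$ is added to the dependency sets of the affected variables, in particular of $\texttt{return}$.

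To discharge this case I would go back to the conditional clause of \cref{lem:edge-effect-soundness}, which is where $C_{\texttt{ctrl}}$ receives the condition's measurement support whenever the branch results (or, in the lookahead rule, $R_t,R_e$) are not $\mathsf{Eq}_{\mathrm A}$-equal. If they are equal as normalised polynomials, both branches compute the same polynomial in the same symbols. The returned value is then independent of the branch taken, so flipping $\texttt{m}$ cannot change it through control flow. Otherwise $\texttt{m}\in C_{\texttt{ctrl}}^\star$ would hold whenever $\texttt{m}$ influences the condition, contradicting $\texttt{m}\in\mathbf{M}_{\mathrm{nc}}$.

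Fresh $\texttt{random}$ symbols carry no measurement dependency, so holding them fixed across the two runs is consistent. The argument is completed by induction over the structure of $H$.
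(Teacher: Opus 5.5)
Your first step is the paper's entire proof. It applies \cref{thm:abstract-exec-soundness}, reads off the $\Delta$-clause at \texttt{return}, identifies $\mathbf{MSym}_{\mathbf M}(\rho^\star(\texttt{return}))\cup C^\star_{\texttt{ctrl}}$ with the measurement part of $\mathbf{Obs}$, and concludes $\texttt{m}\notin\delta^\star(\texttt{return})$. You also make explicit the choice of $s_0$ and the check $s_0\Delta(\rho_0,\varnothing)$, which the paper leaves implicit.

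The paper stops there, because it \emph{defines} $\delta$ as the exact semantic dependency set, so non-contribution is read off directly. All the semantic content sits in that definition and in the conditional case of \cref{lem:edge-effect-soundness}.

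Your second step has no counterpart in the paper. It tries to justify that reading as two-run noninterference, and you rightly locate the difficulty at conditionals. Dependence is a property of pairs of runs, and a single-run ghost update cannot know whether the untaken branch would have produced the same value. As written, though, you state the invariant about $\delta$ and then discharge it with abstract facts ($\mathsf{Eq}_{\mathrm A}$, $C_{\texttt{ctrl}}$), which mixes levels.

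The clean form drops $\delta$ altogether. Take two runs whose valuations differ only at $\texttt{m}@0$, with random draws shared, and show by induction that every $\texttt{x}$ with $\texttt{m}\notin\mathbf{MSym}_{\mathbf M}(\rho(\texttt{x}))\cup C_{\texttt{ctrl}}$ has equal values in both. This needs:
\begin{itemize}
\item only the value (polynomial-exactness) clause of $\Delta$, not $\delta$;
\item monotonicity of $C_{\texttt{ctrl}}$, so that $\texttt{m}\notin C^\star_{\texttt{ctrl}}$ gives $\texttt{m}\notin C_{\texttt{ctrl}}$ at every point;
\item a same-branch subcase when $\texttt{m}@0\notin\mathbf{Sym}([\![\texttt{cond}]\!]^{\sharp})$;
\item your $\mathsf{Eq}_{\mathrm A}$ argument when the runs take different branches.
\end{itemize}
Under the lookahead rule the invariant can be claimed only for \texttt{return}, since other variables may differ between branches without $C_{\texttt{ctrl}}$ growing.

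That route is longer but genuinely proves noninterference. The paper's is a short corollary whose rigor rests on treating exact dependency as given.
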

\begin{proof}
Let \(s^\star=(\nu,\sigma^\star,\delta^\star)\) be the final concrete state and
let
$(\rho^\star,C_{\texttt{ctrl}}^\star)$
be the final abstract state computed by \cref{alg:find-nc}. By \cref{thm:abstract-exec-soundness},
$s^\star\Delta(\rho^\star,C_{\texttt{ctrl}}^\star)$.
By definition of \(\Delta\) at \(\texttt{return}\), we get
$\delta^\star(\texttt{return})
\subseteq
\mathbf{MSym}_{\mathbf{M}}(\rho^\star(\texttt{return}))
\cup
C_{\texttt{ctrl}}^\star$.
By the definition of \(\mathbf{Obs}\) in \cref{subsec:detection},
\[
\mathbf{MSym}_{\mathbf{M}}(\rho^\star(\texttt{return}))
\cup
C_{\texttt{ctrl}}^\star
=
\{\texttt{m}\in\mathbf{M}\mid
\texttt{m}@0\in
\mathbf{Obs}(\rho^\star,C_{\texttt{ctrl}}^\star)\}.
\]
Therefore, if
\(\texttt{m}\in\mathbf{M}_{\mathrm{nc}}\), then
\(\texttt{m}\notin\delta^\star(\texttt{return})\). Since
\(\delta^\star(\texttt{return})\) is the exact concrete measurement-dependency
set of the host return, changing only \(\texttt{m}@0\) cannot change the
concrete return value. Hence \(\texttt{m}\) is non-contributory.
\qed
\end{proof}
\subsection{Proof details in \cref{subsec:asymptotic-analysis}}\label{app:asym-proofs}
\noindent\textbf{\Cref{theorem:complexity}.} In the worst case, the proposed host-side static analysis runs in 
$\mathcal{O}\!\left(n\cdot\mathcal{N}^2 + n\cdot|\mathcal{V}|\cdot|D|\right)$.

\begin{proof}
Polynomial addition/subtraction costs $\mathcal{O}(\mathcal{N})$ by scanning monomials once; polynomial multiplication costs $\mathcal{O}(\mathcal{N}^2)$ due to pairwise monomial products; and each \texttt{int}$(\cdot)$ requires $\mathcal{O}(\mathcal{N})$ for integer checks and interval folding.
A \texttt{random}$(\cdot,\cdot)$ is $\mathcal{O}(1)$, introducing a single fresh symbol.
Each conditional requires $\sqcup_{\mathrm{env}}$, which compares all variables in $\mathcal{V}$ and, when they differ, unions two dependence sets of size at most $|D|$, costing $\mathcal{O}(|\mathcal{V}|\cdot|D|)$.
For the bounded-lookahead rule on \(S_{\texttt{if-ret}}\), the common return expression is evaluated once under each branch environment. This adds only a constant-factor overhead relative to the branch analyses already accounted for, and therefore does not change the asymptotic bound.
Operators falling back to dependence form (e.g., division, modulo, unknown calls) also require set unions and therefore cost $\mathcal{O}(|D|)$.
Let $n_+, n_\times, n_{\texttt{int}}, n_{\texttt{r}}, n_{\texttt{c}}, n_{D}$ denote the counts of addition, multiplication, \texttt{int}, \texttt{random}, conditional, and dependence-form operators.  
Since $n_+,n_\times,n_{\texttt{int}},n_{\texttt r},n_{\texttt c},n_D \le n$, the total is
$\mathcal{O}\!\big(
n_+\mathcal{N} + n_\times \mathcal{N}^2 + n_{\texttt{int}}\mathcal{N} + n_{\texttt{r}}
+ n_{\texttt{c}}|\mathcal{V}|\cdot|D| + n_{D}|D|
\big) 
= \mathcal{O}\!\left(n\cdot\mathcal{N}^2 + n\cdot|\mathcal{V}|\cdot|D|\right)$.

\end{proof}

\noindent\textbf{\Cref{corollary:efficiency}.} When $\max\big(\ \mathcal{N}, |\mathcal{V}|, |D|\ \big)$ is bounded by a constant value, the worst-case time complexity of the static analysis on the host program is $\mathcal{O} \big(n \big)$.
\begin{proof}
    It follows directly from \cref{theorem:complexity}.
\end{proof}

\noindent\textbf{\Cref{corollary:efficiency-workflow}.} Let $G$ be the number of gates in the input quantum circuit. The worst-case time complexity of the workflow in \cref{alg:host-aware-simpl} is $\mathcal{O}(n\cdot\mathcal{N}^2 +n\cdot|\mathcal{V}|\cdot|D| + G^2)$, and when  $\max\big(\ \mathcal{N}, |\mathcal{V}|, |D|\ \big)$ is bounded by a constant value, the complexity becomes $\mathcal{O}(n + G^2)$.
\begin{proof}
    It directly follows \cref{theorem:complexity} and Theorem $4$ in \cite{10.1007/978-3-031-97632-2_10}.
\end{proof}
\subsection{Proof details in \cref{subsec:method-gpu-acc}}\label{app:gpu-proofs}
During lowering, we maintain a representative map
\[
\kappa:\mathcal V\cup\{\texttt{return}\}\to
\widehat{\mathcal V}\cup\{\texttt{return}\},
\]
which maps each source-level variable to its current SSA representative in the
lowered program.  The initial map \(\kappa_0\) maps each source variable to
its incoming SSA version and maps \texttt{return} to \texttt{return}.  For the
whole program, we write
\[
(\widehat H,\kappa^\star)=\textsc{lowerBlock}(H,\kappa_0),
\]
where \(\kappa^\star\) is the final representative map produced by the
lowering.

Given a lowered abstract environment \(\widehat{\rho}\), its projection back
to source-level variables under a representative map \(\kappa\) is
\[
\pi_\kappa(\widehat{\rho})(\texttt{x})
=
\widehat{\rho}(\kappa(\texttt{x}))
\qquad
\text{for all }
\texttt{x}\in\mathcal V\cup\{\texttt{return}\}.
\]
This projection forgets lowering-introduced temporaries and keeps only the
abstract values of source-level representatives.

Let
$(\rho_0,\mathbf{Dom},\mathbf{B})
=
\textsc{InitAbsState}(H,\mathbf{M})$.
The initial lowered environment \(\widehat{\rho}_0\) is defined by
\[
\widehat{\rho}_0(\kappa_0(\texttt{x}))=\rho_0(\texttt{x})
\qquad
\text{for all }
\texttt{x}\in\mathcal V\cup\{\texttt{return}\}.
\]
The maps \(\mathbf{Dom}\) and \(\mathbf{B}\) are shared by the source and
lowered analyses.

The following proposition shows that, after projecting the lowered environment back to
source-level variables, the lowered analysis produces the same abstract
\texttt{return} value and the same control-dependency set
\(C_{\texttt{ctrl}}\), and therefore the same output of \textsc{find\_nc} (\cref{alg:find-nc}).

\begin{proposition}[Preservation under SSA lowering]
\label{prop:lowering-preserves-find-nc}
Let
$(\widehat H,\kappa^\star)=\textsc{lowerBlock}(H,\kappa_0)$
be the levelized SSA lowering of \(H\).  Suppose
$(\rho^\star,C^\star_{\texttt{ctrl}})
=
\textsc{AnalyzeBlock}(H,\rho_0,\varnothing,\mathbf{M})$
and
$(\widehat{\rho}^\star,\widehat{C}^\star_{\texttt{ctrl}})
=
\textsc{AnalyzeBlock}(\widehat H,\widehat{\rho}_0,\varnothing,\mathbf{M})$.
Then
$\pi_{\kappa^\star}(\widehat{\rho}^\star)=\rho^\star$
and
$\widehat{C}^\star_{\texttt{ctrl}}=C^\star_{\texttt{ctrl}}$.
\end{proposition}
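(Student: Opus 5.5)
We prove a stronger invariant by structural induction on source blocks, following the recursive structure of \textsc{lowerBlock}. For every block \(S_b\), every representative map \(\kappa\), and every pair of source and lowered abstract environments with \(\pi_\kappa(\widehat\rho)=\rho\) and equal control sets \(\widehat C_{\texttt{ctrl}}=C_{\texttt{ctrl}}\), we claim the following. Let \((\widehat S_b,\kappa')=\textsc{lowerBlock}(S_b,\kappa)\), and let the source analysis give \((\rho',C'_{\texttt{ctrl}})\) while the lowered analysis gives \((\widehat\rho',\widehat C'_{\texttt{ctrl}})\). Then \(\pi_{\kappa'}(\widehat\rho')=\rho'\) and \(\widehat C'_{\texttt{ctrl}}=C'_{\texttt{ctrl}}\). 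The proposition is the instance \(S_b=H\), \(\kappa=\kappa_0\), with \(\pi_{\kappa_0}(\widehat\rho_0)=\rho_0\) by definition of \(\widehat\rho_0\). The shared maps \(\mathbf{Dom},\mathbf{B}\) ensure that every guard of the \texttt{int} rules is evaluated identically on both sides.

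\textbf{Step 1: the levelized schedule.} First we separate scheduling from the three-address expansion. Every lowered instruction writes a fresh destination, and by the earliest-ready rule every operand is defined at a strictly lower level. Hence the abstract transfer of an instruction reads only values that are final when its level executes. Any topological order of the instruction dependence graph, including the original emission order and any intra-level permutation, therefore yields the same lowered environment. So it suffices to analyze the lowered program in emission order.

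\textbf{Step 2: expressions.} Flattening \(\texttt{x}:=e\) into single-operator instructions on temporaries computes exactly \([\![e]\!]^\sharp_\rho\) in the final temporary. We show this by induction on \(e\), using the fact that every rule in \cref{subsec:abs-eval} is compositional: the result depends only on the abstract values of immediate subexpressions. Three details need care.

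\emph{Fresh symbols.} Each \texttt{random} occurrence maps to one instruction, so the same fresh symbols are generated up to a consistent renaming. We fix the generator to name symbols by occurrence.

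\emph{\texttt{int} rules.} These inspect the polynomial of the argument, and that polynomial is available in the temporary.

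\emph{Normalization.} \(\mathrm{norm}\) must be applied per operation in both analyses. Since \(\mathrm{norm}\) is canonical, nested application equals application at the end.

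Assignment and return then update \(\kappa(\texttt{x})\) to the final temporary, preserving the projection equation. Temporaries are invisible to \(\pi_{\kappa'}\), and \(C_{\texttt{ctrl}}\) is untouched. Sequencing follows by composing the induction hypotheses.

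\textbf{Step 3 (main obstacle): conditionals and \(\phi_{\mathrm{if}}\).} The lowering places both branches into one straight-line SSA program with fresh names, so the lowered analysis of the branch instructions runs from the same incoming representatives. By the induction hypothesis applied with the same \(\kappa\), the branch results project to \(\rho_t\) and \(\rho_e\) under \(\kappa_t\) and \(\kappa_e\). We specify the transfer of \(u:=\phi_{\mathrm{if}}(c,v_t,v_e)\), one per variable in \(\mathcal V\cup\{\texttt{return}\}\), as \(\widehat\rho(v_t)\sqcup_{\mathrm A}\widehat\rho(v_e)\). This coincides with \(\sqcup_{\mathrm{env}}\) componentwise.

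The difficulty is \(C_{\texttt{ctrl}}\). The source rule adds \(\mathbf{Sym}([\![\texttt{cond}]\!]^\sharp)\cap\mathbf M\) if some variable differs, which is a global disjunction. In the lowered program each \(\phi_{\mathrm{if}}\) must contribute the condition's measurement support exactly when \(\neg\mathsf{Eq}_{\mathrm A}(\widehat\rho(v_t),\widehat\rho(v_e))\). The union over all \(\phi\)-instructions of one conditional then equals the source contribution. Since the contributed set is the same for each \(\phi\), the disjunction distributes over the union, and union is order independent, so it is also level independent.

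The \(\phi\)-set must range over all of \(\mathcal V\cup\{\texttt{return}\}\), even for variables unchanged in both branches (which then trivially agree), to match the quantifier in \(\text{Diff}\). For the bounded-lookahead form, a single \(\phi\) on the two evaluated return temporaries with the same guard reproduces the \(\neg\mathsf{Eq}_{\mathrm A}(R_t,R_e)\) test. The branch control sets \(C^t,C^e\) are unioned by the same argument.

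I expect the main work to be stating these \(\phi\) and \texttt{random} conventions precisely enough that the induction goes through. The arithmetic cases are routine.
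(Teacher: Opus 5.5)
Your proposal is correct and follows essentially the same route as the paper. Both use a representative-map simulation invariant proved by structural induction: assignments by induction on the flattened expression, sequencing by composition, and conditionals via $\phi_{\mathrm{if}}$ merges that reproduce $\sqcup_{\mathrm{env}}$ and the control update. Both also argue that fresh destinations, strictly earlier operands and order-independent unions make the level-wise schedule irrelevant. Your requirement of one $\phi_{\mathrm{if}}$ per variable of $\mathcal V\cup\{\texttt{return}\}$ is in fact sharper than the paper's ``reuse a common branch representative'' for live-out variables. Your parenthetical is wrong, though: $\mathsf{Eq}_{\mathrm A}$ is \texttt{false} on two identical dependence forms, so an unchanged dependence-form variable does trigger $\text{Diff}$, which is exactly why those merges cannot be omitted.
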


\begin{proof}
We prove a simulation invariant for the representative map maintained by the lowering.  
Consider any source block \(S_b\) and write
\[
(\widehat{S_b},\kappa_{\mathrm{out}})
=
\textsc{lowerBlock}(S_b,\kappa_{\mathrm{in}}).
\]
Assume the input states agree under the entry representative map:
\[
\pi_{\kappa_{\mathrm{in}}}(\widehat{\rho}_{\mathrm{in}})
=
\rho_{\mathrm{in}}
\qquad
\text{and}
\qquad
\widehat{C}^{\mathrm{in}}_{\texttt{ctrl}}
=
C^{\mathrm{in}}_{\texttt{ctrl}}.
\]
We show, by structural induction on \(S_b\), that the output states agree
under the exit representative map:
\[
\pi_{\kappa_{\mathrm{out}}}(\widehat{\rho}_{\mathrm{out}})
=
\rho_{\mathrm{out}}
\qquad
\text{and}
\qquad
\widehat{C}^{\mathrm{out}}_{\texttt{ctrl}}
=
C^{\mathrm{out}}_{\texttt{ctrl}}.
\]

For an assignment \(\texttt{x}:=e\), the lowering decomposes \(e\) into an
acyclic sequence of primitive SSA instructions.  Let \(v_e\) be the final
destination of this sequence.  The lowering updates the representative map by
setting
$\kappa_{\mathrm{out}}(\texttt{x})=v_e$
and leaving the representatives of all other source variables unchanged.  By
induction on the syntax of \(e\), the lowered instruction sequence computes at
\(v_e\) exactly the same abstract value as the source abstract evaluation of
\(e\).  Fresh temporaries introduced during this decomposition do not appear
in the projection.  Hence the projected lowered environment agrees with the
source environment after the assignment.  The case of
\(\texttt{return }e\) is identical, except that the final destination becomes
the representative of \texttt{return}.

Sequencing follows immediately by applying the induction hypothesis to the
first statement and then to the second.

For a structured conditional, the lowering recursively lowers both branches
from the same entry representative map, producing branch representative maps
for the then- and else-branches.  For each source-level live-out variable, the
lowering either reuses a common branch representative or introduces an
explicit merge
$u_{\texttt{x}}
:=
\phi_{\mathrm{if}}(c,v_{\texttt{x}}^t,v_{\texttt{x}}^e)$.
The abstract transfer of this merge is defined by the same branch-join rule as
the source analysis: the merged value is
$\widehat{\rho}_t(v_{\texttt{x}}^t)
\sqcup_{\mathrm A}
\widehat{\rho}_e(v_{\texttt{x}}^e)$,
and the same condition-dependent measurements are added to
\(C_{\texttt{ctrl}}\) exactly when the source conditional rule would add them.
By the induction hypothesis, the branch output states already agree after
projection.  Therefore the lowered merges reproduce the source environment
join \(\sqcup_{\mathrm{env}}\) and the same control-dependency update.  Hence
the invariant is preserved across conditionals.

It remains to justify the levelized GPU schedule.  By the levelization
property, every instruction in a level reads only incoming SSA values or
values defined in earlier levels, and destinations inside the same level are
distinct.  Thus same-level instructions do not interfere through reads or
writes.  The only shared update is the enlargement of
\(C_{\texttt{ctrl}}\), which is a set union and is independent of the
intra-level execution order.  Therefore the level-wise GPU execution computes
the same lowered abstract state as a sequential execution of \(\widehat H\).

Applying the invariant to the whole program gives
$\pi_{\kappa^\star}(\widehat{\rho}^\star)=\rho^\star$
and
$\widehat{C}^\star_{\texttt{ctrl}}=C^\star_{\texttt{ctrl}}$.
\qed
\end{proof}

\clearpage

\bibliographystyle{splncs04}

\begin{thebibliography}{10}
\providecommand{\url}[1]{\texttt{#1}}
\providecommand{\urlprefix}{URL }
\providecommand{\doi}[1]{https://doi.org/#1}

\bibitem{QiskitNatureFreezeCore}
{Qiskit Nature}: Freezecoretransformer.
  \url{https://qiskit-community.github.io/qiskit-nature/stubs/qiskit_nature.second_q.transformers.FreezeCoreTransformer.html}
  (2024)

\bibitem{Abedi2023quantumlazytraining}
Abedi, E., Beigi, S., Taghavi, L.: Quantum {L}azy {T}raining. {Quantum}
  \textbf{7}, ~989 (Apr 2023). \doi{10.22331/q-2023-04-27-989},
  \url{https://doi.org/10.22331/q-2023-04-27-989}

\bibitem{alfred2007compilers}
Aho, A.V., Lam, M.S., Sethi, R., Ullman, J.D.: Compilers: Principles,
  Techniques, and Tools. Pearson, 2 edn. (2007)

\bibitem{barthe2014ssa}
Barthe, G., Demange, D., Pichardie, D.: Formal verification of an {SSA}-based
  middle-end for {CompCert}. ACM Transactions on Programming Languages and
  Systems  \textbf{36}(1) (2014). \doi{10.1145/2579080}

\bibitem{bergholm2022pennylaneautomaticdifferentiationhybrid}
Bergholm, V., Izaac, J., Schuld, M., Gogolin, C., Ahmed, S., Ajith, V., Alam,
  M.S., Alonso-Linaje, G., AkashNarayanan, B., Asadi, A., Arrazola, J.M., Azad,
  U., Banning, S., Blank, C., Bromley, T.R., Cordier, B.A., Ceroni, J.,
  Delgado, A., Matteo, O.D., Dusko, A., Garg, T., Guala, D., Hayes, A., Hill,
  R., Ijaz, A., Isacsson, T., Ittah, D., Jahangiri, S., Jain, P., Jiang, E.,
  Khandelwal, A., Kottmann, K., Lang, R.A., Lee, C., Loke, T., Lowe, A.,
  McKiernan, K., Meyer, J.J., Montañez-Barrera, J.A., Moyard, R., Niu, Z.,
  O'Riordan, L.J., Oud, S., Panigrahi, A., Park, C.Y., Polatajko, D., Quesada,
  N., Roberts, C., Sá, N., Schoch, I., Shi, B., Shu, S., Sim, S., Singh, A.,
  Strandberg, I., Soni, J., Száva, A., Thabet, S., Vargas-Hernández, R.A.,
  Vincent, T., Vitucci, N., Weber, M., Wierichs, D., Wiersema, R., Willmann,
  M., Wong, V., Zhang, S., Killoran, N.: Pennylane: Automatic differentiation
  of hybrid quantum-classical computations (2022),
  \url{https://arxiv.org/abs/1811.04968}

\bibitem{10313887}
Bermot, E., Zoufal, C., Grossi, M., Schuhmacher, J., Tacchino, F., Vallecorsa,
  S., Tavernelli, I.: { Quantum Generative Adversarial Networks For Anomaly
  Detection In High Energy Physics }. In: 2023 IEEE International Conference on
  Quantum Computing and Engineering (QCE). pp. 331--341. IEEE Computer Society,
  Los Alamitos, CA, USA (Sep 2023). \doi{10.1109/QCE57702.2023.00045},
  \url{https://doi.ieeecomputersociety.org/10.1109/QCE57702.2023.00045}

\bibitem{bouland2020prospectschallengesquantumfinance}
Bouland, A., van Dam, W., Joorati, H., Kerenidis, I., Prakash, A.: Prospects
  and challenges of quantum finance (2020),
  \url{https://arxiv.org/abs/2011.06492}

\bibitem{Bravyi2017Tapering}
Bravyi, S., Gambetta, J.M., Mezzacapo, A., Temme, K.: Tapering off qubits to
  simulate fermionic hamiltonians. arXiv preprint arXiv:1701.08213  (2017).
  \doi{10.48550/arXiv.1701.08213}

\bibitem{broughton2021tensorflowquantumsoftwareframework}
Broughton, M., Verdon, G., McCourt, T., Martinez, A.J., Yoo, J.H., Isakov,
  S.V., Massey, P., Halavati, R., Niu, M.Y., Zlokapa, A., Peters, E., Lockwood,
  O., Skolik, A., Jerbi, S., Dunjko, V., Leib, M., Streif, M., Dollen, D.V.,
  Chen, H., Cao, S., Wiersema, R., Huang, H.Y., McClean, J.R., Babbush, R.,
  Boixo, S., Bacon, D., Ho, A.K., Neven, H., Mohseni, M.: Tensorflow quantum: A
  software framework for quantum machine learning (2021),
  \url{https://arxiv.org/abs/2003.02989}

\bibitem{Chen_2022_Partial_Eq_Check}
Chen, T.F., Jiang, J.H.R., Hsieh, M.H.: Partial equivalence checking of quantum
  circuits. In: 2022 IEEE International Conference on Quantum Computing and
  Engineering (QCE). p. 594–604. IEEE (Sep 2022).
  \doi{10.1109/qce53715.2022.00082},
  \url{http://dx.doi.org/10.1109/QCE53715.2022.00082}

\bibitem{10.1007/978-3-031-97632-2_10}
Chen, Y., Mendl, C.B., Seidl, H.: Dead gate elimination. In: Lees, M.H., Cai,
  W., Cheong, S.A., Su, Y., Abramson, D., Dongarra, J.J., Sloot, P.M.A. (eds.)
  Computational Science -- ICCS 2025. pp. 135--150. Springer Nature
  Switzerland, Cham (2025)

\bibitem{chen_QCP_2023}
Chen, Y., Stade, Y.: Quantum constant propagation. In: Hermenegildo, M.V.,
  Morales, J.F. (eds.) Static Analysis. pp. 164--189. Springer Nature
  Switzerland, Cham (2023),
  \url{https://link.springer.com/chapter/10.1007/978-3-031-44245-2_9}

\bibitem{Cortes1995SVM}
Cortes, C., Vapnik, V.: Support-vector networks. Machine Learning  \textbf{20},
   273--297 (1995). \doi{10.1007/BF00994018}

\bibitem{cytron1991ssa}
Cytron, R., Ferrante, J., Rosen, B.K., Wegman, M.N., Zadeck, F.K.: Efficiently
  computing static single assignment form and the control dependence graph. ACM
  Transactions on Programming Languages and Systems  \textbf{13}(4),  451--490
  (1991). \doi{10.1145/115372.115320}

\bibitem{van2019electronic}
van Dijk, J.P., Charbon, E., Sebastiano, F.: The electronic interface for
  quantum processors. Microprocessors and Microsystems  \textbf{66},  90--101
  (2019)

\bibitem{Egger2021CreditRisk}
Egger, D.J., Guti{\'e}rrez, R.G., Mestre, J.C., Woerner, S.: Credit risk
  analysis using quantum computers. IEEE Transactions on Computers
  \textbf{70}(12),  2136--2145 (2021). \doi{10.1109/TC.2020.3038063}

\bibitem{Farhi2014QAOA}
Farhi, E., Goldstone, J., Gutmann, S.: A quantum approximate optimization
  algorithm  (2014), \url{https://arxiv.org/abs/1411.4028}

\bibitem{Glover2019QUBO}
Glover, F., Kochenberger, G., Du, Y.: Quantum bridge analytics {I}: A tutorial
  on formulating and using {QUBO} models. 4OR  \textbf{17},  335--371 (2019).
  \doi{10.1007/s10288-019-00424-y}

\bibitem{Havlicek2019QuantumFeatures}
Havl{\'i}{\v c}ek, V., C{\'o}rcoles, A.D., Temme, K., Harrow, A.W., Kandala,
  A., Chow, J.M., Gambetta, J.M.: Supervised learning with quantum-enhanced
  feature spaces. Nature  \textbf{567},  209--212 (2019).
  \doi{10.1038/s41586-019-0980-2}

\bibitem{jojo2024quantumalgorithmstensorsvd}
Jojo, J., Khandelwal, A., Chandra, M.G.: Quantum algorithms for tensor-svd
  (2024), \url{https://arxiv.org/abs/2405.19485}

\bibitem{Kaul_2023}
Kaul, M., Küchler, A., Banse, C.: A uniform representation of classical and
  quantum source code for static code analysis. In: 2023 IEEE International
  Conference on Quantum Computing and Engineering (QCE). p. 1013–1019. IEEE
  (Sep 2023). \doi{10.1109/qce57702.2023.00115},
  \url{http://dx.doi.org/10.1109/QCE57702.2023.00115}

\bibitem{khammassi2022scalable}
Khammassi, N., Morris, R.W., Premaratne, S., Luthi, F., Borjans, F., Suzuki,
  S., Flory, R., Ibarra, L.P.O., Lampert, L., Matsuura, A.Y.: A scalable
  microarchitecture for efficient instruction-driven signal synthesis and
  coherent qubit control. arXiv preprint arXiv:2205.06851  (2022)

\bibitem{kildall1973unified}
Kildall, G.A.: A unified approach to global program optimization. In:
  Proceedings of the 1st annual ACM SIGACT-SIGPLAN symposium on Principles of
  programming languages. pp. 194--206 (1973)

\bibitem{kissinger2019pyzx}
Kissinger, A., van~de Wetering, J.: Pyzx: Large scale automated diagrammatic
  reasoning. arXiv preprint arXiv:1904.04735  (2019)

\bibitem{LaRose2022mitiqsoftware}
LaRose, R., Mari, A., Kaiser, S., Karalekas, P.J., Alves, A.A., Czarnik, P.,
  El~Mandouh, M., Gordon, M.H., Hindy, Y., Robertson, A., Thakre, P., Wahl, M.,
  Samuel, D., Mistri, R., Tremblay, M., Gardner, N., Stemen, N.T., Shammah, N.,
  Zeng, W.J.: Mitiq: {A} software package for error mitigation on noisy quantum
  computers. {Quantum}  \textbf{6}, ~774 (Aug 2022).
  \doi{10.22331/q-2022-08-11-774},
  \url{https://doi.org/10.22331/q-2022-08-11-774}

\bibitem{lemerre2023ssa}
Lemerre, M.: {SSA} translation is an abstract interpretation. Proceedings of
  the ACM on Programming Languages  \textbf{7}(POPL),  1895--1924 (2023).
  \doi{10.1145/3571258}

\bibitem{Li_2024}
Li, P., Liu, J., Gonzales, A., Saleem, Z.H., Zhou, H., Hovland, P.: Qutracer:
  Mitigating quantum gate and measurement errors by tracing subsets of qubits.
  In: 2024 ACM/IEEE 51st Annual International Symposium on Computer
  Architecture (ISCA). p. 103–117. IEEE (Jun 2024).
  \doi{10.1109/isca59077.2024.00018},
  \url{http://dx.doi.org/10.1109/ISCA59077.2024.00018}

\bibitem{maciejewski2020mitigation}
Maciejewski, F.B., Zimbor{\'a}s, Z., Oszmaniec, M.: Mitigation of readout noise
  in near-term quantum devices by classical post-processing based on detector
  tomography. Quantum  \textbf{4}, ~257 (2020)

\bibitem{10313603}
Matondo-Mvula, N., Elleithy, K.: Advances in quantum medical image analysis
  using machine learning: Current status and future directions. In: 2023 IEEE
  International Conference on Quantum Computing and Engineering (QCE). vol.~01,
  pp. 367--377 (2023). \doi{10.1109/QCE57702.2023.00049}

\bibitem{McArdle2020QChem}
McArdle, S., Endo, S., Aspuru-Guzik, A., Benjamin, S.C., Yuan, X.: Quantum
  computational chemistry. Reviews of Modern Physics  \textbf{92}(1),  015003
  (2020). \doi{10.1103/RevModPhys.92.015003}

\bibitem{8638598}
McCaskey, A., Dumitrescu, E., Liakh, D., Humble, T.: Hybrid programming for
  near-term quantum computing systems. In: 2018 IEEE International Conference
  on Rebooting Computing (ICRC). pp. 1--12 (2018).
  \doi{10.1109/ICRC.2018.8638598}

\bibitem{McClean2016VQE}
McClean, J.R., Romero, J., Babbush, R., Aspuru-Guzik, A.: The theory of
  variational hybrid quantum-classical algorithms. New Journal of Physics
  \textbf{18}(2),  023023 (2016). \doi{10.1088/1367-2630/18/2/023023}

\bibitem{spa}
M\o{}ller, A., Schwartzbach, M.I.: Static program analysis (October 2018),
  department of Computer Science, Aarhus University,
  \texttt{http://cs.au.dk/\~{}amoeller/spa/}

\bibitem{Montanaro2015MonteCarlo}
Montanaro, A.: Quantum speedup of monte carlo methods. Proceedings of the Royal
  Society A  \textbf{471}(2181),  20150301 (2015). \doi{10.1098/rspa.2015.0301}

\bibitem{Motta2021LowRank}
Motta, M., Ye, E., McClean, J.R., Li, Z., Minnich, A.J., Babbush, R., Chan,
  G.K.L.: Low rank representations for quantum simulation of electronic
  structure. npj Quantum Information  \textbf{7}, ~83 (2021).
  \doi{10.1038/s41534-021-00416-z}

\bibitem{nachman2020unfolding}
Nachman, B., Urbanek, M., de~Jong, W.A., Bauer, C.W.: Unfolding quantum
  computer readout noise. npj Quantum Information  \textbf{6}(1), ~84 (2020)

\bibitem{bookFlemming1999}
Nielson, F., Nielson, H., Hankin, C.: Principles of Program Analysis (01 1999).
  \doi{10.1007/978-3-662-03811-6}

\bibitem{cuda_best_practices}
{NVIDIA}: Cuda c++ best practices guide.
  \url{https://docs.nvidia.com/cuda/cuda-c-best-practices-guide/} (2026),
  accessed 2026-04-14

\bibitem{cuda_programming_guide}
{NVIDIA}: Cuda c++ programming guide.
  \url{https://docs.nvidia.com/cuda/cuda-programming-guide/} (2026), accessed
  2026-04-14

\bibitem{Paltenghi_2024}
Paltenghi, M., Pradel, M.: Analyzing quantum programs with lintq: A static
  analysis framework for qiskit. Proceedings of the ACM on Software Engineering
   \textbf{1}(FSE),  2144–2166 (Jul 2024). \doi{10.1145/3660802},
  \url{http://dx.doi.org/10.1145/3660802}

\bibitem{Peruzzo2014VQE}
Peruzzo, A., McClean, J., Shadbolt, P., Yung, M.H., Zhou, X.Q., Love, P.J.,
  Aspuru-Guzik, A., O'Brien, J.L.: A variational eigenvalue solver on a
  photonic quantum processor. Nature Communications  \textbf{5}, ~4213 (2014).
  \doi{10.1038/ncomms5213}

\bibitem{Preskill_2018}
Preskill, J.: Quantum computing in the nisq era and beyond. Quantum
  \textbf{2}, ~79 (Aug 2018). \doi{10.22331/q-2018-08-06-79},
  \url{http://dx.doi.org/10.22331/q-2018-08-06-79}

\bibitem{Péter_1954}
Péter, R.: H. g. rice. classes of recursively enumerable sets and their
  decision problems. transactions of the american mathematical society, vol. 74
  (1953) pp. 358–366. Journal of Symbolic Logic  \textbf{19}(2),  121–122
  (1954). \doi{10.2307/2268870}

\bibitem{Qiskit}
{Qiskit contributors}: Qiskit: An open-source framework for quantum computing
  (2023). \doi{10.5281/zenodo.2573505}

\bibitem{Quetschlich2023MQTBench}
Quetschlich, N., Burgholzer, L., Wille, R.: {MQT Bench}: Benchmarking software
  and design automation tools for quantum computing. Quantum  \textbf{7}, ~1062
  (2023). \doi{10.22331/q-2023-07-20-1062}

\bibitem{quetschlich2024equivalencecheckingclassicalcircuits}
Quetschlich, N., Forster, T., Osterwind, A., Helms, D., Wille, R.: Towards
  equivalence checking of classical circuits using quantum computing (2024),
  \url{https://arxiv.org/abs/2408.14539}

\bibitem{remme2025optimizationhybridquantumclassicalalgorithms}
Remme, L., Weinert, A., Waschk, A.: Optimization of hybrid quantum-classical
  algorithms (2025), \url{https://arxiv.org/abs/2505.12853}

\bibitem{Schuld2019FeatureHilbert}
Schuld, M., Killoran, N.: Quantum machine learning in feature hilbert spaces.
  Physical Review Letters  \textbf{122},  040504 (2019).
  \doi{10.1103/PhysRevLett.122.040504}

\bibitem{seidl2012compiler}
Seidl, H., Wilhelm, R., Hack, S.: Compiler Design: Analysis and Transformation.
  Springer (2012)

\bibitem{Sivarajah_tket_2021}
Sivarajah, S., Dilkes, S., Cowtan, A., Simmons, W., Edgington, A., Duncan, R.:
  t|ket⟩: a retargetable compiler for nisq devices. Quantum Science and
  Technology  \textbf{6}(1),  014003 (nov 2020).
  \doi{10.1088/2058-9565/ab8e92},
  \url{https://dx.doi.org/10.1088/2058-9565/ab8e92}

\bibitem{Stamatopoulos2020Option}
Stamatopoulos, N., Egger, D.J., Sun, Y., Zoufal, C., Iten, R., Shen, N.,
  Woerner, S.: Option pricing using quantum computers. Quantum  \textbf{4},
  ~291 (2020). \doi{10.22331/q-2020-07-06-291}

\bibitem{Steiger2018projectqopensource}
Steiger, D.S., H{\"{a}}ner, T., Troyer, M.: Project{Q}: an open source software
  framework for quantum computing. {Quantum}  \textbf{2}, ~49 (Jan 2018).
  \doi{10.22331/q-2018-01-31-49},
  \url{https://doi.org/10.22331/q-2018-01-31-49}

\bibitem{Tibshirani1996Lasso}
Tibshirani, R.: Regression shrinkage and selection via the lasso. Journal of
  the Royal Statistical Society: Series B  \textbf{58}(1),  267--288 (1996).
  \doi{10.1111/j.2517-6161.1996.tb02080.x}

\bibitem{Verteletskyi2020Measurement}
Verteletskyi, V., Yen, T.C., Izmaylov, A.F.: Measurement optimization in the
  variational quantum eigensolver using a minimum clique cover. The Journal of
  Chemical Physics  \textbf{152}(12),  124114 (2020). \doi{10.1063/1.5141458}

\bibitem{Woerner2019Risk}
Woerner, S., Egger, D.J.: Quantum risk analysis. npj Quantum Information
  \textbf{5}, ~15 (2019). \doi{10.1038/s41534-019-0130-6}

\bibitem{10.1145/2499370.2462164}
Zhao, J., Nagarakatte, S., Martin, M.M., Zdancewic, S.: Formal verification of
  ssa-based optimizations for llvm. SIGPLAN Not.  \textbf{48}(6),  175–186
  (Jun 2013). \doi{10.1145/2499370.2462164},
  \url{https://doi.org/10.1145/2499370.2462164}

\bibitem{zhao2023qcheckerdetectingbugsquantum}
Zhao, P., Wu, X., Li, Z., Zhao, J.: Qchecker: Detecting bugs in quantum
  programs via static analysis (2023), \url{https://arxiv.org/abs/2304.04387}

\end{thebibliography}

\end{document}